\documentclass[10pt,twocolumn]{article}
\usepackage[letterpaper,textwidth=6.75in,textheight=9in,columnsep=0.25in,centering]{geometry}
\usepackage{titlesec}
\titleformat{\section}{\large\bfseries\raggedright}{\thesection}{0.6em}{}
\titlespacing*{\section}{0pt}{2.0ex plus 0.5ex minus 0.2ex}{1.5ex plus 0.3ex minus 0.2ex}
\titleformat{\subsection}{\normalsize\bfseries\raggedright}{\thesubsection}{0.6em}{}
\titlespacing*{\subsection}{0pt}{1.8ex plus 0.5ex minus 0.2ex}{0.8ex plus 0.2ex}
\titleformat{\subsubsection}{\normalsize\bfseries\raggedright}{\thesubsubsection}{0.6em}{}
\titlespacing*{\subsubsection}{0pt}{1.5ex plus 0.5ex minus 0.2ex}{0.5ex plus 0.2ex}
\usepackage[round]{natbib}
\usepackage[utf8]{inputenc}
\usepackage[T1]{fontenc}
\usepackage{amsmath,amssymb,amsthm}
\renewcommand{\eqref}[1]{Eq.~\textup{(\ref{#1})}}
\usepackage{mathtools}
\usepackage{bm}
\usepackage{graphicx}
\usepackage{booktabs}
\usepackage{tabularx}
\usepackage{multirow}
\usepackage{xcolor}
\usepackage{placeins}
\usepackage{float}
\usepackage{microtype}
\PassOptionsToPackage{hyphens}{url}
\usepackage[colorlinks=true,linkcolor=blue!50!black,citecolor=blue!50!black,urlcolor=blue!50!black]{hyperref}
\hypersetup{pdftitle={Certified Corruption Budgets: Anytime-Valid Leaderboard Claims under Adaptive Rigging},pdfauthor={Hamed Khosravi, Xiaoming Huo}}
\usepackage{aliascnt}
\usepackage[capitalize,noabbrev]{cleveref}
\usepackage{etoc}
\usepackage{algorithm}
\usepackage{algorithmic}

\newcommand{\E}{\mathbb{E}}
\renewcommand{\P}{\mathbb{P}}

\newcommand{\1}{\mathbf{1}}
\newcommand{\cF}{\mathcal{F}}
\newcommand{\cG}{\mathcal{G}}
\newcommand{\cH}{\mathcal{H}}
\newcommand{\Bhat}{\widehat{B}}
\newcommand{\Mp}{M_{+}}
\newcommand{\mm}{m_{-}}
\newcommand{\lb}{\bar\lambda}
\newcommand{\cins}{c_{\mathrm{ins}}}
\newcommand{\crep}{c_{\mathrm{rep}}}
\newcommand{\Bern}{\mathrm{Bern}}
\newcommand{\Bin}{\mathrm{Bin}}
\newcommand{\KL}{\mathrm{KL}}
\newcommand{\classP}{\mathsf{P}}
\newcommand{\classV}{\mathsf{V}}
\newcommand{\classO}{\mathsf{O}}

\theoremstyle{plain}
\newtheorem{theorem}{Theorem}[section]
\newaliascnt{proposition}{theorem}
\newtheorem{proposition}[proposition]{Proposition}
\aliascntresetthe{proposition}
\newaliascnt{lemma}{theorem}
\newtheorem{lemma}[lemma]{Lemma}
\aliascntresetthe{lemma}
\newaliascnt{corollary}{theorem}
\newtheorem{corollary}[corollary]{Corollary}
\aliascntresetthe{corollary}
\crefname{theorem}{Theorem}{Theorems}
\crefname{proposition}{Proposition}{Propositions}
\crefname{lemma}{Lemma}{Lemmas}
\crefname{corollary}{Corollary}{Corollaries}
\crefname{definition}{Definition}{Definitions}
\theoremstyle{definition}
\newtheorem{definition}{Definition}[section]

\theoremstyle{remark}

\title{\textbf{Certified Corruption Budgets:\\ Anytime-Valid Leaderboard Claims under Adaptive Rigging}}
\author{
\begin{tabular}{c@{\hspace{3em}}c}
Hamed Khosravi & Xiaoming Huo\\
\texttt{hkhosravi7@gatech.edu} & \texttt{huo@gatech.edu}
\end{tabular}\\[8pt]
H.~Milton Stewart School of Industrial and Systems Engineering\\
Georgia Institute of Technology, Atlanta, GA, USA
}
\date{}

\begin{document}

\maketitle

\begin{abstract}
Public leaderboards for AI models are read continuously, and attackers can see every published standing. Vote rigging, selective disclosure of private variants, and benchmark contamination can each move a ranking. Existing guarantees assume genuine records or bound the corruption per step, which an attacker who corrupts in bursts evades. We introduce the \emph{certified corruption budget}, a tolerance $\Bhat_t$ computed after $t$ records and published with each pairwise claim. With probability at least $1-\alpha$, simultaneously at all times, the claim is correct or more than $\Bhat_t$ records were corrupted. It holds against attackers who watch every certificate, with no bound on their budget. Forged records and records altered once seen require different certificates: the certificate for forgeries fails, with probability approaching one, against an attacker who flips votes it has seen, while one that charges roughly twice as much per record remains valid, with constant bets even against attackers who see the future, and no smaller charge is valid at every level. The certified budget grows nearly as fast as any valid method allows: with a win fraction $\frac12+\delta$, each new record adds close to $2\delta$ to the number of forged records the claim can withstand ($\delta$ flipped). Publishing the best of $V$ private variants costs only an amount growing like $\log V$. In replays on $1.8$ million Chatbot Arena votes, a few hundred rigged votes make standard confidence intervals certify false orderings, while ours stays valid. On real votes, our certificate shows that clearly separated models withstand about $2{,}000$ forged votes.
\end{abstract}

\etocdepthtag.toc{mtmain}
\section{Introduction}\label{sec:intro}

Public leaderboards decide which artificial intelligence (AI) models are trusted, bought and built on. They are read while being computed. A \emph{leaderboard} ranks models from performance records. On Chatbot Arena, each record is a \emph{battle}: a user prompts two anonymous models and votes for the better answer or a tie. Its public file has nearly two million votes \citep{zheng2023judging,chiang2024arena,chatbotarena_data}. On static benchmarks such as HELM, the Holistic Evaluation of Language Models, a record compares two models on one test item \citep{liang2023helm}.

Leaderboards are attacked, and the attackers see what is published. In simulations of Chatbot Arena, $381$ rigged votes (votes cast to favor a chosen model) moved a low-ranked model up four places, and anonymous models can be identified to aim such votes \citep{huang2025voting}. Some developers privately test many variants of a model and publish only the best. One developer tested 27 \citep{singh2025illusion}. On static benchmarks, training on the test items (\emph{contamination}) inflates a model's score \citep{sainz2023contamination,xu2024contamination}. In each case the attacker can wait, see which comparisons are close and strike there.

The usual statistical guarantees do not address such an attacker. Confidence intervals for ratings and ranks assume genuine records analyzed once, at a sample size fixed in advance \citep{chiang2024arena,mogstad2024}. Methods valid under analysis after every record also assume genuine records \citep{ramdas2023gtsa}, and their robust versions allow a fixed fraction of corruption at every step \citep{saha2024,wang2023huber}, which a \emph{burst} (many corrupted records close together) violates. A leaderboard's reader needs a different statement: not ``model $i$ leads model $j$'' but ``model $i$ leads model $j$ unless more than a stated number of records were corrupted''.

We attach to each leaderboard claim a certified corruption budget, which we call its \emph{corruption tolerance}: a number of corrupted records that the claim is guaranteed to withstand. With each claim about two models $i$ and $j$, the platform publishes this number, $\Bhat_t$, computed from the first $t$ records on the pair. Except with a small probability $\alpha$ chosen by the platform (say $5\%$), and no matter when the claim is read, either the claim holds or more than $\Bhat_t$ of those records were forged or altered. We certify two claims. The \emph{lead claim}, that one model leads another on average so far, is the usual interpretation of a leaderboard standing. The \emph{edge claim} is that one model has had an advantage over another at some point. The two coincide when win chances are stable. The guarantee holds even if the attacker watches every published tolerance, and assumes no limit on the number of corrupted records.

We compute the tolerance by betting on the claim. It is the evidence won beyond what a test on genuine records needs, divided by a \emph{cost} per record. This cost is the most evidence one corrupted record can fake. The construction has four properties. \emph{First}, the tolerance counts corrupted records, the unit in which attacks are measured, so it compares directly with known attacks. \emph{Second}, it separates forging a vote from changing a seen one, as a colluding annotator or a contaminated test item does. A certificate for forgeries fails against altered votes. Roughly doubling the cost per record restores the guarantee. With constant bets it holds even against an attacker knowing the future. \emph{Third}, for independent and identically distributed (i.i.d.)\ records without ties, no valid method certifies substantially more corrupted records from the same data. \emph{Fourth}, it prices publishing the best of many private variants and updates cheaply per vote.

\textbf{Contributions.} (i) We introduce corruption tolerance, the number of corrupted records a claim withstands, and three attacker classes. (ii) We show that no existing method covers this setting and give a procedure that needs no bound on the number of corrupted records. (iii) We prove these guarantees for both claims and extend them to whole leaderboards. (iv) We confirm the main results by simulation and on real data. On $1.8$ million Chatbot Arena votes, well-separated models are certified against about $2{,}000$ forged votes, but GPT-4o (2024-05-13) is not certified above Claude-3.5-Sonnet when all pairs are claimed at once. On five frontier benchmarks, no top-two ordering is certified when all pairs are claimed at once.

\section{Problem Formulation}\label{sec:formulation}

\subsection{Records, claims and the null hypothesis}

The data for a claim about models $i$ and $j$ are a stream of records. We extend the formulation to whole leaderboards in \cref{sec:simultaneous}. A \emph{record} is one comparison of $i$ with $j$ together with its outcome: $1$ if $i$ wins, $0$ if $j$ wins and $\tfrac12$ for a tie. Records are indexed by $t=1,2,\dots$ in the order the platform receives them, and it may choose which pair to compare next using everything it has seen. After each record it publishes a certificate for the claim (\cref{sec:tolerance}).

We model the stream as random variables on one probability space. The randomness comes from the platform's normal process (users, prompts, model answers and votes) and from an \emph{attacker}, who may randomize. The attacker corrupts records in two ways. It may \emph{insert} a forged record that the normal process never produced, such as a vote from a fake account. It may also \emph{replace} a record by overwriting the value that the normal process produced, as a colluding annotator or a contaminated test item does. A replacement of a loss by a win is a \emph{flip}. A record is \emph{corrupted} if it is inserted or replaced, and \emph{uncorrupted} otherwise.

Four random variables describe the $t$th record, and the platform observes only the first. The \emph{observed outcome} $W_t\in[0,1]$ is the outcome that the platform receives. The indicator $I_t$ equals $1$ if the record is inserted, and $C_t$ equals $1$ if it is corrupted. The \emph{uncorrupted value} $X_t\in[0,1]$ is the value that the normal process produced. It exists when $I_t=0$, and we set $X_t=\frac12$ when $I_t=1$ (formal model in \cref{app:model}). Thus $W_t=X_t$ for an uncorrupted record, while the attacker chooses $W_t$ for a corrupted record, and a replaced record hides its uncorrupted value. The number of corrupted records among the first $t$ is the random variable $N_t=\sum_{s=1}^{t}C_s$. The normal process and the attacker's strategy together determine the joint distribution of these variables. We write $\P$ for probability and $\E$ for expectation under this distribution. The time index $t$ and the error level $\alpha$ are fixed numbers, not random variables.

The \emph{edge claim}, that $i$ has ever had an edge over $j$, is false when, at every record with an uncorrupted value, $i$ is expected to do no better than $j$ given the past. Let the $\sigma$-field $\cG_{t-1}$ collect everything that happened before the $t$th record's value is produced. It includes the variables of all earlier records on any pair, all published certificates, the attacker's private randomization and $I_t$. The null hypothesis is
\begin{equation}\label{eq:null}
\begin{aligned}
&H_0^{ij}:\quad \text{for every } t\ge1,\\
&\E[X_t\mid\cG_{t-1}]\le\tfrac12 \ \text{ on the event } \{I_t=0\}.
\end{aligned}
\end{equation}
The conditional expectation is random, and the inequality must hold with probability one at each non-inserted record. The hypothesis $H_0^{ij}$ is a property of the joint distribution, not a random event. It is one statement about the whole stream and has no index $t$. With ties coded as $\tfrac12$, it says that at every such record, given the past, $i$ is no more likely to win than to lose.

The null hypothesis assumes neither independence nor a constant win probability, so the models, users and prompts may change over time. The edge claim is \emph{correct} when $H_0^{ij}$ fails, that is, when the event $\{I_t=0,\ \E[X_t\mid\cG_{t-1}]>\tfrac12\}$ has positive probability for at least one $t$. A correct claim thus says that $i$ was favored over $j$ at one record or more (discussed in \cref{app:testing}). If $\E[X_t\mid\cG_{t-1}]$ equals the same constant $\mu$ at every non-inserted record (a stable mean outcome), the claim is correct exactly when $\mu>\tfrac12$. This holds for i.i.d.\ uncorrupted values and under the \emph{Bradley--Terry} model \citep{bradley1952,chiang2024arena} that Arena fits (win probabilities are logistic in score differences). This stable case is the usual reading of a leaderboard claim.

The \emph{lead claim} concerns the records so far. Write $\mu_s=\E[X_s\mid\cG_{s-1}]$ for the conditional mean of an uncorrupted value and let
\begin{equation}\label{eq:delta}
\Delta_t=\sum_{s\le t}(1-I_s)\big(\mu_s-\tfrac12\big)
\end{equation}
be the \emph{uncorrupted margin} accumulated over the first $t$ records. The lead claim ``$i$ leads $j$'' is correct at time $t$ when $\Delta_t>0$, that is, when the normal process favored $i$ on average over these records. It needs no stable mean, is what a standing usually asserts, and coincides with the edge claim in the stable case. We certify both.

\subsection{Three classes of attackers}\label{sec:attackers}

The attacker classes differ in what the attacker knows when it acts. Every attacker is \emph{adaptive}: it observes every published certificate and may choose its next action from what it has seen. No limit is placed on how many records it corrupts.
\begin{itemize}
\item[$\classP$] \emph{Predictable attackers} decide whether to corrupt the $t$th record, and what its value is, before its uncorrupted value is produced. Every insertion is of this kind, and so is a replacement that discards the uncorrupted value unseen.
\item[$\classV$] \emph{Value-dependent attackers} may in addition see the uncorrupted value $X_t$ and then decide whether to replace it, before the next record arrives.
\item[$\classO$] \emph{Offline attackers} see the entire sequence of uncorrupted values, including future ones, before acting. Since their actions then depend on the future, the null for $\classO$ is stated for the uncorrupted values alone. These form a sequence, indexed by $u=1,2,\dots$ in order of arrival, whose law does not depend on the attacker and satisfies $\E[X_u\mid X_1,\dots,X_{u-1}]\le\frac12$.
\end{itemize}
The classes are nested in what the attacker knows, and \cref{sec:theory} gives the price of guarding against each.

\subsection{Corruption tolerance}\label{sec:tolerance}

\begin{definition}[Corruption tolerance]\label{def:budget}
Let $(\Bhat_t)_{t\ge1}$ take values in $\{-1,0,1,2,\dots\}$, with each $\Bhat_t$ a function of the first $t$ observed outcomes $W_1,\dots,W_t$. For $\alpha\in(0,1)$, the sequence is an \emph{$\alpha$-valid corruption tolerance for the edge claim} against an attacker class if, for every normal process and attacker in the class under which $H_0^{ij}$ holds,
\begin{equation}\label{eq:validity}
\P\big(\exists\, t\ge1:\ \Bhat_t\ge N_t\big)\le\alpha .
\end{equation}
It is \emph{$\alpha$-valid for the lead claim} against the class if, for every normal process and attacker in it,
\begin{equation}\label{eq:validity-avg}
\P\big(\exists\, t\ge1:\ \Delta_t\le0\ \text{and}\ \Bhat_t\ge N_t\big)\le\alpha .
\end{equation}
\end{definition}

For $\classO$ the condition that $H_0^{ij}$ holds is replaced by the condition that the uncorrupted values satisfy $\E[X_u\mid X_1,\dots,X_{u-1}]\le\frac12$, the null for offline attackers of \cref{sec:attackers}. Under $H_0^{ij}$ the uncorrupted margin $\Delta_t$ is at most $0$ at every $t$. Hence \eqref{eq:validity-avg} implies \eqref{eq:validity}, and \eqref{eq:validity-avg} makes no hypothesis on the law. \Cref{def:budget} controls the type I error of a family of sequential tests. At time $t$ the platform publishes the \emph{certificate} ``$i$ has ever had an edge over $j$, unless more than $\Bhat_t$ of the first $t$ records were corrupted''. The number $\Bhat_t$ is the claim's corruption tolerance at that time. Nothing is claimed when $\Bhat_t=-1$. For each integer $b\ge0$ the event $\Bhat_t\ge b$ rejects the hypothesis ``$H_0^{ij}$ holds and $N_t\le b$''. When $H_0^{ij}$ holds, a true hypothesis of this family is rejected exactly when $\Bhat_t\ge N_t$. So \eqref{eq:validity} bounds by $\alpha$ the probability of a \emph{false statement}. A false statement is a type I error at any time and for any $b$. When $H_0^{ij}$ fails, no type I error is possible (details in \cref{app:testing}). The platform does not observe $N_t$. It therefore publishes the conditional statement and not a rejection of $H_0^{ij}$.

A tolerance, or the method that computes it, is \emph{valid} against an attacker class if it is $\alpha$-valid for the edge claim against that class. Validity for the lead claim is always named. The aim is the largest valid tolerance. The claim is \emph{certified against} $b$ corrupted records at time $t$ if $\Bhat_t\ge b$, and \emph{not certified} if $\Bhat_t=-1$. The attacker's \emph{budget} is the number of records it may corrupt. The claim's tolerance is $\Bhat_t$. Raising $\Bhat_t$ only makes \eqref{eq:validity} harder to satisfy, and the trivial tolerance $-1$ is valid and claims nothing. For i.i.d.\ records without ties, our tolerance grows at nearly the fastest rate that any valid tolerance can reach. \Cref{sec:rates} proves this.

The guarantee is \emph{anytime-valid}. It holds at all times simultaneously, so anyone may act on the statement at a time of their choosing, even one chosen by looking at the data. With $b=0$ it is an ordinary anytime-valid test of $H_0^{ij}$ on uncorrupted records.

\section{Related Work}\label{sec:related}

Anytime-valid methods test ranking claims under genuine data, and robust methods under a prespecified corruption level. Neither infers from the data how much adaptive corruption a claim withstands, which corruption tolerance supplies.

\textbf{Anytime-valid inference assumes genuine data.} A nonnegative process whose expectation stays at most one under the null, at every stopping time, is an \emph{e-process}. By Ville's inequality it ever exceeds $1/\alpha$ with probability at most $\alpha$ \citep{ville1939,ramdas2023gtsa}. \emph{Testing by betting} builds e-processes as the wealth of a gambler who starts with one unit and bets against the null at fair odds \citep{shafer2021,vovk2021evalues,grunwald2024safe,ramdas2023gtsa}. Confidence sequences \citep[intervals valid uniformly over time;][]{darling1967,howard2021} can be built this way for bounded means \citep{waudbysmith2024}, and so can two-sample tests \citep{shekhar2023}. They suit leaderboards that are read continuously \citep{khosravi2026rank,gao2026bbedge,gu2026online}, but one forged record can multiply the wealth by a constant, and a burst can push it past any threshold.

\textbf{Robust sequential tests bound corruption per step.} \citet{huber1965} made the sequential probability ratio test \citep{wald1945} robust by clipping the likelihood ratio for data with independent outliers. \citet{saha2024} give anytime-valid tests in which, at every step, an adaptive adversary picks any conditional law within total-variation distance $\varepsilon$ of the null, with $\varepsilon$ fixed in advance. \citet{wang2023huber} build confidence sequences under the same bound, stated for i.i.d.\ data for simplicity. They regard replacement by an adversary that has seen all observations as inherently fixed-time. Without a known contamination bound, nontrivial anytime-valid confidence intervals do not exist \citep{agrawal2024crimed}. Related work covers online change points \citep{tang2026} and, at a fixed sample size, unknown outlier fractions \citep{luo2024adaptive}. No such sequential test counts corrupted records, and none can describe a burst.

\begin{proposition}[Per-step bounds cannot describe bursts]\label{prop:perstep}
Suppose at every step the conditional law of a binary outcome may be any law within total-variation distance $\varepsilon$ of some Bernoulli law $\Bern(p)$ (outcome $1$ with probability $p$) with $p\le\frac12$. A forged win (the point mass at $1$) is allowed only if $\varepsilon\ge\frac12$. Then every Bernoulli law is allowed under the null, so every valid test has power at most $\alpha$ against every alternative.
\end{proposition}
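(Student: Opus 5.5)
The proof has three steps. First I compute the total-variation distances explicitly, then I deduce which laws the per-step model allows, and finally I turn that into a power bound using the fact that a valid test controls error under every allowed null law.

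\textbf{Step 1 (forged win).} For Bernoulli laws, $\mathrm{TV}(\Bern(q),\Bern(p))=|q-p|$. A point mass at $1$ is $\Bern(1)$, so its distance to $\Bern(p)$ is $1-p$. This is at least $\frac12$ whenever $p\le\frac12$, with equality at $p=\frac12$. The point mass at $1$ is therefore within distance $\varepsilon$ of some $\Bern(p)$ with $p\le\frac12$ if and only if $\varepsilon\ge\frac12$, which is the first assertion.

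\textbf{Step 2 (every Bernoulli law is allowed).} Suppose $\varepsilon\ge\frac12$ and take any $q\in[0,1]$. If $q\le\frac12$, choose $p=q$; the distance is $0$. If $q>\frac12$, choose $p=\frac12$; the distance is $q-\frac12\le\frac12\le\varepsilon$. In both cases $\Bern(q)$ is allowed. The same holds conditionally at each step, so every adaptively chosen sequence of conditional Bernoulli laws is an admissible null process. In particular, every law of a binary stream is admissible, since any such law factors into conditional Bernoulli laws given the past.

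\textbf{Step 3 (power).} Fix any alternative, meaning any law $Q$ of the binary stream. By Step~2, $Q$ is itself one of the laws the null allows. A valid test, at a fixed time or sequentially (anytime-valid), must reject with probability at most $\alpha$ under every allowed law, and so in particular under $Q$. Hence its rejection probability under $Q$, which is its power against $Q$, is at most $\alpha$.

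The only delicate point is in Step~3: the null model and the alternatives must live on the same stream. Stated loosely, the argument would make it sound as if the adversary "chooses" per step while the alternative is a fixed joint law. I handle this by noting that the adversary may pick the conditional law $Q(\cdot\mid\text{past})$ at each step, which it can do because it is adaptive. Its induced joint law is then exactly $Q$, and the bound follows.
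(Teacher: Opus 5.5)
Your proof is correct and follows the paper's argument: the distance from the point mass at $1$ to $\Bern(p)$ is $1-p\ge\frac12$, every conditional Bernoulli law lies within $\frac12$ of an allowed center, so every law of a binary stream satisfies the null and a level-$\alpha$ test has power at most $\alpha$. The only difference is cosmetic: the paper uses the single center $\Bern(\frac12)$ for every $r$, whereas you split into the cases $q\le\frac12$ and $q>\frac12$.
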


Leaderboard inference assumes genuine votes, and attack studies and certified defenses give no sequential certificate (\cref{app:fields}).

\section{Method: Deflated Capped Betting}\label{sec:method}

\subsection{Capped betting wealth}

The tolerance is built from the wealth of a gambler who bets on $i$ at every record, with a cap on the bet. Fix a \emph{cap} $\lb\in(0,2)$ and, for each record, a \emph{bet} $\lambda_t\in[0,\lb]$ chosen from the first $t-1$ observed records. A \emph{constant bet} is the same at every record, $\lambda_t=\lb$ for all $t$, and we then write $\lambda$ for $\lb$. The wealth after $t$ records is
\begin{equation}\label{eq:wealth}
K_t=\prod_{s=1}^{t}\big(1+\lambda_s(W_s-\tfrac12)\big),\qquad K_0=1 .
\end{equation}
Each factor lies between $\mm=1-\lb/2$, the smallest value one record can contribute, and $\Mp=1+\lb/2$, the largest. A tie gives the factor $1$. If all records are uncorrupted and $H_0^{ij}$ holds, $K_t$ is a nonnegative supermartingale, so ${\P(\exists\,t:\ K_t\ge1/\alpha)}\le\alpha$ by Ville's inequality. If $i$ wins more often than it loses and the bet is small enough, $K_t$ grows exponentially.

The cap is what makes robustness possible. Whatever an attacker does to one record, it changes that record's factor by a bounded ratio. A forged record contributes a factor of at most $\Mp$. A replaced record changes its factor from at least $\mm$ to at most $\Mp$, a ratio of at most $\Mp/\mm$. We call
\begin{equation}\label{eq:costs}
\begin{aligned}
\cins(\lb)&=\log\Mp=\log\!\big(1+\tfrac{\lb}{2}\big),\\
\crep(\lb)&=\log\frac{\Mp}{\mm}=\log\frac{2+\lb}{2-\lb}
\end{aligned}
\end{equation}
the \emph{insertion cost} and the \emph{replacement cost} of a record, measured in \emph{nats} (units of natural logarithm of wealth). For small caps, $\crep\approx 2\cins\approx\lb$.

\textbf{The corruption tolerance} is the number of corrupted records that the evidence could have absorbed. With cost $c\in\{\cins(\lb),\crep(\lb)\}$ we report
\begin{equation}\label{eq:bhat-single}
\Bhat_t=\max\Bigl\{-1,\ \Bigl\lfloor \frac{\log(\alpha K_t)}{c}\Bigr\rfloor\Bigr\}.
\end{equation}
In words, the evidence $\log K_t$ exceeds the threshold $\log(1/\alpha)$ of a test on uncorrupted data by $\log(\alpha K_t)$ nats, and each corrupted record supplies at most $c$ of them. The insertion cost is valid against predictable attackers ($\classP$), and the replacement cost against all classes ($\classO$ with constant bets; \cref{thm:validity}).

\subsection{A grid of bets}

No single bet works well everywhere, so we run several bets and charge each its own cost before combining them. The best bet depends on the unknown \emph{margin}, the amount by which $i$'s mean uncorrupted outcome exceeds $\tfrac12$. Smaller bets reach larger tolerances in the long run but need more records to start (\cref{sec:rates}). We therefore run $m$ gamblers in parallel, indexed by $k=1,\dots,m$. Gambler $k$ has its own cap $\lb_k\in(0,2)$, decreasing from $\lb_1$ to $\lb_m$, and makes the constant bet $\lb_k$ at every record (\emph{bet $k$}). These caps form a \emph{grid}, a finite set of bet sizes spread so that one of them is close to the best bet for a wide range of margins. The gamblers have weights $w_k>0$ that satisfy $\sum_k w_k\le1$. In all experiments $\lb_k=2^{-(k-1)/2}$ for $k=1,\dots,14$, from $1$ down to $0.011$, and $w_k=1/14$. Let $K^{(k)}_t$ be the wealth of \eqref{eq:wealth} of gambler $k$, and write $c_k=c(\lb_k)$ for the chosen cost of \eqref{eq:costs} at the cap $\lb_k$. A wealth is \emph{deflated} by $b$ records when it is divided by $e^{b\,c_k}$, the most that $b$ corrupted records could have multiplied it by ($e$ is the base of the natural logarithm). The \emph{deflated average} is
\begin{equation}\label{eq:deflated}
\Phi_t(b)=\sum_{k=1}^m w_k\,K^{(k)}_t\,e^{-b\,c_k},\qquad b\ge0.
\end{equation}
We report
\begin{equation}\label{eq:bhat-grid}
\Bhat_t=\max\big\{b\in\{0,1,2,\dots\}:\ \Phi_t(b)\ge1/\alpha\big\},
\end{equation}
and $\Bhat_t=-1$ if $\Phi_t(0)<1/\alpha$. Here $b$ is a candidate value of $N_t$. With one bet ($m=1$, $w_1=1$) this is \eqref{eq:bhat-single}. Our tolerance is never smaller than that of the best single bet run at level $\alpha w_k$. As $\Phi_t$ decreases in $b$, bisection finds $\Bhat_t$. The procedure (\cref{alg:ccb}, \cref{app:algorithm}) stores $m$ numbers per pair and needs $O(m\log t)$ operations per record.

\subsection{An exponential wealth for the lead claim}\label{sec:expwealth}

The lead claim needs a wealth with a logarithm linear in the outcome: for a bet size $\gamma\in(0,4)$ let
\begin{equation}\label{eq:zwealth}
Z^{(\gamma)}_t=\exp\Big(\gamma\sum_{s\le t}\big(W_s-\tfrac12\big)-\frac{\gamma^2t}{8}\Big),
\end{equation}
with the costs $\cins^{Z}(\gamma)=\gamma/2-\gamma^2/8$ and $\crep^{Z}(\gamma)=\gamma$, both positive. A forged win raises $\log Z^{(\gamma)}_t$ by exactly $\gamma/2-\gamma^2/8$ and a replaced record by at most $\gamma$, hence the costs, and $\gamma^2t/8$ is the price of betting $\gamma$ on outcomes in $[0,1]$ \citep{hoeffding1963}. The \emph{lead tolerance} $\Bhat^Z_t$ is \eqref{eq:bhat-grid} with $Z^{(\gamma_k)}_t$ for $K^{(k)}_t$, $\gamma_k$ for the caps $\lb_k$, and $c^Z\in\{\cins^Z,\crep^Z\}$ for the cost $c$.

\section{Theoretical Guarantees}\label{sec:theory}

\subsection{Validity against adaptive attackers}

\begin{theorem}[Validity]\label{thm:validity}
Let $\Bhat_t$ be the tolerance of \eqref{eq:bhat-grid} with constant bets $\lb_1,\dots,\lb_m\in(0,2)$ and weights $w_k>0$, $\sum_k w_k\le1$.
(a) With the insertion cost $c=\cins$, $\Bhat_t$ is valid at level $\alpha$ against $\classP$. (b) With the replacement cost $c=\crep$, it is valid against $\classV$. (c) With $c=\crep$, it is valid against $\classO$ under the null of $\classO$ (\cref{sec:attackers}).
Parts (a) and (b) remain true if gambler $k$ replaces its constant bet $\lb_k$ by any bet $\lambda_{k,t}\in[0,\lb_k]$ computed from the first $t-1$ observed records.
\end{theorem}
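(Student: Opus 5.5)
The plan is to reduce each part to Ville's inequality applied to a single nonnegative supermartingale, the ``clean'' wealth built only from uncorrupted records. The reduction rests on the pathwise bound $K^{(k)}_t\le e^{N_t c_k}\,\tilde K^{(k)}_t$.

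For (a) and (b), define for each gambler $k$ the clean wealth
\[
\tilde K^{(k)}_t=\prod_{s\le t}\big(1+(1-C_s)\lambda_{k,s}(X_s-\tfrac12)\big),
\]
in which corrupted records contribute factor $1$ and uncorrupted ones contribute $1+\lambda_{k,s}(X_s-\tfrac12)=1+\lambda_{k,s}(W_s-\tfrac12)$. The key step is to show that $\tilde K^{(k)}_t$ is a nonnegative supermartingale with respect to a filtration in which $C_s$ and $\lambda_{k,s}$ are known before $X_s$ is revealed.

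\begin{itemize}
\item Class $\classP$: the decision to corrupt record $s$ is made before $X_s$ is produced. So $C_s$ is $\cG_{s-1}$-measurable, and the bet is predictable from observed outcomes. Then $\E[\text{factor}\mid\cG_{s-1}]=1+(1-C_s)\lambda_{k,s}(\mu_s-\tfrac12)\le1$ on $\{I_s=0\}$ by \eqref{eq:null}. On $\{I_s=1\}$ we have $C_s=1$ and the factor equals $1$.
\item Ratio bound under cost $\cins$: each uncorrupted record contributes the same factor to both products. A corrupted record has clean factor $1$ and observed factor at most $\Mp=e^{\cins}$, since $\lambda_{k,s}\le\lb_k$. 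Hence $K^{(k)}_t\le e^{N_t\cins(\lb_k)}\tilde K^{(k)}_t$ pathwise.
\item Class $\classV$: $C_s$ may depend on $X_s$, so I instead define the clean factor as $1+\lambda_{k,s}(X_s-\tfrac12)$ for every non-inserted record, replaced ones included, and $1$ for inserted ones. This is again a supermartingale, because $X_s$ is the uncorrupted value and the null applies on $\{I_s=0\}$ regardless of the later replacement decision. A replaced record then has observed factor at most $\Mp$ and clean factor at least $\mm$, a ratio of at most $e^{\crep}$. An inserted record has ratio at most $\Mp\le\Mp/\mm$. So $K^{(k)}_t\le e^{N_t\crep(\lb_k)}\tilde K^{(k)}_t$.
\end{itemize}

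Combining: on the event $\{\Bhat_t\ge N_t\}$ with $\Bhat_t\ge0$, monotonicity of $\Phi_t$ in $b$ gives
\[
1/\alpha\le\Phi_t(N_t)=\sum_k w_kK^{(k)}_te^{-N_tc_k}\le\sum_kw_k\tilde K^{(k)}_t .
\]
The right-hand side is a convex combination of supermartingales with respect to a common filtration, with initial value at most $1$. Ville's inequality bounds the probability that it ever reaches $1/\alpha$ by $\alpha$. The event $\Bhat_t=-1\ge N_t$ is impossible, since $N_t\ge0$. The predictable-bet extension is covered by the same argument, since only $\lambda_{k,s}\in[0,\lb_k]$ and its measurability before $X_s$ are used.

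For (c) the attacker sees the future, so no filtration argument on the observed stream works. Instead I would use constant bets and compare with the wealth $\tilde K^{(k)}_u=\prod_{v\le u}(1+\lb_k(X_v-\tfrac12))$ over the uncorrupted-value sequence, indexed by $u$. This is a supermartingale in its own filtration, with a law independent of the attacker by the $\classO$ null. Because bets are constant, the observed wealth after $t$ records is a product whose factors can be matched to factors of $\tilde K_{U_t}$, where $U_t$ is the number of uncorrupted values consumed. The matching needs care:
\begin{itemize}
\item Records kept unchanged match their own factors.
\item Each replaced record corresponds to one consumed value with ratio at most $\Mp/\mm$.
\item Each inserted record has ratio at most $\Mp$.
\end{itemize}
Constant bets make the ordering irrelevant, since the product is commutative, so $K^{(k)}_t\le e^{N_t\crep}\tilde K^{(k)}_{U_t}$. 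Then
\[
\{\exists t:\Phi_t(N_t)\ge1/\alpha\}\subseteq\{\exists u:\textstyle\sum_kw_k\tilde K^{(k)}_u\ge1/\alpha\},
\]
and Ville's inequality on the fixed $X$-sequence applies with no dependence on the attacker.

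I expect the main obstacle to be this step, making the bookkeeping of $U_t$ rigorous. If the attacker may also discard uncorrupted values, the argument breaks: deletion is not counted in $N_t$, and a deleted loss changes wealth by $1/\mm$. I would check the formal model in \cref{app:model} to confirm that every uncorrupted value produced appears as a record, either kept or replaced. Under that model $U_t$ is simply the count of non-inserted records, and the bound holds.
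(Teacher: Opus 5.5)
Your proposal is correct and follows the paper's proof step for step. It uses the same clean wealths: factor $1$ on corrupted records for $\classP$, the hidden uncorrupted value kept on replaced records for $\classV$, and $\tilde L^{(k)}_{h(t)}$ over the uncorrupted sequence for $\classO$. It also uses the same pathwise factor comparison, the same reduction through the monotonicity of $\Phi_t$, and Ville's inequality.

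Your worry in (c) is resolved as you expected. In the formal model of \cref{app:model} every non-inserted record holds $X_{h(t)}$, so nothing is deleted; withholding attacks are listed as open. One small correction: constant bets matter not because the product commutes, but because the factor applied to $X_u$ must not depend on an observed history that the offline attacker chose after seeing $X_u$.
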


In \cref{thm:validity} the attacker may corrupt any number of records, watch every published tolerance and spend its whole effort in one burst. The cheaper insertion cost suffices against forged records, and the replacement cost is required when records may be altered after being seen. Part (c) covers contamination by a developer who knows every test item in advance.

\subsection{Neither cost can be lowered}

\begin{theorem}[Sharpness]\label{thm:sharp}
Consider a single constant bet $\lambda\in(0,2)$, write $\cins=\cins(\lambda)$ and $\crep=\crep(\lambda)$, let $B$ be the attacker's budget, and take uncorrupted values that are independent $\Bern(\frac12)$ and the tolerance of \eqref{eq:bhat-single} with an arbitrary cost $c>0$.
(a) If $c<\cins$, an attacker in $\classP$ that inserts $B\ge\log(1/\alpha)/(\cins-c)$ forged wins before the first uncorrupted record causes a false statement with probability one.
(b) If $c<2\cins$, the attacker in $\classV$ that replaces each of the first $B$ uncorrupted losses by a win causes a false statement with probability tending to one as $B\to\infty$. In particular the insertion cost is invalid against $\classV$.
(c) If $2\cins\le c<\crep$, the attacker in $\classV$ that replaces every uncorrupted loss by a win causes a false statement with probability at least $\alpha^{\theta}e^{-\theta\cins}$, where $\theta\in[0,1)$ is the largest root of $e^{\theta\cins}+e^{\theta(\cins-c)}=2$ (positive if and only if $c>2\cins$). This exceeds $\alpha$ for all small enough $\alpha$, so no cost below $\crep$ is valid at every level.
(d) Among attackers in $\classO$ that make no insertions and at most $B$ replacements, the attacker in (b) maximizes $K_t$ simultaneously for every $t$, and maximizes the probability of a false statement for every $c\le\crep$. The same holds for the grid certificate of \eqref{eq:bhat-grid} with any costs $0<c_k\le\crep(\lb_k)$.
\end{theorem}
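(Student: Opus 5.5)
The four parts share one reduction. With a single constant bet $\lambda$, $\log K_t=\cins\,(\#\text{wins})+\log\mm\,(\#\text{losses})$, ignoring ties, which do not arise for $\Bern(\frac12)$ values. A false statement occurs at time $t$ exactly when $\log K_t-\log(1/\alpha)\ge cN_t$. So each part comes down to comparing the attacker-induced drift of $\log K_t-cN_t$ with the threshold $\log(1/\alpha)$. The identities I will use repeatedly are $\cins=\log\Mp$, $\crep=\log\Mp-\log\mm$, and $\log\Mp+\log\mm=\log(1-\lambda^2/4)<0$. The last one means a fair uncorrupted record has negative log-drift.

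\textbf{Part (a).} Take the event that all $B$ forged wins come first. Then at $t=N_t=B$ we have $\log K_B-cB=B(\cins-c)\ge\log(1/\alpha)$. Hence $\Bhat_B\ge B=N_B$ deterministically.

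\textbf{Part (b).} Let the attacker flip the first $B$ uncorrupted losses. Stop at the time $\tau$ of the $B$th flip. Every record up to $\tau$ is then a win, and $\tau=B+G$, where $G$ is the number of genuine wins before the $B$th loss. $G$ is negative binomial with mean $B$, so $\tau/B\to2$ almost surely by the law of large numbers. This gives $\log K_\tau-cB=\tau\cins-cB\approx B(2\cins-c)\to\infty$ with probability tending to one. That sum eventually exceeds $\log(1/\alpha)$, which is the false statement. Since $\cins<2\cins$, the insertion cost is invalid against $\classV$.

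\textbf{Part (c).} Every observed record is now a win, so $\log K_t=t\cins$. $N_t$ is a $\Bin(t,\frac12)$ count of flips. A false statement requires $t\cins-cN_t\ge\log(1/\alpha)$ for some $t$, i.e. a random walk with increments $\cins$ or $\cins-c$ (probability $\frac12$ each) reaching level $a=\log(1/\alpha)$. When $c>2\cins$ the drift is negative, and $e^{\theta S_t}$ is a martingale for the root $\theta$ of $\frac12(e^{\theta\cins}+e^{\theta(\cins-c)})=1$. Optional stopping at the hitting time of level $a$ gives the crossing probability. The overshoot of each step is at most $\cins$, so
\[
\P(\text{hit})\,e^{\theta(a+\cins)}\ge1,\qquad\text{hence}\qquad \P(\text{hit})\ge \alpha^\theta e^{-\theta\cins}.
\]
This uses that the martingale is $0$ on non-hitting paths in the limit, by negative drift. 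When $c=2\cins$ the walk is recurrent and hits with probability one. Because $\theta<1$, the bound $\alpha^\theta e^{-\theta\cins}$ exceeds $\alpha$ for small $\alpha$. I still need to check $\theta<1$ exactly when $c<\crep$. At $\theta=1$ the left side equals $\Mp+\Mp e^{-c}$, which exceeds $2$ iff $e^{-c}>(2-\Mp)/\Mp=\mm/\Mp$, i.e. iff $c<\crep$. The function is convex with value $2$ at $\theta=0$, so the largest root is below $1$, and it is positive iff the derivative at $0$ is negative, i.e. iff $c>2\cins$.

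\textbf{Part (d).} This is a coupling/monotonicity argument and the main obstacle. Fix the uncorrupted sequence. An offline attacker with at most $B$ replacements and no insertions changes $\log K_t$ at time $t$ by $\crep$ for each loss flipped at or before $t$ and gains nothing elsewhere. So $\log K_t\le\log K^0_t+\crep\min(B,L_t)$, where $L_t$ is the number of losses up to $t$. The greedy "first $B$ losses" attacker attains this bound for every $t$ simultaneously. For the false-statement probability, greedy has $N_t=\min(B,L_t)$. Any other attacker with $N_t=n\le\min(B,L_t)$ flips (a flip of a win is wasted, so assume losses) has
\[
\log K_t-cN_t\le\log K^0_t+(\crep-c)n\le\log K^0_t+(\crep-c)\min(B,L_t),
\]
using $c\le\crep$. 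So the event happens pathwise whenever it happens for the other attacker. The grid case follows termwise, since each $K^{(k)}_t e^{-bc_k}$ is dominated by greedy's for every $k$ with $c_k\le\crep(\lb_k)$, and $\Phi_t(N_t)$ is monotone. The care needed is to compare at the attacker's own $N_t$ rather than at a fixed $b$.
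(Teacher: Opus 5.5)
Your proposal is correct and follows the paper's proof essentially step for step: the deterministic burst in (a), the law of large numbers for the position of the $B$th loss in (b), and in (c) the exponential martingale $e^{\theta S_t}$ with optional stopping, the overshoot bound $\cins$ and the convexity check at $\theta=1$. Part (d) likewise matches the paper's pathwise comparison of $\log(\alpha K_t)-cN_t$ at each attacker's own $N_t$. The one cosmetic difference is in (d): the paper does not reduce to attackers that replace only losses, and instead counts the replaced losses $A_t\le\min(B,\Lambda_t)$ separately and uses $N_t\ge A_t$, which is the one-line bound that makes your ``a flip of a win is wasted'' remark rigorous.
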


By \cref{thm:sharp}, what makes an attacker dangerous is acting on the values it sees, not seeing the future. Charging only the insertion cost, valid for predictable corruption, assumes the attacker cannot pick which uncorrupted records to corrupt. Flipping seen votes breaks this assumption. Both costs are \emph{sharp} for a constant bet: none smaller is valid at every level.

\subsection{How fast the tolerance grows}\label{sec:rates}

For i.i.d.\ uncorrupted values without ties, the tolerance grows nearly as fast as any valid tolerance. Suppose no attack and i.i.d.\ uncorrupted values, so that $W_t=X_t$, with $\P(W=1)=a$, $\P(W=0)=\ell<a$ and ties otherwise. The margin, the amount by which $i$'s mean outcome exceeds $\frac12$ (\cref{sec:method}), is $\delta=(a-\ell)/2>0$, and $\pi=a+\ell$ is the probability of no tie. The expected log-wealth growth per record under the constant bet $\lambda$ is
\begin{equation}\label{eq:growth}
g(\lambda)=\E\log\!\big(1+\lambda(W-\tfrac12)\big)=\delta\,\crep(\lambda)-\tfrac{\pi}{2}\kappa(\lambda),
\end{equation}
where $\kappa(\lambda)=-\log(1-\lambda^2/4)\ge0$. Let $\tau_b={\inf\{t:\Bhat_t\ge b\}}$ be the number of records needed to reach tolerance $b$, and write $x^+=\max(x,0)$.

\begin{theorem}[Growth rate and time to certify]\label{thm:upper}
For such i.i.d.\ uncorrupted values and no attack, the tolerance of \eqref{eq:bhat-grid} with either cost $c\in\{\cins,\crep\}$ satisfies
\begin{equation}\label{eq:rate}
\lim_{t\to\infty}\frac{\Bhat_t}{t}=\rho:=\max_{k}\frac{g(\lb_k)^+}{c(\lb_k)},
\end{equation}
almost surely and, for every integer $b\ge0$, with $\min\emptyset=+\infty$,
\begin{equation}\label{eq:tau-upper}
\E\,\tau_b\le\min_{k:\,g(\lb_k)>0}\frac{\log\frac{1}{\alpha w_k}+b\,c(\lb_k)+\cins(\lb_k)}{g(\lb_k)} .
\end{equation}
Moreover $g(\lambda)/\cins(\lambda)\ge2\delta-\pi\lambda/2$ and $g(\lambda)/\crep(\lambda)\ge\delta-\pi\lambda/6$ for $\lambda\le1$.
\end{theorem}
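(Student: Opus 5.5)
We prove the three parts of \cref{thm:upper} in turn: the almost-sure rate, the bound on $\E\,\tau_b$, and the two ratio inequalities. Throughout there is no attack, so $W_t=X_t$ are i.i.d., and each gambler's log-wealth $\log K^{(k)}_t=\sum_{s\le t}\log(1+\lb_k(W_s-\frac12))$ is a random walk. Its increments are bounded in $[\log\mm_k,\log\Mp_k]$ and have mean $g(\lb_k)$.

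\textbf{Rate.} By the strong law, $\frac1t\log K^{(k)}_t\to g(\lb_k)$ almost surely for every $k$ (a finite grid). We bound $\Bhat_t$ between two quantities.

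For the lower bound, for each $k$ with $g(\lb_k)>0$,
\[
\Bhat_t\ge\Bigl\lfloor\frac{\log(\alpha w_kK^{(k)}_t)}{c_k}\Bigr\rfloor ,
\]
because $\Phi_t(b)\ge w_kK^{(k)}_te^{-bc_k}$. Dividing by $t$ gives $\liminf\Bhat_t/t\ge g(\lb_k)/c_k$.

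For the upper bound, $\Phi_t(b)\le\max_k K^{(k)}_te^{-bc_k}$ because $\sum_k w_k\le1$. So $\Phi_t(b)\ge1/\alpha$ forces some $k$ with $b\le(\log K^{(k)}_t+\log\alpha)/c_k$. Hence
\[
\Bhat_t\le\max\Bigl\{0,\ \max_k\frac{\log K^{(k)}_t-\log(1/\alpha)}{c_k}\Bigr\},
\]
and dividing by $t$ gives $\limsup\Bhat_t/t\le\max_k g(\lb_k)^+/c_k$. Taking positive parts handles the bets with $g\le0$, and shows the limit is $0$ when no bet has positive growth.

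\textbf{Time to certify.} Fix $k$ with $g(\lb_k)>0$. Then $\tau_b\le\sigma:=\inf\{t:\log K^{(k)}_t\ge L\}$ with $L=\log\frac1{\alpha w_k}+b\,c_k$, since crossing $L$ gives $\Phi_t(b)\ge1/\alpha$.

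Wald's identity applied to $\sigma\wedge n$ gives $g\,\E[\sigma\wedge n]=\E\log K^{(k)}_{\sigma\wedge n}$. The overshoot at $\sigma$ is at most one increment, so $\log K^{(k)}_{\sigma\wedge n}<L+\log\Mp_k=L+\cins(\lb_k)$. Letting $n\to\infty$ by monotone convergence yields $\E\sigma\le(L+\cins)/g$. Minimizing over $k$ gives \eqref{eq:tau-upper}.

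\textbf{Ratio inequalities.} We first verify the identity \eqref{eq:growth}:
\[
a\log(1+\tfrac\lambda2)+\ell\log(1-\tfrac\lambda2)=\tfrac{a-\ell}{2}\log\tfrac{1+\lambda/2}{1-\lambda/2}+\tfrac{a+\ell}{2}\log(1-\tfrac{\lambda^2}{4}).
\]
Then $g/\crep=\delta-\frac{\pi}{2}\kappa/\crep$, so we need $\kappa(\lambda)\le\frac\lambda3\crep(\lambda)$. Let $u=\lambda/2\le\frac12$, so that $\crep=2\operatorname{artanh}u=2(u+u^3/3+\dots)$ and $\kappa=-\log(1-u^2)=u^2+u^4/2+\dots$. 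Comparing coefficients, $\frac{2u}{3}\cdot 2\operatorname{artanh}u=\frac43(u^2+u^4/3+\dots)$ dominates $\kappa$ term by term, since $\frac{4}{3(2n-1)}\ge\frac1n$ for $n\ge1$.

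For the insertion ratio we write $g/\cins=\delta\,\crep/\cins-\frac\pi2\kappa/\cins$, and it suffices to show $\crep\ge2\cins$ and $\kappa\le\lambda\cins$. The first holds because $\crep-2\cins=-\log(1-u)-\log(1+u)=\kappa\ge0$. The second reduces to $-\log(1-u^2)\le2u\log(1+u)$ for $u\le\frac12$. This is a one-variable inequality, which we check through series expansion or monotonicity of the difference; it is the step most likely to need care near $u=\frac12$, and we check it numerically there.

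The main obstacle is the upper half of the almost-sure limit. There the weighted sum must be converted to a max, and the bets with nonpositive growth must be handled through the $^+$.
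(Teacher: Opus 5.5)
Your arguments for the almost-sure rate and for $\E\,\tau_b$ are the paper's. You use the same sandwich $w_kK^{(k)}_te^{-bc_k}\le\Phi_t(b)\le\max_kK^{(k)}_te^{-bc_k}$; for the $\liminf$ when no bet has $g(\lb_k)>0$ you also need $\Bhat_t\ge-1$. You also use the same Wald identity at $\sigma\wedge n$, with overshoot at most $\cins(\lb_k)$. The gap is in the two ratio inequalities, not where you placed the main obstacle.

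\textbf{The replacement ratio.} Your coefficient comparison is false. The inequality $\frac{4}{3(2n-1)}\ge\frac1n$ is equivalent to $n\le\frac32$, so it holds only for $n=1$; at $n=2$ it reads $\frac49\ge\frac12$. No term-by-term argument can rescue it. Such an argument would prove $\kappa\le\frac\lambda3\crep$ for every $u=\lambda/2<1$. But as $u\to1$, $\kappa=-\log(1-u)+O(1)$ while $\frac{2u}{3}\crep=\frac23\bigl(-\log(1-u)\bigr)+O(1)$. So the inequality fails for $\lambda$ near $2$, and any proof must use $\lambda\le1$.

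\textbf{The insertion ratio.} The same applies to your inequality $-\log(1-u^2)\le2u\log(1+u)$. Its left side diverges as $u\to1$, while its right side stays below $2\log2$. Checking it numerically near $u=\frac12$ does not establish it on $(0,\frac12]$.

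\textbf{The repair.} This is the paper's route: bound the three functions separately, using $u\le\frac12$.
\begin{itemize}
\item $\kappa\le\frac{u^2}{1-u^2}\le\frac43u^2=\frac{\lambda^2}{3}$.
\item $\crep=2\,\mathrm{artanh}(u)\ge2u=\lambda$.
\item $\cins=\log(1+u)\ge\frac{u}{1+u}\ge\frac{2u}{3}=\frac\lambda3$.
\end{itemize}
Then $\kappa/\crep\le\lambda/3$ gives $g/\crep\ge\delta-\pi\lambda/6$. Next, $\kappa/\cins\le\lambda$, together with your correct identity $\crep-2\cins=\kappa\ge0$ and $\delta>0$, gives $g/\cins\ge2\delta-\pi\lambda/2$.
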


\begin{theorem}[Lower bound]\label{thm:lower}
Let $\delta\in(0,\frac12)$ and let $\Bhat_t$ be any $\alpha$-valid corruption tolerance against $\classP$ (respectively $\classV$). Let $q=2\delta$ (respectively $q=\delta$). If the uncorrupted values are independent $\Bern(\frac12+\delta)$ and there is no attack, then for all $n\ge1$ and real $b\ge0$,
\begin{align}
\P\big(\textstyle\max_{t\le n}\Bhat_t\ge b\big)&\le\alpha+\P\big(\Bin(n,q)>b\big),\label{eq:lower-coupling}\\
\P\big(\textstyle\max_{t\le n}\Bhat_t\ge 0\big)&\le\frac{n\,\KL_\delta+\log2}{\log(1/\alpha)},\label{eq:lower-kl}
\end{align}
where $\Bin(n,q)$ is a binomial variable and $\KL_\delta=\KL(\Bern(\frac12+\delta)\,\|\,\Bern(\frac12))\le4\delta^2$ is a Kullback--Leibler divergence. Consequently, reaching an integer tolerance $b$ by time $n$ with probability at least $1-\beta$, for some $\beta\in(0,1-\alpha)$, requires $n\ge(1-\alpha-\beta)(b+1)/q$ and $n\ge[(1-\beta)\log(1/\alpha)-\log2]/(4\delta^2)$. Moreover ${\P(\Bhat_n\ge(q+\eta)n)}\le\alpha+e^{-2n\eta^2}$ for every $\eta>0$.
\end{theorem}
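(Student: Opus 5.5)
The plan is a coupling argument. Build an attacker in the relevant class that turns a $\Bern(\frac12)$ null stream into a stream whose observed law is exactly i.i.d.\ $\Bern(\frac12+\delta)$, while corrupting only a binomial number of records. Validity of $\Bhat_t$ against this attacker then transfers to the no-attack alternative, because $\Bhat_t$ depends only on $W_1,\dots,W_t$.

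\textbf{Class $\classP$.} Before each record the attacker tosses a private coin $J_t\sim\Bern(2\delta)$. If $J_t=1$ it inserts a forged win, so $I_t=C_t=1$ and $W_t=1$. Otherwise the record is an uncorrupted $\Bern(\frac12)$ value. Marginally $W_t$ is i.i.d.\ with $\P(W_t=1)=2\delta+(1-2\delta)\frac12=\frac12+\delta$. $H_0^{ij}$ holds, since every non-inserted record has conditional mean $\frac12$. $N_n\sim\Bin(n,2\delta)$ and is nondecreasing.

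\textbf{Class $\classV$.} The attacker sees $X_t$. If $X_t=0$, it replaces it by a win with probability $2\delta$ (independent coin). This gives $\P(W_t=1)=\frac12+\frac12\cdot2\delta=\frac12+\delta$. The corruption indicators are i.i.d.\ $\Bern(\delta)$, so $N_n\sim\Bin(n,\delta)$.

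In both cases, write $q$ for the corruption rate. The event $\{\max_{t\le n}\Bhat_t\ge b\}$ has the same probability under the alternative and under the coupled attack. Under the attack it is contained in
\[
\{\exists t\le n:\Bhat_t\ge N_t\}\cup\{N_n>b\},
\]
because on the complement of the second event $N_t\le N_n\le b\le\Bhat_t$. This gives \eqref{eq:lower-coupling}.

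\textbf{The consequences.} For \eqref{eq:lower-coupling} with integer $b$, use Markov's inequality, $\P(\Bin(n,q)\ge b+1)\le nq/(b+1)$. Requiring $1-\beta\le\alpha+nq/(b+1)$ gives $n\ge(1-\alpha-\beta)(b+1)/q$. For the linear-rate statement, apply the same containment with $b=(q+\eta)n$ at the single time $n$, and Hoeffding's inequality $\P(\Bin(n,q)\ge(q+\eta)n)\le e^{-2n\eta^2}$.

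\textbf{The KL bound \eqref{eq:lower-kl}.} Use the plain no-corruption null, i.i.d.\ $\Bern(\frac12)$, which lies in every class. On it $N_t=0$, so validity gives $\P_0(E)\le\alpha$ for $E=\{\max_{t\le n}\Bhat_t\ge0\}$. Apply the data-processing inequality to the indicator of $E$ under $P_1^{\otimes n}$ versus $P_0^{\otimes n}$:
\[
n\KL_\delta\ge p_1\log\frac{p_1}{p_0}+(1-p_1)\log\frac{1-p_1}{1-p_0}\ge p_1\log\frac1{\alpha}-\log2,
\]
using the binary-entropy bound $h(p_1)\le\log 2$ and $p_0\le\alpha$. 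Rearranging gives \eqref{eq:lower-kl}. The bound $\KL_\delta\le4\delta^2$ is the $\chi^2$ bound $\KL\le\chi^2=(2\delta)^2$. Requiring $p_1\ge1-\beta$ gives the second sample-size condition.

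\textbf{Main obstacle.} The delicate point is checking that the coupled attackers fit the formal definitions. For $\classP$, the decision $J_t$ and the inserted value must be $\cG_{t-1}$-measurable, and the null must be evaluated on $\{I_t=0\}$ with $\cG_{t-1}$ containing the attacker's randomness. For $\classV$, the replacement decision may depend on $X_t$. In both cases the uncorrupted values must remain $\Bern(\frac12)$ given the past. I expect this to be routine once $\cG_{t-1}$ is taken to include the attacker's independent coins.
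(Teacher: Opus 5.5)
Your proposal is correct and follows the paper's proof: the same $\Bern(2\delta)$ insertion attacker in $\classP$ and loss-flipping attacker in $\classV$ that reproduce the $\Bern(\frac12+\delta)$ observed law, the data-processing inequality for \eqref{eq:lower-kl}, and Markov and Hoeffding for the consequences. The differences are cosmetic. You leave the attacker uncapped and split off $\{N_n>b\}$ by a union bound, whereas the paper caps the budget at $\lfloor b\rfloor$ and uses that the coupled streams coincide on $\{\sum_{t\le n}\zeta_t\le\lfloor b\rfloor\}$. Your direct bound $\mathrm{kl}(p_1,p_0)\ge p_1\log(1/\alpha)-\log2$ also avoids the paper's case split on whether $p_1\ge\alpha$.
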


\Cref{thm:upper,thm:lower} show that the growth rate approaches the optimum. By the last claim of \cref{thm:upper} and by \cref{thm:lower}, $\rho$ lies between $q-\lb_m/2$ (insertion, $q=2\delta$) or $q-\lb_m/6$ (replacement, $q=\delta$) and $q$ when $\lb_m\le1$, so $\rho\to q$ as $\lb_m\to0$. The time to reach tolerance $b$ is of order $\log(1/\alpha)/\delta^2+b/\delta$, the price of a sequential test plus a robustness price linear in $b$, with upper and lower bounds matching up to constants (\cref{cor:explicit,cor:lower-mean} in \cref{app:proofs}). Faking a margin $\delta$ takes forging a fraction $2\delta$ of the records or flipping a fraction $\delta$, and the tolerance reaches nearly that fraction.

\begin{figure*}[t]
\centering
\includegraphics[width=\textwidth]{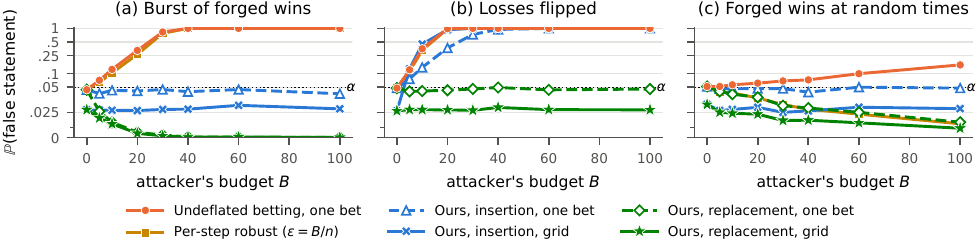}
\caption{False-statement probability in $n=5{,}000$ null records (win chance $\frac12$) versus the budget $B$ ($10{,}000$ runs; dotted: $\alpha$). (a) $B$ forged wins in one burst, timed to break undeflated betting. (b) The first $B$ uncorrupted losses flipped, the best offline replacement attack (\cref{thm:sharp}(d)). (c) $B$ forged wins, each inserted before a record with probability $B/n$. Markers hide the $95\%$ Monte Carlo intervals (half-width $\le0.01$). Baselines: \cref{sec:experiments}.}
\label{fig:validity}
\end{figure*}

\begin{figure*}[t]
\centering
\includegraphics[width=\textwidth]{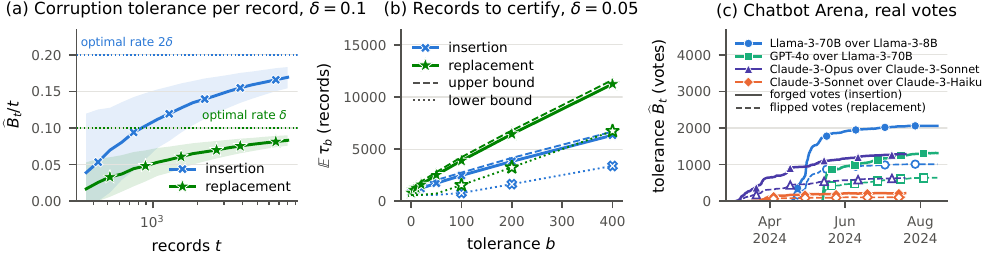}
\caption{(a) Median and $10$--$90\%$ band of $\Bhat_t/t$ at margin $\delta=0.1$ ($2{,}000$ repetitions), with the optimal rates $2\delta$ and $\delta$. (b) Mean records to reach tolerance $b$ at $\delta=0.05$, with the upper bound of \cref{thm:upper} and the lower bound of \cref{cor:lower-mean}. (c) Tolerance of four edge claims on real, time-ordered Chatbot Arena votes.}
\label{fig:rates}
\end{figure*}

\subsection{Selection and whole leaderboards}\label{sec:simultaneous}

\begin{proposition}[Price of selection]\label{prop:selection}
Suppose a developer evaluates $V$ variants of model $i$ against model $j$, and the variant that each record concerns is fixed before its value is produced. Call a variant \emph{null} if its records against $j$ satisfy the null hypothesis of \eqref{eq:null}, and suppose that the null hypothesis of each null variant holds given the records of all variants. Let the published variant $\hat v_t$ be chosen by any rule, possibly using all records. The grid certificate of \eqref{eq:bhat-grid} computed for $\hat v_t$ with $1/\alpha$ replaced by $V/\alpha$ makes a false statement about a null variant, at any time, with probability at most $\alpha$, against the attacker classes and costs of \cref{thm:validity}(a,b). Conversely, suppose that $\alpha\le\frac12$, the $V$ variants have independent $\Bern(\frac12)$ records and one constant bet $\lambda$ is used. Then replacing $V/\alpha$ by $V'/\alpha$ with any $0<V'\le V/(2\Mp)$, where $\Mp=1+\lambda/2$, gives a false statement with probability greater than $\alpha$ for some selection rule, even without an attack.
\end{proposition}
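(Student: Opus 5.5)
The plan is to obtain the upper bound by a union bound over variants, reusing the supermartingale argument behind \cref{thm:validity}(a,b), and the converse by an explicit selection rule.

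\textbf{Upper bound.} For each variant $v$, let $\Phi^{(v)}_t(b)$ be the deflated average of \eqref{eq:deflated} computed from the records of $v$. Let $N^{(v)}_t$ be its number of corrupted records. The proof of \cref{thm:validity}(a,b) should in fact show that, for a null variant, the process
\[
M^{(v)}_t=\Phi^{(v)}_t\big(N^{(v)}_t\big)=\sum_k w_k K^{(k,v)}_t e^{-N^{(v)}_t c_k}
\]
is a nonnegative supermartingale with $M^{(v)}_0\le 1$. The reason is that each uncorrupted factor has conditional mean at most one, and each corrupted record multiplies $K^{(k,v)}$ by at most $e^{c_k}$ while raising $N^{(v)}_t$ by one. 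I would first restate this as a lemma, with respect to the joint filtration containing all variants' records. This is where two hypotheses enter: the assumption that each null variant's null holds given the records of all variants, and the fact that the variant a record concerns is fixed before its value is produced. The latter means that records of $v$ arriving at times when other variants are compared leave $M^{(v)}$ unchanged.

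Ville's inequality then gives $\P(\exists t: M^{(v)}_t\ge V/\alpha)\le\alpha/V$ for each null $v$. A false statement about a null published variant at time $t$ means $\Bhat^{(\hat v_t)}_t\ge N^{(\hat v_t)}_t$. Because $\Phi$ is decreasing in $b$, this implies $M^{(\hat v_t)}_t\ge V/\alpha$. Hence the event is contained in $\bigcup_{v\text{ null}}\{\exists t: M^{(v)}_t\ge V/\alpha\}$, whatever selection rule is used. A union bound over at most $V$ null variants gives $\alpha$.

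\textbf{Converse.} Take all $V$ variants independent $\Bern(\frac12)$, with no attack and one constant bet $\lambda$. I would let each variant receive records in turn, stopping a variant once its wealth $K^{(v)}$ first reaches $V'/\alpha$. The rule publishes any variant whose wealth has crossed. A false statement ($\Bhat\ge0=N$) occurs iff some variant's wealth ever reaches $V'/\alpha$. For a single fair-coin martingale with factors in $[\mm,\Mp]$, overshoot is bounded by a factor $\Mp$. Optional stopping applied to the martingale stopped at the crossing time, or at $0$, should give a crossing probability $p\ge \alpha/(V'\Mp)$. This follows because the wealth tends to $0$ a.s. (negative drift $-\kappa/2$) and the value at crossing is at most $\Mp V'/\alpha$, so $1=\E K_\tau\le p\,\Mp V'/\alpha$. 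Independence then gives a false-statement probability of $1-(1-p)^V\ge 1-(1-\alpha/(V'\Mp))^V$. With $V'\le V/(2\Mp)$ we have $Vp\ge 2\alpha$, so this is at least $1-e^{-2\alpha}$. The inequality $1-e^{-2\alpha}>\alpha$ holds for $\alpha\le\frac12$, since $1-e^{-2x}-x$ is concave, vanishes at $0$, and is positive at $\frac12$ ($1-e^{-1}\approx0.632>0.5$).

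\textbf{Main obstacle.} The delicate step is the lemma in the upper bound: verifying the supermartingale property of $M^{(v)}$ in the multi-variant filtration when the attacker is in $\classV$. In that case the indicator $C_t$ depends on $X_t$. I would handle it as in \cref{thm:validity}(b) by bounding the factor pointwise: on corruption, $(1+\lambda(W-\frac12))e^{-c}\le \mm\le 1+\lambda(X-\frac12)$ with $c=\crep$. This reduces the conditional expectation to that of the uncorrupted factor, which is at most one under the null.
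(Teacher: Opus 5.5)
Your upper bound is correct and is essentially the paper's argument. The paper dominates $K^{(v,k)}_te^{-c_kN^{(v)}_t}$ pathwise by the uncorrupted wealth $L^{(v,k)}_t$ and applies Ville once to the average $\frac1V\sum_{v\in\mathcal N}\sum_kw_kL^{(v,k)}_t$. You instead show directly that $\Phi^{(v)}_t(N^{(v)}_t)$ is a supermartingale and then take a union bound over null variants. Your pointwise bound $(1+\lambda_{k,t}(W_t-\frac12))e^{-\crep(\lb_k)}\le1-\lb_k/2\le1+\lambda_{k,t}(X_t-\frac12)$ on $\{I_t=0\}$ handles $\classV$ correctly. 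Both routes give the bound $|\mathcal N|\alpha/V\le\alpha$.

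The converse has a gap when $V'<\alpha$, that is, when the threshold $y=V'/\alpha$ lies below the initial wealth $K_0=1$. A variant can then cross at $t=1$ with $K_1=\Mp>\Mp y$. So your overshoot claim, that the value at crossing is at most $\Mp V'/\alpha$, is false there, and $p\ge\alpha/(V'\Mp)$ does not follow. For $V'<\alpha/\Mp$ that bound even exceeds one, and $(1-\alpha/(V'\Mp))^V$ no longer bounds $(1-p)^V$. This regime is not vacuous. The hypothesis $0<V'\le V/(2\Mp)$ allows arbitrarily small $V'$ for every $V$. For $V=1$ and $\alpha\in(\frac1{2\Mp},\frac12]$ it is the only regime.

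The repair, as in the paper, is to bound the stopped wealth by $\Mp\max(y,1)$ instead of $\Mp y$. For $y<1$ this gives $p\ge1/\Mp$. Since $\lambda<2$, $1/\Mp>\frac12\ge\alpha$, so a single variant already produces a false statement with probability exceeding $\alpha$. For $V'\ge\alpha$ your argument goes through as written.
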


For the best of at most $V$ private variants, with $V$ fixed in advance and variants possibly created as results come in, $1/\alpha$ must therefore scale with $V$, and $V/\alpha$ suffices. Selection costs between $\log V-\log(2\Mp)$ and $\log V$ nats, about $(\log V)/c$ records: $35$ forged records for the $V=27$ variants of \citet{singh2025illusion} and $\lambda=0.2$, a quantitative argument for registering variants in advance.

A leaderboard makes many claims at once, and the attacker spends one budget across all of them. Let $N_t$ now count the corrupted records on all pairs up to (global) time $t$, and let $\Phi^{ij}_t(b)$ be the deflated average of \eqref{eq:deflated} for the ordered pair $(i,j)$. A \emph{joint claim} is a set $D$ of ordered pairs $(i,j)$ for which the edge claim is asserted. It is false if $H_0^{ij}$ holds for some $(i,j)\in D$.

\begin{proposition}[Simultaneous certificates]\label{prop:closed}
Assume that the pair of each record is fixed before its value is produced and that each true null $H_0^{ij}$ holds given the records on all pairs. Let $\mathcal{S}$ be a family of sets of ordered pairs that contains the set $S^\ast$ of pairs whose null hypotheses are true. For a joint claim $D$ that intersects some $S\in\mathcal{S}$, let $\Bhat_t(D)$ be the largest integer $b\ge0$ such that $|S|^{-1}\sum_{(i,j)\in S}\Phi^{ij}_t(b)\ge1/\alpha$ for every $S\in\mathcal{S}$ with $S\cap D\neq\emptyset$ ($-1$ if there is none). Against attackers in $\classP$ (with $\cins$) or $\classV$ (with $\crep$), $\P\big(\exists\,t,\ \exists\text{ false joint claim }D:\ \Bhat_t(D)\ge N_t\big)\le\alpha$.
\end{proposition}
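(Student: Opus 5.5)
The plan is a closed-testing argument built on the single-pair validity of \cref{thm:validity}, applied to one global e-process for the intersection hypothesis indexed by $S^\ast$. Throughout, work on global time $t$ and let $N_t$ count the corrupted records on all pairs. For each ordered pair $(i,j)$ and each gambler $k$, define the \emph{charged wealth} $M^{ij,k}_t=K^{ij,(k)}_t e^{-c_k N_t}$, where $N_t$ counts corrupted records on all pairs. A record on another pair leaves $K^{ij,(k)}$ unchanged, and may raise $N_t$, which only lowers $M^{ij,k}_t$.

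\textbf{Step 1 (supermartingale per true null).} For $(i,j)\in S^\ast$, show that $M^{ij,k}_t$ is a nonnegative supermartingale with respect to the global filtration. I would check this record by record, following the proof of \cref{thm:validity}(a,b).
\begin{itemize}
\item An uncorrupted record on $(i,j)$ has conditional mean $\le\frac12$. Here I need the hypothesis that $H_0^{ij}$ holds given the records on all pairs, and that the pair is fixed before the value is produced. The factor then has conditional mean $\le1$.
\item A corrupted record on $(i,j)$ multiplies $K$ by at most $\Mp$ for an insertion, or by a ratio of at most $\Mp/\mm$ relative to the uncorrupted factor for a replacement. Simultaneously $N_t$ rises by one, which is charged $e^{-c_k}$.
\item A corrupted record on another pair only decreases $M$.
\end{itemize}
The delicate case is $\classV$. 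A value-dependent replacement on pair $(i,j)$ gives a factor of at most $(\Mp/\mm)\cdot(\text{uncorrupted factor})\cdot e^{-\crep}$ when the attacker replaces, and the uncorrupted factor itself otherwise. So the charged factor is at most the uncorrupted factor pointwise, and its conditional mean is $\le1$. I would reuse the lemma behind \cref{thm:validity}(b) verbatim rather than redo it.

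\textbf{Step 2 (combine).} Averaging over $k$ with weights $w_k$ and over pairs in $S^\ast$ gives the process $E_t=|S^\ast|^{-1}\sum_{(i,j)\in S^\ast}\sum_k w_k M^{ij,k}_t$. This is a nonnegative supermartingale with $E_0\le1$. By Ville's inequality, $\P(\exists t: E_t\ge1/\alpha)\le\alpha$. Note that $E_t=|S^\ast|^{-1}\sum_{S^\ast}\Phi^{ij}_t(N_t)$, and that $\Phi^{ij}_t(b)$ is decreasing in $b$.

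\textbf{Step 3 (closed testing).} Fix the complement of the Ville event. Suppose some false joint claim $D$ had $\Bhat_t(D)\ge N_t\ge0$. Since $D$ is false, it contains a pair with a true null, so $S^\ast\cap D\ne\emptyset$. Because $S^\ast\in\mathcal S$, the definition of $\Bhat_t(D)$ gives $|S^\ast|^{-1}\sum_{S^\ast}\Phi^{ij}_t(\Bhat_t(D))\ge1/\alpha$. By monotonicity in $b$, the same holds at $b=N_t$, so $E_t\ge1/\alpha$, a contradiction. If $S^\ast$ is empty, no joint claim is false and there is nothing to prove.

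\textbf{Main obstacle.} The hardest step is Step 1 under $\classV$ with many pairs. There I must ensure that conditioning on $\cG_{t-1}$ for the global stream, including the attacker's knowledge of other pairs and published certificates, still yields conditional mean $\le\frac12$. This is exactly what the stated assumption supplies, so I expect Step 1 to reduce to the single-pair lemma.
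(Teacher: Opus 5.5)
Your argument is correct. Steps 2 and 3 are the paper's: averaging over $S^\ast$ and the bets, then the closed-testing contradiction via $S^\ast\in\mathcal S$, $S^\ast\cap D\neq\emptyset$ and monotonicity of $\Phi^{ij}_t$. Step 1 is organized differently. The paper never claims that $K^{(ij,k)}_te^{-c_kN_t}$ is a supermartingale. It takes the uncorrupted wealth $L^{(ij,k)}$ from the proof of \cref{thm:validity}(a) or (b), with factor $1$ off the pair, and shows it is a supermartingale for the global filtration. It then proves the pathwise domination $K^{(ij,k)}_t\le e^{c_kN^{ij}_t}L^{(ij,k)}_t\le e^{c_kN_t}L^{(ij,k)}_t$, where $N^{ij}_t$ counts the corrupted records on the pair. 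Finally it applies Ville to $\bar L_t=|S^\ast|^{-1}\sum_{(i,j)\in S^\ast}\sum_kw_kL^{(ij,k)}_t\ge E_t$.

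Your per-record bound is the same inequality used one step at a time. It yields a slightly sharper fact without the auxiliary process or \cref{lem:reduction}: $\Phi^{ij}_t(N_t)$ is itself a nonnegative supermartingale, not just dominated by one. The paper's domination travels further. Against offline attackers, whose choices may depend on future values, $Ke^{-cN}$ need not be a supermartingale, since a cheap corruption now can signal an upcoming uncorrupted win. It is still dominated by a wealth of the uncorrupted sequence, which is how the paper extends the result to $\classO$ after \cref{cor:items}.

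Two details need tightening. First, for $\classP$ with $\cins$, the ratio bound $\Mp/\mm$ you quote for replacements is too weak, since $e^{-\cins}\Mp/\mm=1/\mm>1$. The case works because $C_t$ is $\cG_{t-1}$-measurable, so the conditional expectation splits over $\{C_t=0\}$ and $\{C_t=1\}$. Any corrupted record, inserted or predictably replaced, then has charged factor at most $\Mp e^{-\cins}=1$. Second, the single-pair proof you plan to reuse ``verbatim'' proves domination by $L^{(k)}$, not your supermartingale claim. Either write out your own one-line per-record computation or switch to the domination; either closes the argument.
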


Every published comparison, rank confidence set or set of the $r$ best models can thus carry one tolerance, valid over all claims and times against one attacker. This is closed testing \citep{marcus1976} on deflated wealths, averaged as in anytime-valid rank sets \citep{khosravi2026rank}. With $\mathcal{S}$ all sets of pairs it needs no assumption on which claims are correct.

\subsection{The lead claim}\label{sec:average}

\begin{theorem}[Validity for the lead claim]\label{thm:average}
Let $\Bhat^Z_t$ be the lead tolerance of \cref{sec:expwealth} with constants $4>\gamma_1>\dots>\gamma_m>0$ and weights $w_k>0$, $\sum_kw_k\le1$.
(a) With the cost $\gamma/2-\gamma^2/8$, $\Bhat^Z_t$ is $\alpha$-valid for the lead claim against the attackers in $\classP$ that only insert, and (b) with the cost $\gamma$ against $\classV$.
(c) Against $\classO$ the same holds with the cost $\gamma$ and the margin of the uncorrupted sequence.
(d) In (a) to (c), $\Bhat^Z_t$ is also $\alpha$-valid in the sense of \eqref{eq:validity}. For i.i.d.\ uncorrupted values with mean $\frac12+\delta$, ties allowed, and no attack, almost surely $\Bhat^Z_t/t\to\rho^Z=\max_k(\gamma_k\delta-\gamma_k^2/8)^+/c^Z(\gamma_k)$, with $\rho^Z\ge2\delta-\gamma_m/4$ (insertion cost) and $\rho^Z\ge\delta-\gamma_m/8$ (replacement cost).
\end{theorem}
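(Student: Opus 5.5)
The plan is to reduce each part to Ville's inequality applied to one nonnegative supermartingale that is built from the uncorrupted values only, and then to compare it with the observed wealth through the costs. For bet $\gamma$, define the \emph{clean process}
\[
\tilde Z^{(\gamma)}_t=\exp\Big(\gamma\sum_{s\le t}(1-C_s)\big(X_s-\mu_s\big)-\frac{\gamma^2}{8}\sum_{s\le t}(1-C_s)\Big).
\]
In (a) only insertions occur, so $C_s=I_s$, and $C_s$ is $\cG_{s-1}$-measurable. By Hoeffding's lemma, $\E[e^{\gamma(X_s-\mu_s)}\mid\cG_{s-1}]\le e^{\gamma^2/8}$, so $\tilde Z$ is a nonnegative supermartingale with respect to $(\cG_t)$. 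The mixture $\sum_k w_k\tilde Z^{(\gamma_k)}_t$ is then also one, and by Ville's inequality it exceeds $1/\alpha$ with probability at most $\alpha$. The remaining work is a deterministic inequality on the complement of this event.

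\textbf{Deterministic comparison.} Write $\log Z_t=\gamma\sum_s(W_s-\frac12)-\gamma^2t/8$ and split the sum by record type. For an uncorrupted record, $W_s-\frac12=(X_s-\mu_s)+(\mu_s-\frac12)$. Summing the second term over uncorrupted records gives $\Delta_t$ in case (a). For an inserted record, the contribution is $\gamma(W_s-\frac12)-\gamma^2/8\le\gamma/2-\gamma^2/8=\cins^Z$. Hence
\[
\log Z_t\le\log\tilde Z_t+\gamma\Delta_t+N_t\,\cins^Z(\gamma).
\]
On the event $\Delta_t\le0$ this gives $Z^{(\gamma_k)}_te^{-N_tc_k}\le\tilde Z^{(\gamma_k)}_t$ for every $k$. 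Therefore $\Phi_t(N_t)\le\sum_kw_k\tilde Z^{(\gamma_k)}_t$. The event $\Bhat^Z_t\ge N_t$ means $\Phi_t(N_t)\ge1/\alpha$, since $\Phi_t$ decreases in $b$, so it forces the Ville event. This proves (a).

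\textbf{Parts (b) and (c).} Replacements create the main obstacle. The set of corrupted records may depend on $X_s$, so the process $\tilde Z$ with factors $(1-C_s)$ is no longer a supermartingale. Both $\Delta_t$ and the Hoeffding term also still sum over replaced records, which have $I_s=0$. I would therefore keep the full sum over non-inserted records, $\hat Z_t=\exp(\gamma\sum_s(1-I_s)(X_s-\mu_s)-\frac{\gamma^2}{8}\sum_s(1-I_s))$. Since $I_s$ is $\cG_{s-1}$-measurable, $\hat Z$ is a supermartingale. For a replaced record, the observed contribution $\gamma(W_s-\frac12)-\gamma^2/8$ differs from the clean contribution $\gamma(X_s-\frac12)-\gamma^2/8$ by at most $\gamma|W_s-X_s|\le\gamma$. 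An inserted record contributes at most $\gamma/2-\gamma^2/8\le\gamma$ when $\gamma<4$. This yields
\[
\log Z_t\le\log\hat Z_t+\gamma\Delta_t+\gamma N_t,
\]
and the argument of (a) then goes through with cost $\gamma$. For (c), I would index the clean process by the arrival index $u$ of the uncorrupted sequence and use the filtration $\sigma(X_1,\dots,X_u)$, which the attacker does not affect. Its margin is $\sum_u(\mu_u-\frac12)$ under the offline null. The comparison above is pathwise: every non-inserted record corresponds to some $u$, and each corruption costs at most $\gamma$, so it holds for any attacker, including one that knows the future. Time changes between $t$ and $u$ are monotone, so Ville's inequality applied on the $u$-scale still controls all $t$.

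\textbf{Part (d).} Validity in the sense of \eqref{eq:validity} is immediate, because under $H_0^{ij}$ we have $\Delta_t\le0$ for every $t$. For the rate, with no attack $\log Z^{(\gamma)}_t/t\to\gamma\delta-\gamma^2/8$ almost surely by the strong law of large numbers. The log of the sum $\Phi_t$ is within $\log(1/w_k)$ of the log of its largest term, so $\Bhat^Z_t/t\to\max_k(\gamma_k\delta-\gamma_k^2/8)^+/c_k$. Finally, $(\gamma\delta-\gamma^2/8)/\gamma=\delta-\gamma/8$, which gives $\rho^Z\ge\delta-\gamma_m/8$ for the replacement cost. For the insertion cost, $(\gamma\delta-\gamma^2/8)/(\gamma/2-\gamma^2/8)\ge2\delta-\gamma/4$. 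To check this, write it as $\delta-\gamma/8\ge(2\delta-\gamma/4)(1/2-\gamma/8)$. Expanding the right side gives $\delta-\gamma\delta/4-\gamma/8+\gamma^2/32$, so the inequality reduces to $\gamma\delta/4\ge\gamma^2/32$, which holds whenever the numerator is positive.
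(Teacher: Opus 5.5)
Your proposal is correct and follows the paper's proof essentially step for step. It uses the same Hoeffding supermartingale over non-inserted records (your $(1-C_s)$ version coincides with it in (a)), the same pathwise bound $\log Z_t\le\log L_t+\gamma\Delta_t+c\,N_t$ closed by Ville's inequality, the time change $h(t)$ for $\classO$, and the strong-law sandwich for the rate. The loose ends are trivial: when $\gamma_m\delta-\gamma_m^2/8<0$ both lower bounds are negative, so $\rho^Z\ge0$ already gives them, and (d) should also note that $\tilde\Delta_u\le0$ under the null of $\classO$.
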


\Cref{thm:average} gives the lead claim the same guarantee as the edge claim, at nearly the same costs and the same optimal rates. The difference is in part (a). A record overwritten unseen still enters $\Delta_t$ through its hidden value, so part (a) covers insertions only, unlike \cref{thm:validity}(a) for the edge claim. A tie costs the exponential wealth a little and the betting wealth nothing, so the lead tolerance falls slightly below the edge tolerance under frequent ties. Simultaneous lead certificates follow by a Bonferroni correction (\cref{cor:avg-simul}). Proofs in \cref{app:proofs}.

\section{Experiments}\label{sec:experiments}

We check the results by simulation and by replaying rigging attacks on real Arena votes and benchmark items. The setup is in \cref{app:overview}.

\textbf{Methods.} Ours is the grid tolerance of \eqref{eq:bhat-grid}, run as the \emph{insertion certificate} for forged records and as the \emph{replacement certificate} for flipped or contaminated ones. The baselines are \emph{undeflated betting}, a \emph{per-step robust} test of the kind in \cref{prop:perstep}, and a \emph{Wald interval} recomputed as records arrive.

\subsection{Validity against three attacks}

This experiment tests whether each method keeps its error at $\alpha$ under the attacks of \cref{thm:validity,thm:sharp}. \Cref{fig:validity} shows that only the replacement certificate stays valid against every attacker and that only flips defeat the insertion certificate, as the theorems predict. Under a burst of forged wins (panel a), undeflated betting and the per-step robust test err with probability $0.33$ and $0.27$ at $B=20$ and always for $B\ge40$, while our insertion certificate stays at or below $0.048$. With losses flipped (panel b), the insertion certificate fails from $B=5$ (at $B=20$, $0.37$ with a single bet and $0.94$ with the grid) and the replacement certificate stays at or below $0.050$. With forged wins spread at random (panel c) the per-step robust test is valid. Standard errors are $\le0.005$.

\subsection{Growth of the tolerance}

This experiment measures how fast the tolerance grows against the rates of \cref{sec:rates}. \Cref{fig:rates}(a,b) shows that the tolerance per record rises toward the optimal rates $2\delta$ and $\delta$ and that the upper bound on the time to certify is nearly attained. At $\delta=0.1$ and $t=8{,}000$ the median $\Bhat_t/t$ is $0.17$ (insertion) and $0.083$ (replacement), against $0.2$ and $0.1$. For $b\ge50$ the mean $\tau_b$ is at most $19\%$ below \eqref{eq:tau-upper}; at $b=400$ it is at most $1.9$ times the lower bound on the mean time to certify (\cref{cor:lower-mean}, \cref{app:proofs}) for $\delta=0.05$ and replacement needs $1.6$--$1.9$ times the records of insertion.

\subsection{Chatbot Arena and static benchmarks}

We test the certificates on real data because that is where they would be used, with uneven pair counts, frequent ties and attacks on published standings.

\textbf{Arena votes.} We certify pairwise claims on the time-ordered votes of the public Arena file ($1.8$ million votes, $129$ models, April 2023 to August 2024, $36\%$ ties; \citealp{chiang2024arena,chatbotarena_data}). The certificates need a battle's pair fixed before its vote, which every sampling rule satisfies. \Cref{fig:rates}(c) shows four pairs over time. Well-separated models are certified against about $1{,}300$ to $2{,}000$ forged votes and half as many flipped ones, the closest pair about $200$. For separated pairs the completed vote set withstands only slightly more, and the lead certificate slightly less (\cref{tab:arena-pairs} in \cref{app:experiments}). When all pairs are claimed at once, no certificate separates GPT-4o (2024-05-13) and \mbox{Claude-3.5-Sonnet} (\cref{fig:matrix} in \cref{app:experiments}).

\begin{table}[t]
\caption{Arena rigging replay. Probability of certifying a false ordering at $B=0$ / $100$ / $400$ attacker votes (mean of three pairs, $1{,}000$ replays each; valid if at most $0.05$). Burst: forged votes at one random time. Flip: the weaker model's losses flipped.}
\label{tab:rigging}
\centering
\setlength{\tabcolsep}{3pt}
\begin{tabular}{lcc}
\toprule
Method & Burst & Flip \\
\midrule
Wald interval & .07 / .40 / 1.0 & .07 / 1.0 / 1.0 \\
Undeflated betting & .00 / .22 / .99 & .00 / 1.0 / 1.0 \\
Per-step robust & .01 / .13 / .28 & .01 / 1.0 / 1.0 \\
Ours, insertion & .00 / .01 / .01 & .00 / 1.0 / 1.0 \\
Ours, replacement & .00 / .01 / .01 & .00 / .01 / .01 \\
\bottomrule
\end{tabular}
\end{table}

\textbf{Rigging replay.} \Cref{tab:rigging} asks whether a few hundred rigged votes make a method certify a false ordering. For three pairs whose promoted model is slightly worse, an attacker adds or flips votes in resampled replays. Every baseline breaks at a few hundred votes, while the matching certificate stays valid.

\textbf{Static benchmarks.} Benchmarks are the setting of the offline attacker. A developer who knows every test item can contaminate the ones it chooses, so the replacement certificate is required. On Massive Multitask Language Understanding (MMLU; $14{,}042$ questions, four open-weight 7--9B models; \citealp{liang2023helm,helm_data,hendrycks2021mmlu}) well-separated pairs withstand about nine tenths of the replaced items the released benchmark could withstand at most (\cref{prop:finite}). On five frontier benchmarks (MMLU-Pro, GPQA, IFEval, Omni-MATH and SWE-bench Verified; \citealp{wang2024mmlupro,rein2023gpqa,zhou2023ifeval,gao2025omnimath,jimenez2024swebench,openai2024swebenchverified,swebench_experiments}) no top-two ordering is certified when all pairs are claimed at once, and no top-five claim withstands over $7$ contaminated items (\cref{fig:frontier} in \cref{app:experiments}).

\textbf{Additional experiments.} \Cref{app:experiments} also reports the price of selecting among variants (E4), a regime shift (E5), attacks on correct claims (E6), the rigging replay on benchmark items (E7), the lead claim under ties (E9) and fixed-time breakdown counts (E10).

\section{Conclusion and Discussion}\label{sec:conclusion}

Leaderboards face adaptive attackers, and existing guarantees assume genuine records or a per-step corruption level. We introduce corruption tolerance, the number of corrupted records a ranking claim withstands, and certify it at all times against attackers who watch the published tolerances and corrupt in bursts. It charges about twice as much for an altered as for a forged record and grows nearly as fast as possible on i.i.d.\ records without ties. A second certificate covers the claim that a model leads on average so far, and selection among $V$ variants costs about $\log V$ nats. On Arena it certifies some orderings against about $2{,}000$ forged votes but not GPT-4o over Claude-3.5-Sonnet claimed at once (limitations: \cref{app:limitations}).

\textbf{Reproducibility.} Details are in \cref{app:repro}. Code, data and saved results are at \url{https://github.com/HamedKhosravi99/corruption-tolerance}.

\onecolumn
\appendix
\crefalias{section}{appendix}
\crefalias{subsection}{appendix}
\crefname{appendix}{Appendix}{Appendices}
\renewcommand{\topfraction}{0.9}\renewcommand{\bottomfraction}{0.6}\renewcommand{\textfraction}{0.07}\renewcommand{\floatpagefraction}{0.8}\setcounter{topnumber}{3}\setcounter{totalnumber}{5}
\etocdepthtag.toc{mtappendix}
\etocsettagdepth{mtmain}{none}
\etocsettagdepth{mtappendix}{subsection}
\etocsettocstyle{\section*{Contents of the Appendix}\begingroup\setlength{\parskip}{0pt plus 1pt}}{\endgroup\clearpage}
\tableofcontents

\section{The Algorithm}\label{app:algorithm}

\begin{algorithm}[h]
\caption{Corruption tolerance for the edge claim}\label{alg:ccb}
\begin{algorithmic}[1]
\REQUIRE level $\alpha$; constant bets $\lb_1,\dots,\lb_m\in(0,2)$; weights $w_k>0$ with $\sum_k w_k\le1$; cost $c=\cins$ (attackers in $\classP$) or $c=\crep$ (attackers in $\classP$, $\classV$ or $\classO$)
\STATE $G_k\gets0$ for $k=1,\dots,m$ \COMMENT{log-wealths}
\FOR{each new record $t=1,2,\dots$ with outcome $W_t$}
\STATE $G_k\gets G_k+\log\big(1+\lb_k(W_t-\frac12)\big)$ for all $k$
\STATE $\Phi_t(b):=\sum_k w_k\exp\{G_k-b\,c(\lb_k)\}$
\STATE publish $\Bhat_t=\max\{b\in\{0,1,\dots\}:\Phi_t(b)\ge1/\alpha\}$ (bisection), or $-1$ if $\Phi_t(0)<1/\alpha$
\ENDFOR
\end{algorithmic}
\end{algorithm}

\paragraph{Why each wealth is deflated by its own cost.} Each wealth must be deflated by its own cost before averaging. Deflating the average of the wealths instead would have to use the largest cost in the grid, which would waste the small bets, the ones that reach large tolerances.

\section{Proofs}\label{app:proofs}

\subsection{Formal model and two facts used throughout}\label{app:model}

\paragraph{Filtrations.} All random objects live on one probability space. The attacker's private randomization is a random element $U$. For attackers in $\classP$ and $\classV$ there is a filtration $(\cG_t)_{t\ge0}$, with $U$ being $\cG_0$-measurable, that has four properties. The observed outcome $W_t$, the corruption indicator $C_t$ and, if record $t$ is not inserted, its uncorrupted value $X_t$ are $\cG_t$-measurable. The indicator $I_t$ that record $t$ is inserted is $\cG_{t-1}$-measurable. The observed $\sigma$-field $\cF_t$, generated by $W_1,\dots,W_t$, satisfies $\cF_t\subseteq\cG_t$. Finally, $W_t=X_t$ whenever $C_t=0$. On $\{I_t=1\}$ we set $X_t=\frac12$ for definiteness; every expression below that involves $X_t$ on an inserted record is multiplied by $1-I_t$ or $1-C_t$, so the choice has no effect. Writing $\1\{\cdot\}$ for the indicator of an event, the null hypothesis of \eqref{eq:null} reads $\E[(X_t-\frac12)\1\{I_t=0\}\mid\cG_{t-1}]\le0$. Every attacker that satisfies these conditions is in $\classV$. Its $C_t$ and $W_t$ may depend on $X_t$. It is in $\classP$ if, in addition, $C_t$ is $\cG_{t-1}$-measurable. This class is slightly larger than the one described in \cref{sec:attackers}, which also fixes the corrupted value in advance. The proof of \cref{thm:validity}(a) does not need that. For attackers in $\classO$ the uncorrupted values form a sequence $(X_u)_{u\ge1}$ whose law does not depend on the attacker, with $\E[X_u\mid X_1,\dots,X_{u-1}]\le\frac12$. The observed record $t$ is either inserted or holds the uncorrupted value $X_{h(t)}$, possibly replaced, where $h(t)$ is the number of non-inserted records among the first $t$. The sequence $(W_t,C_t,I_t)_{t\ge1}$ may be any measurable function of $(X_u)_{u\ge1}$ and $U$ such that $W_t=X_{h(t)}$ whenever $C_t=0$. In every class $N_t=\sum_{s\le t}C_s$.

\paragraph{Bets.} Bet $k$ uses an $\cF_{t-1}$-measurable $\lambda_{k,t}\in[0,\lb_k]$ (constant bets $\lambda_{k,t}=\lb_k$ are a special case). Then $K^{(k)}_t=\prod_{s\le t}\big(1+\lambda_{k,s}(W_s-\frac12)\big)$, and each factor $1+\lambda_{k,t}(x-\frac12)$ lies in $[1-\lb_k/2,\,1+\lb_k/2]$ for $x\in[0,1]$. Let
\[
\Phi_t(b)=\sum_{k=1}^m w_k K^{(k)}_t e^{-b\,c(\lb_k)},\qquad \Bhat_t=\max\{b\in\{0,1,\dots\}:\Phi_t(b)\ge1/\alpha\},
\]
with $\Bhat_t=-1$ if $\Phi_t(0)<1/\alpha$. This is \eqref{eq:bhat-grid}. With $m=1$ and $w_1=1$ it is \eqref{eq:bhat-single}. Because $\Phi_t$ is decreasing in $b$ and the set in the definition is finite when $\Phi_t(0)\ge1/\alpha$ (as $\Phi_t(b)\to0$), $\Bhat_t$ is well defined. The proofs compare the observed wealth with an \emph{uncorrupted wealth}, a wealth computed from uncorrupted values alone. Its form depends on the attacker class.

\begin{lemma}[Reduction to an uncorrupted wealth]\label{lem:reduction}
Suppose there are nonnegative processes $L^{(k)}$ with $K^{(k)}_t\le e^{c(\lb_k)N_t}L^{(k)}_t$ for all $k$ and $t$, pathwise. Then $\{\Bhat_t\ge N_t\}\subseteq\{\sum_k w_kL^{(k)}_t\ge1/\alpha\}$ for every $t$.
\end{lemma}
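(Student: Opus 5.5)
The inclusion is pathwise, so I will fix an outcome $\omega$ and a time $t$ and work on the event $\{\Bhat_t\ge N_t\}$. On this event $\Bhat_t\ge N_t\ge0$, so $\Bhat_t\neq-1$. By the definition of $\Bhat_t$ in \eqref{eq:bhat-grid}, $\Bhat_t$ is then the largest nonnegative integer $b$ with $\Phi_t(b)\ge1/\alpha$. Since $N_t$ is a nonnegative integer with $N_t\le\Bhat_t$, and $\Phi_t$ is nonincreasing in $b$ (each term $w_kK^{(k)}_te^{-bc(\lb_k)}$ is, because $c(\lb_k)>0$ and $K^{(k)}_t\ge0$), we get
\[
\Phi_t(N_t)\ \ge\ \Phi_t(\Bhat_t)\ \ge\ 1/\alpha .
\]
This monotonicity step is the only place the structure of $\Bhat_t$ enters. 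The point to check carefully is that $\Bhat_t\ge N_t$ really yields $\Phi_t(N_t)\ge1/\alpha$ and not only the inequality at $\Bhat_t$. This holds because the maximum is taken over integers and $\Phi_t$ is monotone, so every integer $b\le\Bhat_t$ satisfies the threshold.

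Next I substitute the pathwise domination $K^{(k)}_t\le e^{c(\lb_k)N_t}L^{(k)}_t$ into each term of $\Phi_t(N_t)$. This gives
\[
\Phi_t(N_t)=\sum_k w_kK^{(k)}_te^{-N_tc(\lb_k)}\ \le\ \sum_k w_kL^{(k)}_t .
\]
Here the weights $w_k>0$ matter, so that the inequality is preserved term by term. It also matters that each wealth is deflated by its own cost $c(\lb_k)$, matching the factor $e^{c(\lb_k)N_t}$ in the hypothesis exactly; this is why the paper deflates per gambler. Combining the two displays gives $\sum_k w_kL^{(k)}_t\ge1/\alpha$ at $\omega$, which proves the inclusion.

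There is no real obstacle here. The lemma is deterministic bookkeeping, and the substantive work is deferred to constructing $L^{(k)}$ for each attacker class and showing that $\sum_kw_kL^{(k)}$ is a nonnegative supermartingale, or is dominated by an e-process, with initial value at most $\sum_k w_k\le1$. Ville's inequality then bounds $\P(\exists t:\sum_kw_kL^{(k)}_t\ge1/\alpha)$ by $\alpha$ and yields \cref{thm:validity}. That step lies outside the present statement.
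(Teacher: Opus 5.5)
Your proposal is correct and follows essentially the same argument as the paper: on $\{\Bhat_t\ge N_t\}$ you use $\Bhat_t\ge0$ and the monotonicity of $\Phi_t$ to get $\Phi_t(N_t)\ge\Phi_t(\Bhat_t)\ge1/\alpha$, then apply the pathwise domination term by term to bound $\Phi_t(N_t)$ by $\sum_kw_kL^{(k)}_t$. One small remark: monotonicity of $\Phi_t$ holds for all real $b\ge0$, so the first step does not actually need $N_t$ to be an integer.
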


\begin{proof}
On $\{\Bhat_t\ge N_t\}$ we have $\Bhat_t\ge0$, so $\Phi_t(\Bhat_t)\ge1/\alpha$, and since $\Phi_t$ is decreasing, $\Phi_t(N_t)\ge \Phi_t(\Bhat_t)\ge1/\alpha$. Finally $\Phi_t(N_t)=\sum_kw_kK^{(k)}_te^{-c(\lb_k)N_t}\le\sum_kw_kL^{(k)}_t$.
\end{proof}

\begin{lemma}[Ville's inequality]\label{lem:ville}
If $(L_t)_{t\ge0}$ is a nonnegative supermartingale with respect to some filtration and $L_0\le1$, then $\P(\exists t:L_t\ge1/\alpha)\le\alpha$.
\end{lemma}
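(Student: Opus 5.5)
The statement to prove is Ville's inequality (\cref{lem:ville}). I would take the classical optional-stopping route. Fix $\alpha\in(0,1)$ and the filtration $(\cH_t)_{t\ge0}$ for which $(L_t)$ is a nonnegative supermartingale with $L_0\le1$. Define the stopping time $\tau=\inf\{t\ge0:L_t\ge1/\alpha\}$, with $\inf\emptyset=+\infty$. The event $\{\exists t:L_t\ge1/\alpha\}$ equals $\{\tau<\infty\}$, the increasing union over $n$ of $\{\tau\le n\}$. By continuity from below it suffices to show $\P(\tau\le n)\le\alpha$ for every fixed $n$.

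For fixed $n$, I would use the bounded stopping time $\tau\wedge n$. The stopped process $L_{t\wedge\tau}$ is again a nonnegative supermartingale, since stopping preserves the supermartingale property for discrete time via the usual predictable-sum decomposition. Hence $\E[L_{\tau\wedge n}]\le\E[L_0]\le1$. On $\{\tau\le n\}$ we have $L_{\tau\wedge n}=L_\tau\ge1/\alpha$, and nonnegativity lets us drop the complement. Therefore
\[
1\ \ge\ \E[L_{\tau\wedge n}]\ \ge\ \E\big[L_{\tau}\1\{\tau\le n\}\big]\ \ge\ \tfrac{1}{\alpha}\,\P(\tau\le n),
\]
which gives $\P(\tau\le n)\le\alpha$. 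Letting $n\to\infty$ completes the argument.

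The only step needing care is the inequality $\E[L_{\tau\wedge n}]\le\E[L_0]$. I would verify it directly by writing
\[
L_{\tau\wedge n}-L_0=\sum_{s=1}^n\1\{\tau\ge s\}(L_s-L_{s-1}).
\]
The indicator $\1\{\tau\ge s\}$ is $\cH_{s-1}$-measurable, and $\E[L_s-L_{s-1}\mid\cH_{s-1}]\le0$. Integrability is automatic because each $L_s$ is integrable by the definition of a supermartingale, and the sum is finite. Nothing else is needed: no continuity assumptions, and the time index is discrete, as everywhere in the paper.
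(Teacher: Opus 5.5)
Your optional-stopping argument is correct and complete. The predictable-sum identity gives $\E[L_{\tau\wedge n}]\le\E[L_0]\le1$, Markov's bound on $\{\tau\le n\}$ gives $\P(\tau\le n)\le\alpha$, and continuity from below finishes. The paper gives no proof of its own: it cites the lemma as classical (Ville for nonnegative martingales started at one, Howard et al.\ for the supermartingale form), and your argument is the standard proof behind that citation.
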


This is classical. \citet{ville1939} proved it for nonnegative martingales that start at one. \citet{howard2021} use it in the form above, for nonnegative supermartingales. A weighted sum $\sum_kw_kL^{(k)}_t$ of nonnegative supermartingales with respect to the same filtration, with $\sum_kw_k\le1$ and $L^{(k)}_0=1$, is again a nonnegative supermartingale, with initial value at most $1$.

\subsection{Proof of Proposition~\ref{prop:perstep}}

Under the null every $\Bern(p)$ with $p\le\frac12$ is an allowed center. The distance from the point mass at $1$ to $\Bern(p)$ is $1-p$, which is smallest, $\frac12$, at $p=\frac12$, so a forged win is allowed only if $\varepsilon\ge\frac12$. Conversely, the distance from $\Bern(r)$ to the allowed center $\Bern(\frac12)$ is $|r-\frac12|\le\frac12$ for every $r\in[0,1]$. Hence, if $\varepsilon\ge\frac12$, every conditional law of a binary outcome is allowed at every step, so every law of a binary sequence satisfies the null. A test of level $\alpha$ then rejects with probability at most $\alpha$ under each of them. \hfill$\square$

\subsection{Proof of Theorem~\ref{thm:validity}}

\emph{(a) Class $\classP$, cost $\cins$.} Define $L^{(k)}_t=\prod_{s\le t}\big(1+(1-C_s)\lambda_{k,s}(X_s-\frac12)\big)$, where the factor is $1$ when $C_s=1$. This $L^{(k)}$ is the uncorrupted wealth for $\classP$. It applies the same bets $\lambda_{k,s}$ to the uncorrupted records and skips every corrupted one. Since $C_t$ and $\lambda_{k,t}$ are $\cG_{t-1}$-measurable and $C_t=0$ implies $I_t=0$,
\[
\E[L^{(k)}_t\mid\cG_{t-1}]=L^{(k)}_{t-1}\Big(1+(1-C_t)\lambda_{k,t}\,\E\big[(X_t-\tfrac12)\1\{I_t=0\}\mid\cG_{t-1}\big]\Big)\le L^{(k)}_{t-1},
\]
using $(1-C_t)\1\{I_t=0\}=1-C_t$. So $L^{(k)}$ is a nonnegative supermartingale for $(\cG_t)$ with $L^{(k)}_0=1$. Pathwise, compare the factors of record $s$. If $C_s=0$ then $W_s=X_s$ and they agree. If $C_s=1$, the factor of $K^{(k)}$ is at most $1+\lb_k/2=e^{\cins(\lb_k)}$ and that of $L^{(k)}$ is $1$. Hence $K^{(k)}_t\le e^{\cins(\lb_k)N_t}L^{(k)}_t$. By \cref{lem:reduction,lem:ville}, ${\P(\exists t:\Bhat_t\ge N_t)}\le{\P(\exists t:\sum_kw_kL^{(k)}_t\ge1/\alpha)}\le\alpha$.

\emph{(b) Class $\classV$, cost $\crep$.} Define $L^{(k)}_t=\prod_{s\le t}\big(1+(1-I_s)\lambda_{k,s}(X_s-\frac12)\big)$, which keeps the uncorrupted value of every replaced record. Since $I_t$ and $\lambda_{k,t}$ are $\cG_{t-1}$-measurable, the same computation shows that $L^{(k)}$ is a nonnegative supermartingale for $(\cG_t)$. Pathwise, compare the factors of record $s$. If $C_s=0$ they agree. If $s$ is inserted, the ratio of the factor of $K^{(k)}$ to that of $L^{(k)}$ is at most $1+\lb_k/2\le e^{\crep(\lb_k)}$. If $s$ is replaced, the ratio is at most $(1+\lambda_{k,s}/2)/(1-\lambda_{k,s}/2)\le(1+\lb_k/2)/(1-\lb_k/2)=e^{\crep(\lb_k)}$, because $x\mapsto(1+x/2)/(1-x/2)$ is increasing on $[0,2)$. Hence $K^{(k)}_t\le e^{\crep(\lb_k)N_t}L^{(k)}_t$, and we conclude as in (a).

\emph{(c) Class $\classO$, cost $\crep$, constant bets.} Let $\tilde L^{(k)}_u=\prod_{u'\le u}\big(1+\lb_k(X_{u'}-\frac12)\big)$, a nonnegative supermartingale for the natural filtration of $(X_u)$, and set $L^{(k)}_t=\tilde L^{(k)}_{h(t)}$. Record $t$ of the observed sequence may be inserted. It then contributes a factor at most $1+\lb_k/2\le e^{\crep(\lb_k)}$ to $K^{(k)}_t$ and none to $L^{(k)}_t$. Otherwise it holds the uncorrupted value $X_{h(t)}$, and its observed factor is at most $e^{\crep(\lb_k)}$ times its uncorrupted factor if the value was replaced and equal to it otherwise. Because the bet is constant, the uncorrupted factor does not depend on where the record appears in the observed sequence. Hence $K^{(k)}_t\le e^{\crep(\lb_k)N_t}\tilde L^{(k)}_{h(t)}$ pathwise. By \cref{lem:reduction}, $\{\exists t:\Bhat_t\ge N_t\}\subseteq\{\exists u:\sum_kw_k\tilde L^{(k)}_u\ge1/\alpha\}$, an event that depends on the uncorrupted sequence alone and has probability at most $\alpha$ by \cref{lem:ville}. The argument does not use how the attacker chooses its actions. In particular $U$ may depend on $(X_u)$, and only the law of $(X_u)$ must not depend on the attacker.

\emph{Adaptive bets in (c).} With a bet computed from the observed records, the factor of the uncorrupted value $X_u$ depends on observed records that an offline attacker may have chosen after seeing $X_u$, so $\tilde L^{(k)}$ need not be a supermartingale. We do not claim validity in that case. \hfill$\square$

\subsection{Proof of Theorem~\ref{thm:sharp}}

Throughout, the bet is the constant $\lambda$, and for an integer $N\ge0$, $\Bhat_t\ge N$ if and only if $\log(\alpha K_t)\ge cN$.

\emph{(a)} The attacker inserts $B$ wins as records $1,\dots,B$. Then $K_B=(1+\lambda/2)^B=e^{B\cins}$ and $N_B=B$, so $\Bhat_B\ge N_B$ if and only if $B\cins-\log(1/\alpha)\ge Bc$, that is, $B\ge\log(1/\alpha)/(\cins-c)$. This event does not depend on the uncorrupted values.

\emph{(b)} Let $T_B$ be the position of the $B$th uncorrupted loss and $R_B=T_B-B$ the number of uncorrupted wins before it. The attacker replaces the first $B$ losses, so records $1,\dots,T_B$ are all observed as wins, $K_{T_B}=e^{T_B\cins}$ and $N_{T_B}=B$. A false statement is made at $T_B$ if $T_B\cins-\log(1/\alpha)\ge Bc$, that is, if
\[
R_B\ \ge\ B\,\frac{c-\cins}{\cins}+\frac{\log(1/\alpha)}{\cins}.
\]
Since $R_B$ is a sum of $B$ independent geometric variables with mean $1$, $R_B/B\to1$ almost surely. If $c<2\cins$, then $(c-\cins)/\cins<1$, so the probability of the display tends to one as $B\to\infty$. Since $\cins<2\cins$, this applies in particular to $c=\cins$.

\emph{(c)} The attacker replaces every uncorrupted loss by a win. Then every observed record is a win and $\log K_t=t\cins$. Also $N_t=\Lambda_t$, the number of uncorrupted losses among the first $t$ records. A false statement occurs at $t$ if and only if $S_t:=t\cins-c\Lambda_t\ge\log(1/\alpha)$. The walk $S$ has independent increments equal to $\cins$ or $\cins-c$ with probability $\frac12$ each. If $c=2\cins$, the walk has mean-zero bounded increments and is recurrent, so it reaches $\log(1/\alpha)$ with probability one. If $2\cins<c<\crep$, let $\varphi(\theta)=\frac12(e^{\theta\cins}+e^{\theta(\cins-c)})$. It is convex with $\varphi(0)=1$ and $\varphi'(0)=\cins-c/2<0$, and $\varphi(1)=\frac12\Mp(1+e^{-c})>\frac12\Mp(1+\mm/\Mp)=1$ because $e^{-c}>e^{-\crep}=\mm/\Mp$. So there is a unique $\theta\in(0,1)$ with $\varphi(\theta)=1$, and $e^{\theta S_t}$ is a martingale. The drift of $S$ is negative, so $S_t\to-\infty$ almost surely. Let $\sigma=\inf\{t:S_t\ge\log(1/\alpha)\}$ and write ${t\wedge\sigma=\min(t,\sigma)}$. Since $S_{t\wedge\sigma}<\log(1/\alpha)+\cins$, bounded convergence gives $1=\E[e^{\theta S_\sigma}\1\{\sigma<\infty\}]\le e^{\theta(\log(1/\alpha)+\cins)}\P(\sigma<\infty)$, so $\P(\sigma<\infty)\ge\alpha^{\theta}e^{-\theta\cins}$. This exceeds $\alpha$ if and only if $\alpha<e^{-\theta\cins/(1-\theta)}$.

\emph{(d)} Consider any attacker in $\classO$ that makes no insertions and at most $B$ replacements, and fix the uncorrupted sequence. Because the bet is constant and nothing is inserted, replacing record $s$ multiplies $K_t$, for every $t\ge s$, by the ratio of the new factor to the uncorrupted one. Whatever value in $[0,1]$ the attacker writes, this ratio is at most $e^{\crep}$ if the uncorrupted value was a loss and at most $1$ if it was a win. Let $\Lambda_t$ be the number of uncorrupted losses among the first $t$ records, and $A_t$ the number of these whose value was replaced. Then $\log K_t\le\log\tilde L_t+\crep A_t$, $A_t\le\min(B,\Lambda_t)$ and $N_t\ge A_t$, where $\tilde L_t=\prod_{s\le t}(1+\lambda(X_s-\frac12))$ is the wealth of the uncorrupted sequence, as in the proof of \cref{thm:validity}(c). The attacker of (b) achieves $A_t=N_t=\min(B,\Lambda_t)$ and $\log K_t=\log\tilde L_t+\crep\min(B,\Lambda_t)$ for all $t$, which is the maximum of $K_t$. Moreover, for $c\le\crep$,
\[
\log(\alpha K_t)-cN_t\le\log(\alpha\tilde L_t)+(\crep-c)A_t\le\log(\alpha\tilde L_t)+(\crep-c)\min(B,\Lambda_t),
\]
and the right-hand side is the value of $\log(\alpha K_t)-cN_t$ for the attacker of (b). So whenever any such attacker makes a false statement at time $t$, the attacker of (b) does too, on the same uncorrupted sequence. For the grid certificate, the same computation applies to each bet. The attacker of (b) maximizes every $K^{(k)}_t$ at once. For each $k$ the computation gives $w_kK^{(k)}_te^{-c_kN_t}\le w_k\tilde L^{(k)}_te^{(\crep(\lb_k)-c_k)\min(B,\Lambda_t)}$ because $c_k\le\crep(\lb_k)$, with equality for the attacker of (b). Summing over $k$ shows that $\Phi_t(N_t)$ is largest under the attacker of (b). Since $N_t$ is a nonnegative integer, $\Phi_t(N_t)\ge1/\alpha$ exactly when $\Bhat_t\ge N_t$. So the attacker of (b) again makes a false statement whenever any such attacker does. \hfill$\square$

\subsection{Proof of Theorem~\ref{thm:upper}}

\emph{The growth rate.} With $\P(W=1)=a$, $\P(W=0)=\ell$ and $\P(W=\frac12)=1-a-\ell$,
\[
g(\lambda)=a\log(1+\tfrac\lambda2)+\ell\log(1-\tfrac\lambda2)
=\tfrac{a+\ell}{2}\log(1-\tfrac{\lambda^2}{4})+\tfrac{a-\ell}{2}\log\tfrac{1+\lambda/2}{1-\lambda/2}
=\delta\,\crep(\lambda)-\tfrac\pi2\kappa(\lambda),
\]
which is \eqref{eq:growth}.

\emph{The limit of \eqref{eq:rate}.} By the strong law of large numbers, $t^{-1}\log K^{(k)}_t\to g(\lb_k)$ almost surely for each $k$. Two bounds sandwich $\Bhat_t$. First, $\Phi_t(b)\ge w_kK^{(k)}_te^{-bc(\lb_k)}$, so $\Bhat_t\ge\max_k\lfloor(\log K^{(k)}_t+\log(\alpha w_k))/c(\lb_k)\rfloor$. Second, $\Phi_t(b)\le\max_kK^{(k)}_te^{-bc(\lb_k)}$ because $\sum_kw_k\le1$, so $\Bhat_t\le\max\{-1,\max_k\lfloor(\log K^{(k)}_t+\log\alpha)/c(\lb_k)\rfloor\}$. Dividing by $t$, the lower bound converges to $\max_kg(\lb_k)/c(\lb_k)$ and the upper bound to $\max\{0,\max_kg(\lb_k)/c(\lb_k)\}=\rho$. Together with $\Bhat_t\ge-1$, this gives $\Bhat_t/t\to\rho$.

\emph{The bound of \eqref{eq:tau-upper}.} Fix $k$ with $g(\lb_k)>0$ and let $z_k=\log\frac1{\alpha w_k}+b\,c(\lb_k)$ and $\sigma_k={\inf\{t:\log K^{(k)}_t\ge z_k\}}$. At $\sigma_k$, $\Phi_{\sigma_k}(b)\ge w_kK^{(k)}_{\sigma_k}e^{-bc(\lb_k)}\ge1/\alpha$, so $\tau_b\le\sigma_k$. (For an integer $b$, $\Phi_t(b)\ge1/\alpha$ is equivalent to $\Bhat_t\ge b$.) The log-wealth $\log K^{(k)}_t$ is a random walk with independent, bounded increments of mean $g(\lb_k)>0$, bounded above by $\cins(\lb_k)$. By Wald's identity at the bounded stopping time $\sigma_k\wedge n$, $g(\lb_k)\E(\sigma_k\wedge n)=\E\log K^{(k)}_{\sigma_k\wedge n}\le z_k+\cins(\lb_k)$, since before $\sigma_k$ the walk is below $z_k$ and it crosses $z_k$ by at most one increment. Letting $n\to\infty$ (monotone convergence) gives $\E\,\sigma_k\le(z_k+\cins(\lb_k))/g(\lb_k)$. Minimizing over $k$ gives \eqref{eq:tau-upper}.

\begin{corollary}[Explicit constants]\label{cor:explicit}
Under the conditions of \cref{thm:upper}, if $\delta\le\frac13$ and the grid contains a bet $\lb_k\in[3\delta/\sqrt2,3\delta]$, then for every integer $b\ge0$,
$\E\,\tau_b\le\delta^{-2}\log\frac{1}{\alpha w_k}+1.5(b+1)/\delta$ with the insertion cost and
$\E\,\tau_b\le\delta^{-2}\log\frac{1}{\alpha w_k}+(4b+1.5)/\delta$ with the replacement cost.
\end{corollary}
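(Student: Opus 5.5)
The plan is to specialise \eqref{eq:tau-upper} of \cref{thm:upper} to the single grid index $k$ with $\lb_k\in[3\delta/\sqrt2,3\delta]$. Writing $\lambda=\lb_k$, this gives
\[
\E\,\tau_b\le\frac{\log\frac1{\alpha w_k}}{g(\lambda)}+\frac{b\,c(\lambda)+\cins(\lambda)}{g(\lambda)} .
\]
I then need two elementary bounds: $g(\lambda)\ge\delta^2$ for the first term, and a bound on the ratio $c/g$ of order $1/\delta$ for the second.

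The crude ratio bounds at the end of \cref{thm:upper}, $g/\cins\ge2\delta-\pi\lambda/2$, are too loose at $\lambda=3\delta$: they give $2(b+1)/\delta$ rather than $1.5(b+1)/\delta$. So I will work directly with the decomposition \eqref{eq:growth}, $g=\delta\crep-\frac\pi2\kappa$. First I lower-bound $\crep$ by $2\,\mathrm{artanh}(\lambda/2)\ge\lambda$. Next I upper-bound $\kappa(\lambda)=-\log(1-\lambda^2/4)$ by $(\lambda^2/4)/(1-\lambda^2/4)\le\lambda^2/3$, using $\lambda\le3\delta\le1$. Using $\pi\le1$, this gives
\[
g(\lambda)\ge\delta\lambda-\lambda^2/6 .
\]
This is concave in $\lambda$, so on $[3\delta/\sqrt2,3\delta]$ it is minimised at an endpoint. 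There it is at least about $1.37\delta^2$ and $1.5\delta^2$ respectively, both at least $\delta^2$. This handles the $\delta^{-2}\log\frac1{\alpha w_k}$ term for both costs.

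For the insertion cost, the second term is $(b+1)\cins/g$. With $\cins=\log(1+\lambda/2)\le\lambda/2$ and the bound on $g$, it is at most $(b+1)/(2\delta-\lambda/3)$. Since $\lambda\le3\delta$, the denominator is at least $\delta$, so the term is at most $(b+1)/\delta\le1.5(b+1)/\delta$.

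For the replacement cost, I also need an upper bound on $\crep$. The series of $\mathrm{artanh}$ gives $\crep\le\lambda+\lambda^3/(12(1-\lambda^2/4))\le\tfrac{10}{9}\lambda$ for $\lambda\le1$. Then
\[
\frac{b\crep+\cins}{g}\le\frac{\tfrac{10}{9}b+\tfrac12}{\delta-\lambda/6}\le\frac{\tfrac{10}{9}b+\tfrac12}{\delta/2}\le\frac{4b+1.5}{\delta}.
\]

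The step I expect to need the most care is the upper bound on $\crep$ together with the endpoint check for $g\ge\delta^2$ at the lower end $\lambda=3\delta/\sqrt2$. That is where the constants are tightest, so I would verify these numerically for $\delta\le1/3$ before finalising. The remaining steps are direct substitutions.
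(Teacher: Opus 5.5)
Your proof is correct and follows essentially the paper's route: specialise \eqref{eq:tau-upper} to the bet $\lb_k$, use $\crep\ge\lambda$, $\kappa\le\lambda^2/3$ and $\pi\le1$ to get $g(\lb_k)\ge\lb_k(\delta-\lb_k/6)$, and bound $\cins$ and $\crep$ linearly in $\lambda$. The paper bounds the numerator and the denominator separately, via $g\ge\lb_k\delta/2\ge1.06\,\delta^2$, $\cins\le1.5\delta$ and $\crep\le\tfrac43\lb_k\le4\delta$. Neither your concavity and endpoint check nor the sharper bound $\crep\le\tfrac{10}{9}\lambda$ is needed, although your ratio bounds give slightly better constants.
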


\emph{Explicit bounds, and proof of \cref{cor:explicit}.} For $\lambda\in(0,1]$, using $\crep(\lambda)=2\,\mathrm{artanh}(\lambda/2)$, $x\le\mathrm{artanh}(x)\le x/(1-x^2)$ and $x/(1+x)\le\log(1+x)\le x$,
\[
\lambda\le\crep(\lambda)\le\frac{\lambda}{1-\lambda^2/4}\le\frac43\lambda,\qquad
\frac\lambda3\le\frac{\lambda/2}{1+\lambda/2}\le\cins(\lambda)\le\frac\lambda2,\qquad
\kappa(\lambda)\le\frac{\lambda^2/4}{1-\lambda^2/4}\le\frac{\lambda^2}{3}.
\]
Using $\crep=2\cins+\kappa\ge2\cins$,
\[
\frac{g}{\cins}\ge2\delta-\frac\pi2\cdot\frac{\kappa}{\cins}\ge2\delta-\frac{\pi\lambda}{2},\qquad
\frac{g}{\crep}=\delta-\frac\pi2\cdot\frac{\kappa}{\crep}\ge\delta-\frac{\pi\lambda}{6}.
\]
Also $g(\lambda)\ge\delta\lambda-\frac\pi2\cdot\frac{\lambda^2}{3}\ge\lambda(\delta-\lambda/6)$. If $\lb_k\in[3\delta/\sqrt2,3\delta]$ and $3\delta\le1$, then $\delta-\lb_k/6\ge\delta/2$, so $g(\lb_k)\ge\lb_k\delta/2\ge\frac{3}{2\sqrt2}\delta^2\ge1.06\,\delta^2$, while $\cins(\lb_k)\le1.5\delta$ and $\crep(\lb_k)\le4\delta$. Substituting into \eqref{eq:tau-upper} gives $\E\,\tau_b\le(\log\frac1{\alpha w_k}+1.5\delta b+1.5\delta)/(1.06\delta^2)$ with the insertion cost and $\E\,\tau_b\le(\log\frac1{\alpha w_k}+4\delta b+1.5\delta)/(1.06\delta^2)$ with the replacement cost, which are at most the stated bounds. \hfill$\square$

\subsection{Proof of Theorem~\ref{thm:lower} and of the optimal rate}\label{app:lower}

Let $E=\{\max_{t\le n}\Bhat_t\ge b\}$. By \cref{def:budget} it is an event determined by the first $n$ observed records, and independent auxiliary randomization in $\Bhat_t$ would change nothing below.

\emph{The coupling bound of \eqref{eq:lower-coupling}, insertion.} Let $\zeta_1,\zeta_2,\dots$ be independent $\Bern(2\delta)$ and $X'_1,X'_2,\dots$ independent $\Bern(\frac12)$, all independent. Define $Y_t=\zeta_t+(1-\zeta_t)X'_t$. Then $Y_1,Y_2,\dots$ are independent $\Bern(\frac12+\delta)$, since $\P(Y_t=1)=2\delta+(1-2\delta)/2$. Write $\P_\delta$ and $\E_\delta$ for probability and expectation when the records are independent $\Bern(\frac12+\delta)$ and there is no attack. The case $\delta=0$ is $\P_0$. Now define an attacker in $\classP$ in a null world, that is, one in which $H_0^{ij}$ holds. At position $t$, if $\zeta_t=1$ and fewer than $\lfloor b\rfloor$ insertions have been made, it inserts a win. Otherwise the next uncorrupted record appears, and we let its value be $X'_t$. The uncorrupted records are then independent $\Bern(\frac12)$, because the positions at which they appear are determined by $\zeta$, which is independent of $X'$. So $H_0^{ij}$ holds. The attacker decides before each record using only its own coins, so it is in $\classP$, and it never corrupts more than $\lfloor b\rfloor$ records. On the event $\{\sum_{t\le n}\zeta_t\le\lfloor b\rfloor\}$, the first $n$ observed records coincide with $Y_1,\dots,Y_n$. Therefore
\[
\P_\delta(E)=\P\big((Y_1,\dots,Y_n)\in E\big)\le\P\big((W_1,\dots,W_n)\in E\big)+\P\big(\textstyle\sum_{t\le n}\zeta_t>\lfloor b\rfloor\big)\le\alpha+\P\big(\Bin(n,2\delta)>b\big),
\]
where the second inequality uses that on $E$ we have $\Bhat_t\ge b\ge N_t$ for some $t\le n$, which has probability at most $\alpha$ by validity, and that $\sum_{t\le n}\zeta_t\sim\Bin(n,2\delta)$ is an integer, so it exceeds $\lfloor b\rfloor$ exactly when it exceeds $b$.

\emph{Replacement.} Take $\zeta_t\sim\Bern(2\delta)$ and $X'_t\sim\Bern(\frac12)$ as before and $Y_t=X'_t+(1-X'_t)\zeta_t$, again independent $\Bern(\frac12+\delta)$. The attacker observes each uncorrupted value $X'_t$ and, if $X'_t=0$, $\zeta_t=1$ and fewer than $\lfloor b\rfloor$ replacements have been made, replaces it by a win. This attacker is in $\classV$. Without the limit of $\lfloor b\rfloor$ replacements, the attacker would make $\sum_{t\le n}(1-X'_t)\zeta_t\sim\Bin(n,\delta)$ replacements among the first $n$ records, and the same argument gives \eqref{eq:lower-coupling} with $q=\delta$.

\emph{The information bound of \eqref{eq:lower-kl}.} Let $E_0=\{\max_{t\le n}\Bhat_t\ge0\}$, an event determined by the first $n$ records. With no attack, $N_t=0$, so validity gives $x_0:=\P_0(E_0)\le\alpha$. Let $x=\P_\delta(E_0)$. By the data-processing inequality for the laws of the first $n$ records under $\P_\delta$ and $\P_0$, $\mathrm{kl}(x,x_0)\le n\KL_\delta$, where $\mathrm{kl}(p,r)=p\log\frac pr+(1-p)\log\frac{1-p}{1-r}$. If $x\ge\alpha$, then $\mathrm{kl}(x,x_0)\ge\mathrm{kl}(x,\alpha)\ge x\log\frac1\alpha-\log2$, which gives $x\le(n\KL_\delta+\log2)/\log(1/\alpha)$. If $x<\alpha$, the bound holds because $\log2/\log(1/\alpha)\ge\alpha$ for all $\alpha\in(0,1)$ (as $\alpha\log(1/\alpha)\le e^{-1}<\log2$). Finally, $\KL_\delta$ is at most the $\chi^2$-divergence, $\chi^2=\frac{\delta^2}{1/2}+\frac{\delta^2}{1/2}=4\delta^2$.

\emph{Consequences.} For integer $b$, Markov's inequality gives $\P(\Bin(n,q)>b)=\P(\Bin(n,q)\ge b+1)\le nq/(b+1)$, so $\P_\delta(E)\ge1-\beta$ forces $nq/(b+1)\ge1-\alpha-\beta$. Since $E\subseteq E_0$ for $b\ge0$, \eqref{eq:lower-kl} forces $4n\delta^2+\log2\ge(1-\beta)\log(1/\alpha)$. By Hoeffding's inequality, $\P(\Bin(n,q)>(q+\eta)n)\le e^{-2n\eta^2}$, and $\{\Bhat_n\ge(q+\eta)n\}\subseteq\{\max_{t\le n}\Bhat_t\ge(q+\eta)n\}$. Hence \eqref{eq:lower-coupling} with $b=(q+\eta)n$ gives ${\P(\Bhat_n\ge(q+\eta)n)}\le\alpha+e^{-2n\eta^2}$.

\begin{corollary}[Lower bound on the mean time to certify]\label{cor:lower-mean}
Under the conditions of \cref{thm:lower}, for every integer $b\ge0$,
\[
\E\,\tau_b\ \ge\ \max\Big\{\frac{\log(1/\alpha)}{\KL_\delta},\ \sup_{n\ge1}\,n\big(1-\alpha-\P(\Bin(n,q)>b)\big)\Big\}.
\]
\end{corollary}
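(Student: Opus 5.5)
The corollary has two lower bounds inside the maximum, and I will prove each one separately. Both work in the no-attack world $\P_\delta$, with $\tau_b=\inf\{t:\Bhat_t\ge b\}$. The basic identity is $\{\tau_b\le n\}=\{\max_{t\le n}\Bhat_t\ge b\}$.

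\emph{The second term.} This follows from \eqref{eq:lower-coupling} and Markov's inequality in the form $\E\,\tau_b\ge n\,\P(\tau_b>n)$. For any $n\ge1$,
\[
\P_\delta(\tau_b>n)=1-\P_\delta\big(\max_{t\le n}\Bhat_t\ge b\big)\ge1-\alpha-\P\big(\Bin(n,q)>b\big).
\]
Multiplying by $n$ and taking the supremum over $n$ gives the bound. If $\tau_b=\infty$ with positive probability, the statement is trivial.

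\emph{The first term.} This is the main step. It is the standard sequential-testing lower bound via Wald's identity for the log-likelihood ratio. If $\P_\delta(\tau_b<\infty)<1$, then $\E\,\tau_b=\infty$ and there is nothing to prove, so assume $\tau_b<\infty$ almost surely under $\P_\delta$.

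Since $b\ge0$, we have $\tau_b\ge\tau_0$, so it suffices to bound $\E_\delta\tau_0$. Two facts set up the argument:
\begin{itemize}
\item[(i)] $\tau_0$ is a stopping time for the observed filtration.
\item[(ii)] With no attack, $N_t=0$, so validity gives $\P_0(\tau_0<\infty)\le\alpha$.
\end{itemize}

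Apply the data-processing inequality to the laws of the stopped sequence $(W_1,\dots,W_{\tau_0\wedge n})$ under $\P_\delta$ and $\P_0$. Wald's identity for the log-likelihood ratio at the bounded stopping time $\tau_0\wedge n$ bounds their KL divergence by $\KL_\delta\,\E_\delta[\tau_0\wedge n]$. Apply this to the event $A_n=\{\tau_0\le n\}$ to get
\[
\KL_\delta\,\E_\delta[\tau_0\wedge n]\ge\mathrm{kl}\big(\P_\delta(A_n),\P_0(A_n)\big).
\]
Now let $n\to\infty$. Under $\P_\delta$ we have $\P_\delta(A_n)\to1$, while $\P_0(A_n)\le\alpha$. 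Since $\mathrm{kl}(p,r)$ is decreasing in $r$ for $r\le p$, the right side tends to at least $\mathrm{kl}(1,\alpha)=\log(1/\alpha)$. The left side increases to $\KL_\delta\,\E_\delta\tau_0$ by monotone convergence. Hence $\E\,\tau_b\ge\E\,\tau_0\ge\log(1/\alpha)/\KL_\delta$.

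\emph{Expected obstacle.} The only delicate point is the limit step. It uses the continuity of $\mathrm{kl}$ as $p\to1$, namely $\mathrm{kl}(p,\alpha)\ge p\log(1/\alpha)-\log2$ with the refinement $(1-p)\log(1-p)\to0$, and monotonicity in the second argument. I expect this to go through directly. Combining the two bounds gives the maximum.
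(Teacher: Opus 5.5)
Your proposal is correct and follows essentially the same route as the paper's proof. The second term comes from $\E\,\tau_b\ge n\,\P(\tau_b>n)$ and \eqref{eq:lower-coupling}; the first comes from $\tau_b\ge\tau_0$, Wald's identity for the log-likelihood ratio at $\tau_0\wedge n$ with data processing on $\{\tau_0\le n\}$, and the limit $\mathrm{kl}(1,\alpha)=\log(1/\alpha)$ via monotonicity of $\mathrm{kl}$ in its second argument. One small note: the bound $\mathrm{kl}(p,\alpha)\ge p\log(1/\alpha)-\log2$ alone would lose $\log2$ in the limit. The step you actually need is the direct continuity of $\mathrm{kl}(\cdot,\alpha)$ at $p=1$, from $(1-p)\log(1-p)\to0$, as your refinement indicates.
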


\begin{proof}
The second term follows from $\E_\delta\,\tau_b\ge n\P_\delta(\tau_b>n)$ and \eqref{eq:lower-coupling}. For the first, $\tau_b\ge\tau_0$ because $b\ge0$, and we may assume $\E_\delta\,\tau_0<\infty$. With no attack, $\tau_0$ is the rejection time of a test of $\Bern(\frac12)$ with $\P_0(\tau_0<\infty)\le\alpha$. For each $n$, $\tau_0\wedge n$ is a bounded stopping time, so Wald's identity for the log-likelihood ratio and the data-processing inequality applied to the event $\{\tau_0\le n\}\in\cF_{\tau_0\wedge n}$ give $\KL_\delta\,\E_\delta(\tau_0\wedge n)\ge\mathrm{kl}\big(\P_\delta(\tau_0\le n),\P_0(\tau_0\le n)\big)$ \citep[Lemma~1]{kaufmann2016}. As $n\to\infty$, $\P_\delta(\tau_0\le n)\to1$ and $\P_0(\tau_0\le n)\le\alpha$, so the limit inferior of the right-hand side is at least $\mathrm{kl}(1,\alpha)=\log(1/\alpha)$ (using that $\mathrm{kl}(p,r)$ is decreasing in $r\le p$), and the left-hand side increases to $\KL_\delta\,\E_\delta\,\tau_0$.
\end{proof}

\emph{Optimal rate.} With $\pi=1$ and the smallest bet $\lb_m\le1$, \cref{thm:upper} gives $\rho\ge g(\lb_m)/\cins(\lb_m)\ge2\delta-\lb_m/2$ for the insertion cost and $\rho\ge\delta-\lb_m/6$ for the replacement cost. The upper limits $\rho\le q$ follow from \cref{thm:lower}, because the tolerance is valid (\cref{thm:validity}). Indeed, if $\rho>q$, take $\eta\in(0,\rho-q)$ and $b=\lceil(q+\eta)n\rceil$. Since $\Bhat_n/n\to\rho$ almost surely, the left-hand side of \eqref{eq:lower-coupling} is at least $\P(\Bhat_n\ge b)\to1$. By Hoeffding's inequality its right-hand side is at most $\alpha+e^{-2n\eta^2}\to\alpha<1$, a contradiction. The coupling bound of \eqref{eq:lower-coupling} also holds with ties, and with it the limits $\rho\le q$, when $W$ is $1$, $0$ or $\frac12$ with probabilities $\pi/2+\delta$, $\pi/2-\delta$ and $1-\pi$. For insertion, let $X'_t$ have the null law that is $1$ or $0$ with probability $(\pi/2-\delta)/(1-2\delta)$ each and $\frac12$ otherwise, and keep $\zeta_t\sim\Bern(2\delta)$. Then $\zeta_t+(1-\zeta_t)X'_t$ has the law of $W$. For replacement, let $X'_t$ be $1$ or $0$ with probability $\pi/2$ each and $\frac12$ otherwise, and let the attacker replace a loss by a win when $\zeta_t=1$, now with $\zeta_t\sim\Bern(2\delta/\pi)$. Without the cap of $\lfloor b\rfloor$ replacements their number is $\Bin(n,\delta)$, and the observed law is that of $W$. For the insertion cost this can also be checked directly, since $g/\cins\le2\delta$ is equivalent to $\delta\kappa\le\pi\kappa/2$, which holds because $\ell\ge0$. \hfill$\square$

\subsection{Proof of Proposition~\ref{prop:selection}}

\emph{Validity.} Let $\mathcal N$ be the set of null variants. For $v\in\mathcal N$, let $K^{(v,k)}$ be the wealth of variant $v$ with bet $k$ and $N^{(v)}_t$ the number of corrupted records of variant $v$. Let $L^{(v,k)}$ be the uncorrupted wealth from the proof of \cref{thm:validity}(a) or (b), with factor $1$ on records of other variants, so that $K^{(v,k)}_t\le e^{c(\lb_k)N^{(v)}_t}L^{(v,k)}_t$. The variant of each record is fixed before its value is produced, and the null of $v$ holds given all records. So each $L^{(v,k)}$ is a nonnegative supermartingale for the global filtration, that of all records of all variants. A statement about the published variant is false at time $t$ only if $\hat v_t\in\mathcal N$ and $\sum_kw_kK^{(\hat v_t,k)}_te^{-c(\lb_k)N^{(\hat v_t)}_t}\ge V/\alpha$ (as in \cref{lem:reduction}), which implies $\bar L_t:=\frac1V\sum_{v\in\mathcal N}\sum_kw_kL^{(v,k)}_t\ge\frac1V\sum_kw_kL^{(\hat v_t,k)}_t\ge1/\alpha$. The process $\bar L$ is a nonnegative supermartingale with $\bar L_0\le|\mathcal N|/V\le1$, so \cref{lem:ville} applies. The choice of $\hat v_t$ plays no role.

\emph{Adaptive creation.} The variants need not exist from the start. Fix $V$ in advance and let the developer create and retire variants as results come in, with at most $V$ of them ever receiving a record, indexed by order of creation. Let $A_r$ be the event that the $r$th created variant is one of the null variants in $\mathcal N$. Its identity is known at creation, before its first record, so $A_r$ and, at each of its records, the variant of the record are fixed before the value is produced. Define $L^{(r,k)}$ as the uncorrupted wealth of the $r$th created variant on $A_r$ and as the constant $1$ on the complement; it equals $1$ until the variant receives its first record. On $A_r$ the null of that variant holds given the records of all variants, so each factor of $L^{(r,k)}$ has conditional expectation at most $1$ given the past. Each $L^{(r,k)}$ is then a nonnegative supermartingale for the global filtration, $\bar L_t=\frac1V\sum_{r\le V}\sum_kw_kL^{(r,k)}_t$ has $\bar L_0\le1$, and the rest of the argument is unchanged. The bound $V$ must be fixed in advance.

\emph{Necessity.} Let $K$ be the wealth of one variant with constant bet $\lambda$ and independent $\Bern(\frac12)$ records, and $y>0$. Then $K$ is a martingale, and $\log K_t$ is a random walk with mean increment $\frac12\log(1-\lambda^2/4)<0$, so $K_t\to0$ almost surely. Let $\sigma=\inf\{t\ge1:K_t\ge y\}$ and $\bar y=\max(y,1)$. Since $K_{t\wedge\sigma}\le\bar y\Mp$, bounded convergence gives $1=\E[K_\sigma\1\{\sigma<\infty\}]\le\bar y\Mp\,\P(\sigma<\infty)$, so $\P(\sup_{t\ge1}K_t\ge y)\ge1/(\bar y\Mp)$. Take $y=V'/\alpha$. If $V'<\alpha$, then $\bar y=1$ and a single variant already reaches the threshold with probability at least $1/\Mp>\frac12\ge\alpha$, so assume $V'\ge\alpha$. Publishing the variant whose wealth first reaches $V'/\alpha$ makes a false statement if some variant reaches it. With $V$ independent variants this has probability at least $1-(1-\alpha/(V'\Mp))^V\ge1-e^{-V\alpha/(V'\Mp)}$. If $V'\le V/(2\Mp)$, this is at least $1-e^{-2\alpha}>\alpha$ for $\alpha\le\frac12$. \hfill$\square$

\subsection{Proof of Proposition~\ref{prop:closed}}

Let the time $t$ run over all records on all pairs, and let $(\cG_t)$ be the global filtration. For each ordered pair $(i,j)$ and bet $k$, let $K^{(ij,k)}$ be the observed wealth of bet $k$ on the records of $(i,j)$. Define the uncorrupted wealth $L^{(ij,k)}$ as in the proof of \cref{thm:validity}(a) or (b), with factor $1$ on records that do not concern $(i,j)$. By the two assumptions of the proposition, each is a nonnegative supermartingale for $(\cG_t)$ when $H_0^{ij}$ holds. As in the proof of \cref{thm:validity}, $K^{(ij,k)}_t\le e^{c(\lb_k)N^{ij}_t}L^{(ij,k)}_t\le e^{c(\lb_k)N_t}L^{(ij,k)}_t$ pathwise, where $N^{ij}_t\le N_t$ counts corrupted records on $(i,j)$. The set $S^\ast$ is a property of the law of the data, not random. If $S^\ast=\emptyset$, no joint claim is false. Otherwise, suppose a joint claim $D$ is false at time $t$ and $\Bhat_t(D)\ge N_t$. Then $S^\ast\cap D\neq\emptyset$ and $S^\ast\in\mathcal S$, so $|S^\ast|^{-1}\sum_{(i,j)\in S^\ast}\Phi^{ij}_t(\Bhat_t(D))\ge1/\alpha$. Each $\Phi^{ij}_t$ is decreasing, so the same holds with $N_t$ in place of $\Bhat_t(D)$, and therefore $\bar L_t:=|S^\ast|^{-1}\sum_{(i,j)\in S^\ast}\sum_kw_kL^{(ij,k)}_t\ge1/\alpha$. The process $\bar L$ is a nonnegative supermartingale for $(\cG_t)$ with $\bar L_0\le1$, and it does not depend on $D$. \Cref{lem:ville} bounds the probability that it ever reaches $1/\alpha$ by $\alpha$. \hfill$\square$

On a static benchmark the records do not arrive one pair at a time. Each item scores every pair at once, and the records of one item on two pairs that share a model are dependent. The following corollary covers this case. It conditions only on earlier items, so it does not order the pairs within an item, and it counts a corrupted item once however many pairs it changes. Let $\cH_{t-1}$ be the $\sigma$-field generated by the attacker's randomization, by all variables of the first $t-1$ items (observed outcomes, indicators and uncorrupted values, as for $\cG_{t-1}$) and by $I_t$.

\begin{corollary}[Simultaneous certificates with item-level records]\label{cor:items}
Suppose that at each time $t$ one item arrives and produces a record $W^{ij}_t$, with uncorrupted value $X^{ij}_t$, for every ordered pair $(i,j)$ of a set of models fixed in advance. Let $I_t$ indicate that the item is inserted, let $C_t=1$ if it is inserted or any of its records is replaced, and let $N_t=\sum_{s\le t}C_s$ count the corrupted items. Suppose that $\E[X^{ij}_t\mid\cH_{t-1}]\le\frac12$ on $\{I_t=0\}$ for every pair $(i,j)\in S^\ast$. Then the conclusion of \cref{prop:closed} holds with this $N_t$, with the cost $\cins$ if every $C_t$ is $\cH_{t-1}$-measurable and with the cost $\crep$ in general.
\end{corollary}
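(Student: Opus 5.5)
The plan is to repeat the proof of \cref{prop:closed}, with items as the time index and the item-level filtration $(\cH_t)$ in place of $(\cG_t)$. First I fix a filtration. Let $\cH_t$ be generated by $\cH_{t-1}$, all variables of item $t$ and $I_{t+1}$. This matches the definition in the statement, and it makes $\cH_{t-1}$ contain $I_t$ and, in the predictable case, $C_t$. For each pair $(i,j)$ and bet $k$, the observed wealth is
\[
K^{(ij,k)}_t=\prod_{s\le t}\big(1+\lambda_{k,s}(W^{ij}_s-\tfrac12)\big),
\]
with bets $\cF_{s-1}$-measurable and $\cF_{s-1}\subseteq\cH_{s-1}$. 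I define the uncorrupted wealth $L^{(ij,k)}$ by one of two rules. With the cost $\cins$, the factor is $1+(1-C_s)\lambda_{k,s}(X^{ij}_s-\frac12)$, which skips every corrupted item. With the cost $\crep$, the factor is $1+(1-I_s)\lambda_{k,s}(X^{ij}_s-\frac12)$, which keeps the hidden values of replaced items.

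The first step is the supermartingale property for $(i,j)\in S^\ast$. In the $\cins$ case, $C_t$ and $\lambda_{k,t}$ are $\cH_{t-1}$-measurable and $(1-C_t)\1\{I_t=0\}=1-C_t$. Taking out what is known then gives
\[
\E[L^{(ij,k)}_t\mid\cH_{t-1}]=L^{(ij,k)}_{t-1}\Big(1+(1-C_t)\lambda_{k,t}\,\E\big[(X^{ij}_t-\tfrac12)\1\{I_t=0\}\mid\cH_{t-1}\big]\Big)\le L^{(ij,k)}_{t-1}.
\]
This holds because the hypothesis gives $\E[X^{ij}_t\mid\cH_{t-1}]\le\frac12$ on $\{I_t=0\}$, and $I_t$ is $\cH_{t-1}$-measurable. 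The $\crep$ case is the same computation with $I_t$ in place of $C_t$, and it needs no measurability of $C_t$. Both processes are nonnegative because every factor lies in $[1-\lb_k/2,1+\lb_k/2]$. The hypothesis is imposed jointly for all pairs given only earlier items, so records of other pairs within the same item never need to be conditioned on. This is why the dependence between pairs sharing a model causes no trouble.

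The second step is the pathwise comparison $K^{(ij,k)}_t\le e^{c(\lb_k)N_t}L^{(ij,k)}_t$, with $N_t$ counting corrupted items. On an uncorrupted item, $W^{ij}_s=X^{ij}_s$ for every pair and the factors agree. On a corrupted item, the factor ratio for pair $(i,j)$ is at most $e^{c(\lb_k)}$, exactly as in the proof of \cref{thm:validity}(a),(b). One item raises $N_t$ by one and multiplies each pair's wealth ratio by at most one factor $e^{c(\lb_k)}$. So the bound holds with the item count, and this step is where counting a corrupted item once is justified. In the $\cins$ case, a replaced pair value is just one more corrupted factor of at most $1+\lb_k/2$ against a skipped factor $1$, which is also where the bound $\cins$ would fail if the corruption were value-dependent.

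The conclusion then follows verbatim from the proof of \cref{prop:closed}. Suppose a false joint claim $D$ has $\Bhat_t(D)\ge N_t$. Then $S^\ast\cap D\ne\emptyset$, and monotonicity of each $\Phi^{ij}_t$ gives $\bar L_t:=|S^\ast|^{-1}\sum_{(i,j)\in S^\ast}\sum_kw_kL^{(ij,k)}_t\ge1/\alpha$. Here $\bar L$ is a nonnegative $(\cH_t)$-supermartingale with $\bar L_0\le1$, so \cref{lem:ville} bounds the probability by $\alpha$. I expect the only delicate point to be the filtration bookkeeping: checking that $\cH_t$ is increasing and contains $I_{t+1}$, $C_t$ and the bets. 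Everything else transfers directly.
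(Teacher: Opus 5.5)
Your proposal is correct and follows the paper's proof. You use the same two uncorrupted wealths: one skips items with $C_s=1$ for $\cins$, the other keeps the hidden values of non-inserted items for $\crep$. You then show each is a supermartingale for the single filtration $(\cH_t)$, simultaneously for all pairs in $S^\ast$, establish the item-level pathwise bound $K^{(ij,k)}_t\le e^{c(\lb_k)N_t}L^{(ij,k)}_t$, and finish with the closed-testing argument of \cref{prop:closed} verbatim.

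One side remark is slightly misplaced. In the $\cins$ case the pathwise bound still holds under value-dependent corruption. What breaks is the supermartingale step, because $1-C_t$ is then no longer $\cH_{t-1}$-measurable.
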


\begin{proof}
Fix a bet $k$ and a pair $(i,j)\in S^\ast$. Define the uncorrupted wealth $L^{(ij,k)}$ from the records of $(i,j)$ as in the proof of \cref{thm:validity}(a) for the cost $\cins$ and (b) for the cost $\crep$, with the item-level variables $C_t$ and $I_t$. The variables $I_t$ and $\lambda_{k,t}$ are $\cH_{t-1}$-measurable, and so is $C_t$ in the first case, and the null of $(i,j)$ holds given $\cH_{t-1}$. The computation in that proof therefore shows that $L^{(ij,k)}$ is a nonnegative supermartingale for $(\cH_t)$, and this holds for every pair in $S^\ast$ with respect to the same filtration. Pathwise, at an item with $C_t=0$ the factors of $K^{(ij,k)}$ and $L^{(ij,k)}$ agree for every pair. At an item with $C_t=1$, the factor of $K^{(ij,k)}$ is at most $e^{c(\lb_k)}$ times that of $L^{(ij,k)}$. With the cost $\cins$ this holds because $L^{(ij,k)}$ skips the whole item, so its factor is $1$, while every factor of $K^{(ij,k)}$ is at most $1+\lb_k/2$. With the cost $\crep$ it is the factor comparison in the proof of \cref{thm:validity}(b), and the two factors agree when the record of $(i,j)$ itself is uncorrupted. Hence $K^{(ij,k)}_t\le e^{c(\lb_k)N_t}L^{(ij,k)}_t$ for every pair in $S^\ast$, with the same $N_t$. The rest of the proof of \cref{prop:closed} applies without change, with $(\cH_t)$ in place of $(\cG_t)$.
\end{proof}

\emph{Offline attackers.} Let the uncorrupted items form a sequence of vectors $\mathbf X_u=(X^{ij}_u)_{(i,j)}$, $u\ge1$, whose law does not depend on the attacker, and suppose that $\E[X^{ij}_u\mid\mathbf X_1,\dots,\mathbf X_{u-1}]\le\frac12$ for every $u$ and every $(i,j)\in S^\ast$. With constant bets and the cost $\crep$, the conclusion of \cref{prop:closed} then also holds against attackers in $\classO$, who may inspect the whole uncorrupted sequence before inserting items or replacing records, with $N_t$ counting the corrupted items. For the proof, let $\tilde L^{(ij,k)}_u=\prod_{u'\le u}\big(1+\lb_k(X^{ij}_{u'}-\frac12)\big)$, a nonnegative supermartingale for the filtration of the uncorrupted vectors, for every pair in $S^\ast$ at once, and let $h(t)$ be the number of non-inserted items among the first $t$ observed items. As in the proof of \cref{thm:validity}(c), an inserted item contributes a factor at most $1+\lb_k/2\le e^{\crep(\lb_k)}$ to $K^{(ij,k)}_t$ and none to $\tilde L^{(ij,k)}_{h(t)}$. A non-inserted item contributes at most $e^{\crep(\lb_k)}$ times its uncorrupted factor if the record of $(i,j)$ was replaced, and that factor otherwise. Hence $K^{(ij,k)}_t\le e^{\crep(\lb_k)N_t}\tilde L^{(ij,k)}_{h(t)}$ pathwise for every pair in $S^\ast$. The argument of \cref{prop:closed} then shows that a false joint claim $D$ with $\Bhat_t(D)\ge N_t$ forces $\sup_u|S^\ast|^{-1}\sum_{(i,j)\in S^\ast}\sum_kw_k\tilde L^{(ij,k)}_u\ge1/\alpha$, an event of the uncorrupted sequence alone, whose probability \cref{lem:ville} bounds by $\alpha$. The same holds for \cref{prop:closed} itself against $\classO$ with constant bets. It needs the uncorrupted records, each with its pair, to form a sequence whose law does not depend on the attacker, with each true null holding given the earlier uncorrupted records and the pair of the current one.

\paragraph{Computation.} Let $\mathcal S$ be the family of all sets of pairs, as in E3c and E8. For a pair and a budget $b$, the smallest average over the sets containing the pair is then attained by adding the other pairs in increasing order of their deflated averages. One check therefore sorts the pairs, and the tolerance follows by bisection over $b$.

\subsection{Proof of Theorem~\ref{thm:average} and Corollary~\ref{cor:avg-simul}}

\begin{lemma}[Hoeffding's lemma]\label{lem:hoeffding}
Let $Y$ take values in $[0,1]$, let $\cH$ be a $\sigma$-field and let $\gamma>0$. Then $\E\big[e^{\gamma(Y-\E[Y\mid\cH])}\,\big|\,\cH\big]\le e^{\gamma^2/8}$.
\end{lemma}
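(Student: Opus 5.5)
The plan is to prove this conditional form of Hoeffding's lemma by reducing it to a convexity bound on the conditional moment generating function, carried out pointwise in $\omega$ via a regular conditional distribution. Write $m=\E[Y\mid\cH]$, which is $\cH$-measurable and lies in $[0,1]$. Since $Y\in[0,1]$, the variable $e^{\gamma(Y-m)}$ is bounded, so every conditional expectation below exists. I will first show the pathwise convexity inequality
\[
e^{\gamma Y}\le (1-Y)+Y e^{\gamma},
\]
which holds because $y\mapsto e^{\gamma y}$ is convex on $[0,1]$ and the right-hand side is its chord. Multiplying by the $\cH$-measurable factor $e^{-\gamma m}$ and taking conditional expectations (pulling out what is known) gives
\[
\E\big[e^{\gamma(Y-m)}\mid\cH\big]\le e^{-\gamma m}\big(1-m+m e^{\gamma}\big)=e^{\psi(m)},
\qquad \psi(p)=-\gamma p+\log\big(1-p+pe^{\gamma}\big).
\]
This reduces the conditional statement to a deterministic inequality in $p=m(\omega)\in[0,1]$, with no regular conditional law needed.

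The deterministic step is the classical one: show $\psi(p)\le\gamma^2/8$ for all $p\in[0,1]$. Set $u=\gamma$ and $L(u)=-up+\log(1-p+pe^{u})$. Then $L(0)=0$ and $L'(0)=0$. Also $L''(u)=q(1-q)\le\tfrac14$, where $q=pe^{u}/(1-p+pe^{u})\in[0,1]$. Taylor's theorem with Lagrange remainder then gives $L(\gamma)\le\gamma^2/8$. Substituting $p=m(\omega)$ pointwise finishes the argument. The bound $e^{\psi(m)}\le e^{\gamma^2/8}$ holds almost surely, which is the claimed inequality.

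The only delicate point is justifying the conditional step. Specifically, I need that taking $\E[\cdot\mid\cH]$ of the chord inequality yields the chord at $m$, using linearity and the fact that $e^{-\gamma m}$ is $\cH$-measurable and bounded. The degenerate edges $p\in\{0,1\}$ give $\psi=0$ and are harmless. I do not expect a real obstacle. This lemma is standard and is cited from \citet{hoeffding1963}, so a short proof along these lines suffices.
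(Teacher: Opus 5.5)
Your proof is correct, and it takes a slightly different route from the paper. The paper does not prove the lemma at all. It cites the unconditional Hoeffding lemma and says it is ``applied under the conditional law given $\cH$''. That implicitly uses a regular conditional distribution of $Y$ given $\cH$, which exists because $Y$ is real-valued, together with two standard facts: this law is almost surely supported on $[0,1]$, and its mean is almost surely $\E[Y\mid\cH]$.

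You instead take the pathwise chord inequality $e^{\gamma Y}\le 1-Y+Ye^{\gamma}$, apply $\E[\cdot\mid\cH]$, and pull out the bounded $\cH$-measurable factor $e^{-\gamma m}$. This reduces the statement to the deterministic bound $\psi(p)\le\gamma^2/8$, which you verify with the standard second-order Taylor argument, $L(0)=L'(0)=0$ and $L''=q(1-q)\le\frac14$. Your route gives a self-contained argument that needs only linearity, monotonicity and ``taking out what is known'' for conditional expectations, with no disintegration. The paper's route gives brevity.

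Two cosmetic fixes. Your opening sentence says the argument is carried out via a regular conditional distribution, but what you actually do avoids one, so drop that phrase. Also choose a version of $m=\E[Y\mid\cH]$ with values in $[0,1]$ everywhere, so that $\psi(m)$ is defined pointwise.
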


This is classical \citep{hoeffding1963}, applied under the conditional law given $\cH$. The betting factor $1+\lambda(W_s-\frac12)$ is not linear in the outcome, so on $\{\Delta_t\le0\}$ the betting wealth is not dominated pathwise by a supermartingale of the uncorrupted values. The exponential wealth is linear in the exponent, which the proofs below use. Throughout, $\mu_s=\E[X_s\mid\cG_{s-1}]$, which equals $\frac12$ on $\{I_s=1\}$ by the convention of \cref{app:model}. In this subsection $Z^{(k)}_t=Z^{(\gamma_k)}_t$ and $c_k=c^Z(\gamma_k)$.

\paragraph{Reduction on the null event.} \Cref{lem:reduction} uses domination at every time. Here it is needed only at times $t$ with $\Delta_t\le0$, when the lead claim is false. If $Z^{(k)}_t\le e^{c_kN_t}L^{(k)}_t$ for all $k$ on an event $\mathcal{A}_t$, then $\mathcal{A}_t\cap\{\Bhat^Z_t\ge N_t\}\subseteq\{\sum_kw_kL^{(k)}_t\ge1/\alpha\}$. The proof is that of \cref{lem:reduction}, since on $\{\Bhat^Z_t\ge N_t\}$ the number $N_t$ is a nonnegative integer.

\emph{(a) Insertion-only attackers in $\classP$, cost $\gamma/2-\gamma^2/8$.} Fix $\gamma=\gamma_k$ and write $Z_t$ for $Z^{(\gamma)}_t$. Every corrupted record is inserted, so $C_s=I_s$, and $I_s$ is $\cG_{s-1}$-measurable. Define
\[
L_t=\exp\Big(\gamma\sum_{s\le t}(1-I_s)(X_s-\mu_s)-\frac{\gamma^2}{8}\sum_{s\le t}(1-I_s)\Big).
\]
On $\{I_t=1\}$, $L_t=L_{t-1}$. On $\{I_t=0\}$, \cref{lem:hoeffding} with $\cH=\cG_{t-1}$ gives $\E[L_t\mid\cG_{t-1}]\le L_{t-1}$. Hence $L$ is a nonnegative supermartingale with $L_0=1$. Pathwise, an uncorrupted record contributes $\gamma(X_s-\frac12)-\gamma^2/8=\gamma(X_s-\mu_s)-\gamma^2/8+\gamma(\mu_s-\frac12)$ to $\log Z_t$, and an inserted record at most $\gamma/2-\gamma^2/8$. Summing,
\[
\log Z_t\le\log L_t+\gamma\Delta_t+\Big(\frac{\gamma}{2}-\frac{\gamma^2}{8}\Big)N_t .
\]
On $\{\Delta_t\le0\}$ therefore $Z^{(k)}_t\le e^{(\gamma_k/2-\gamma_k^2/8)N_t}L^{(k)}_t$ for every $k$. By the reduction above, $\{\Delta_t\le0,\ \Bhat^Z_t\ge N_t\}\subseteq\{\sum_kw_kL^{(k)}_t\ge1/\alpha\}$ for every $t$, and \cref{lem:ville} bounds the probability of the union over $t$ by $\alpha$. For every attacker in $\classP$, $C_s$ is $\cG_{s-1}$-measurable. The same computation with $C_s$ in place of $I_s$ shows that the cost $\gamma/2-\gamma^2/8$ is then valid against all of $\classP$ if $\Delta_t$ is replaced by the margin over the uncorrupted records, $\sum_{s\le t}(1-C_s)(\mu_s-\frac12)$.

\emph{(b) Class $\classV$, cost $\gamma$.} Define $L_t$ as in (a). Since $I_t$ is $\cG_{t-1}$-measurable in every class, $L$ is again a nonnegative supermartingale. Pathwise, a non-inserted record contributes $\gamma(W_s-\frac12)-\gamma^2/8$ to $\log Z_t$ and $\gamma(X_s-\mu_s)-\gamma^2/8$ to $\log L_t$. The difference is $\gamma(W_s-X_s)+\gamma(\mu_s-\frac12)\le\gamma C_s+\gamma(\mu_s-\frac12)$, since $W_s=X_s$ when $C_s=0$ and $|W_s-X_s|\le1$ otherwise. An inserted record contributes at most $\gamma/2-\gamma^2/8\le\gamma$ to $\log Z_t$ and nothing to $\log L_t$. Summing, $\log Z_t\le\log L_t+\gamma\Delta_t+\gamma N_t$, and we conclude as in (a).

\emph{Statement of (c).} Let $\tilde\mu_u=\E[X_u\mid X_1,\dots,X_{u-1}]$, $\tilde\Delta_u=\sum_{u'\le u}(\tilde\mu_{u'}-\frac12)$ and $h(t)$ be the number of non-inserted records among the first $t$. Part (c) states that, against every attacker in $\classO$ and with the cost $\gamma$, $\P\big(\exists\,t:\ \tilde\Delta_{h(t)}\le0\ \text{and}\ \Bhat^Z_t\ge N_t\big)\le\alpha$.

\emph{(c) Class $\classO$, cost $\gamma$.} Let $\tilde L_u=\exp\big(\gamma\sum_{u'\le u}(X_{u'}-\tilde\mu_{u'})-\gamma^2u/8\big)$, a nonnegative supermartingale for the natural filtration of $(X_u)$ by \cref{lem:hoeffding}, whose law does not depend on the attacker. The term of an uncorrupted value in $\log Z_t$ depends on that value alone, so its position in the observed sequence does not matter. As in (b), $\log Z_t\le\log\tilde L_{h(t)}+\gamma\tilde\Delta_{h(t)}+\gamma N_t$ pathwise. Hence $\{\exists t:\tilde\Delta_{h(t)}\le0,\ \Bhat^Z_t\ge N_t\}\subseteq\{\exists u:\sum_kw_k\tilde L^{(k)}_u\ge1/\alpha\}$, an event of the uncorrupted sequence alone, of probability at most $\alpha$ by \cref{lem:ville}. As in \cref{thm:validity}(c), the constants $\gamma_k$ must not depend on the observed records.

\emph{(d) Validity in the sense of \eqref{eq:validity}.} Under $H_0^{ij}$, $\mu_s\le\frac12$ on $\{I_s=0\}$ for every $s$, so $\Delta_t\le0$ for every $t$, and the event of \eqref{eq:validity} is that of \eqref{eq:validity-avg}. Under the null of $\classO$, $\tilde\Delta_u\le0$ for every $u$ likewise.

\emph{The rate.} By the strong law of large numbers, $t^{-1}\log Z^{(k)}_t\to\gamma_k\delta-\gamma_k^2/8$ almost surely, since $\E[W-\frac12]=\delta$ whether or not there are ties. The sandwich in the proof of \cref{thm:upper} applies with $g(\lb_k)$ replaced by $g^Z(\gamma_k)=\gamma_k\delta-\gamma_k^2/8$ and $c(\lb_k)$ by $c^Z(\gamma_k)$, which gives the limit $\rho^Z$. The two bounds follow at $k=m$ from $g^Z(\gamma)/(\gamma/2-\gamma^2/8)\ge g^Z(\gamma)/(\gamma/2)=2\delta-\gamma/4$ when $g^Z(\gamma)\ge0$, from $g^Z(\gamma)/\gamma=\delta-\gamma/8$, and from the fact that both right-hand sides are negative when $g^Z(\gamma_m)<0$. The limits $\rho^Z\le2\delta$ and $\rho^Z\le\delta$ follow from the coupling bound of \eqref{eq:lower-coupling}, which holds with ties (\cref{app:lower}). That bound is proved with an insertion-only attacker in the case of $\classP$ and with an attacker in $\classV$ in the case of $\classV$, and $\Bhat^Z_t$ is valid against both by (d).

\emph{Sharpness of the insertion cost and a lower bound for the replacement cost.} $B$ forged wins inserted before the first uncorrupted record give $\log Z_B=B(\gamma/2-\gamma^2/8)$ and $N_B=B$. So with any cost $c<\gamma/2-\gamma^2/8$ the lead tolerance makes a false statement with probability one once $B\ge\log(1/\alpha)/(\gamma/2-\gamma^2/8-c)$, as in \cref{thm:sharp}(a). Replacing the first $B$ uncorrupted losses by wins raises $\log Z$ by exactly $\gamma$ per flip, and the argument of \cref{thm:sharp}(b) shows that every cost below $\gamma-\gamma^2/4=2\cins^Z(\gamma)$ then fails with probability tending to one. The cost $\gamma$ is the largest increase of $\log Z_t$ that one flip can cause, and it is valid by (b). This argument does not show that every cost between $\gamma-\gamma^2/4$ and $\gamma$ fails.

\emph{Why (a) excludes predictable replacement.} Under $H_0^{ij}$ a record overwritten unseen is a null record anyway, which is why \cref{thm:validity}(a) covers it. For the lead claim its hidden uncorrupted value counts in $\Delta_t$, and an attacker that overwrites the records it expects to be weak can hold $\Delta_t$ at zero while the wealth grows. The lead certificate therefore charges $\gamma/2-\gamma^2/8$ for a forged record and $\gamma$ for an altered one. Fix $0<\xi<(4-\gamma_1)/8$ and let the uncorrupted values be independent with $\mu_s=1-\xi$ at odd $s$ and $\mu_s=\xi$ at even $s$, so that $\Delta_t=0$ at every even $t$. A predictable attacker overwrites every even record by a win without looking at it. Then $N_t=t/2$ at even $t$, and by the strong law $t^{-1}\sum_{s\le t}(W_s-\frac12)\to\frac12-\frac\xi2$ almost surely, so $t^{-1}\big(\log Z^{(\gamma)}_t-(\gamma/2-\gamma^2/8)N_t\big)\to\frac{\gamma}{16}(4-8\xi-\gamma)$, which is positive for every $\gamma\le\gamma_1$. Every bet of the grid thus has $Z^{(k)}_te^{-(\gamma_k/2-\gamma_k^2/8)N_t}\to\infty$, and $\Bhat^Z_t\ge N_t$ at all large even $t$ almost surely while $\Delta_t=0$. The margin over the uncorrupted records is $(\frac12-\xi)\lceil t/2\rceil>0$, so the variant at the end of (a) is not contradicted.

\begin{corollary}[Simultaneous lead certificates]\label{cor:avg-simul}
Let the pair of each record be fixed before its value is produced and let $\mathcal M$ be a set of $M$ ordered pairs fixed in advance. Let $\Delta^{ij}_t$ be the uncorrupted margin of \eqref{eq:delta} of pair $(i,j)$, let $\Bhat^{Z,ij}_t$ be its lead tolerance computed from its records with $M/\alpha$ in place of $1/\alpha$, and let $N_t$ count the corrupted records on all pairs. Against the attackers and costs of \cref{thm:average}(a,b), $\P\big(\exists\,t,\ \exists(i,j)\in\mathcal M:\ \Delta^{ij}_t\le0\ \text{and}\ \Bhat^{Z,ij}_t\ge N_t\big)\le\alpha$. The same holds for $V$ private variants of one model with $V/\alpha$.
\end{corollary}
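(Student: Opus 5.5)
The plan is to reduce to a single nonnegative supermartingale, in the style of the proofs of \cref{thm:average} and \cref{prop:selection}, and then spend the Bonferroni factor $M$ as a uniform average over the pairs in $\mathcal M$.

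\textbf{Uncorrupted wealths for one pair.} For each pair $(i,j)\in\mathcal M$ and each bet $k$, I define $L^{(ij,k)}$ exactly as in the proof of \cref{thm:average}(a) or (b). It is built from the records of $(i,j)$ alone, with factor $1$ on records of other pairs:
\[
L^{(ij,k)}_t=\exp\Big(\gamma_k\sum_{s\le t}\1\{s\text{ on }(i,j)\}(1-I_s)(X_s-\mu_s)-\frac{\gamma_k^2}{8}\sum_{s\le t}\1\{s\text{ on }(i,j)\}(1-I_s)\Big).
\]
The pair of record $s$ is fixed before its value is produced, so the indicator is $\cG_{s-1}$-measurable. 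Hence \cref{lem:hoeffding}, applied with $\cH=\cG_{s-1}$, makes each $L^{(ij,k)}$ a nonnegative supermartingale for the global filtration $(\cG_t)$, with $L^{(ij,k)}_0=1$.

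No null hypothesis is needed here. The centering uses $\mu_s$ itself, and the sign of $\mu_s-\frac12$ is absorbed into $\Delta^{ij}_t$. This is why a Bonferroni union works without assumptions on which claims are correct.

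\textbf{Pathwise domination.} The pathwise computation of \cref{thm:average}(a,b) applies to the records of $(i,j)$. It gives
\[
\log Z^{(ij,k)}_t\le\log L^{(ij,k)}_t+\gamma_k\Delta^{ij}_t+c_kN^{ij}_t,
\]
where $N^{ij}_t$ counts the corrupted records on $(i,j)$. Since $N^{ij}_t\le N_t$ and $c_k>0$, the last term can be replaced by $c_kN_t$. So on $\{\Delta^{ij}_t\le0\}$ we have $Z^{(ij,k)}_t\le e^{c_kN_t}L^{(ij,k)}_t$.

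\textbf{Combining the pairs.} Suppose that for some $t$ and some pair $\Delta^{ij}_t\le0$ and $\Bhat^{Z,ij}_t\ge N_t$. Then $N_t$ is a nonnegative integer, and by monotonicity of the deflated average in $b$ (the reduction on the null event),
\[
\sum_kw_kL^{(ij,k)}_t\ge M/\alpha .
\]
Define the average
\[
\bar L_t=\frac1M\sum_{(i,j)\in\mathcal M}\sum_kw_kL^{(ij,k)}_t .
\]
On this event $\bar L_t\ge1/\alpha$, because every term in the average is nonnegative. Since $\mathcal M$ is fixed in advance, $\bar L$ is a nonnegative supermartingale for $(\cG_t)$ with $\bar L_0\le1$. \Cref{lem:ville} then bounds the probability of the union over all $t$ and all pairs by $\alpha$.

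\textbf{Variants.} For the statement about $V$ private variants, I repeat the argument with variants in place of pairs. The averaging over at most $V$ variants, including adaptively created ones, goes exactly as in the proof of \cref{prop:selection}.

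\textbf{Main obstacle.} The delicate step is checking that $L^{(ij,k)}$ is a supermartingale for the global filtration rather than the pair's own filtration. This requires two things. First, $\mu_s$ must be defined as $\E[X_s\mid\cG_{s-1}]$ with the global $\cG_{s-1}$, which is how $\Delta^{ij}_t$ is defined. Second, the pair indicator must be predictable. Both hold by the hypotheses, so the rest is routine.
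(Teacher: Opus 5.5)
Your proposal is correct and follows the paper's proof: it uses the same per-pair uncorrupted wealths with factor $1$ on other pairs, the same pathwise domination $Z^{(ij,k)}_t\le e^{c_kN^{ij}_t}L^{(ij,k)}_t\le e^{c_kN_t}L^{(ij,k)}_t$ on $\{\Delta^{ij}_t\le0\}$, and the same reduction at threshold $M/\alpha$. The only difference is the last step: the paper applies Ville to each pair at level $\alpha/M$ and takes a union bound over $\mathcal M$, while you average the $M$ processes into one supermartingale for the common filtration $(\cG_t)$ and apply Ville once, as in the proof of \cref{prop:selection}, and both give the bound $\alpha$.
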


The set $S^\ast$ of true nulls in \cref{prop:closed} is fixed, while the pairs with $\Delta^{ij}_t\le0$ change with $t$, so closed testing does not transfer. A Bonferroni correction does, at a price of $\log M$ nats per pair, and in E3c closed testing gained at most one record over it.

\emph{Proof of \cref{cor:avg-simul}.} Let $t$ run over the records of all pairs with the global filtration $(\cG_t)$, as in the proof of \cref{prop:closed}. For a pair $(i,j)$ define $L^{(ij,k)}$ as in (a) or (b), with factor $1$ on the records of other pairs. Since the pair of record $t$ is $\cG_{t-1}$-measurable, the computations of (a) and (b) show that $L^{(ij,k)}$ is a nonnegative supermartingale and that, on $\{\Delta^{ij}_t\le0\}$, $Z^{(ij,k)}_t\le e^{c_kN^{ij}_t}L^{(ij,k)}_t\le e^{c_kN_t}L^{(ij,k)}_t$, where $N^{ij}_t\le N_t$ counts the corrupted records of the pair. By the reduction above with $M/\alpha$ in place of $1/\alpha$, $\{\Delta^{ij}_t\le0,\ \Bhat^{Z,ij}_t\ge N_t\}\subseteq\{\sum_kw_kL^{(ij,k)}_t\ge M/\alpha\}$ for every $t$, and \cref{lem:ville} bounds the probability of the union over $t$ by $\alpha/M$. A union bound over the $M$ pairs gives $\alpha$. For $V$ private variants of one model against $j$, the variants play the role of the pairs. \hfill$\square$

\subsection{A deterministic tolerance for a completed set of records}\label{app:finite}

The certificates above are statements about the normal process that produced the records. A completed set of records, such as a released benchmark or the votes of a closed period, also supports a statement about itself, with no probability and no sampling assumption.

\begin{proposition}[Deterministic tolerance of a completed set]\label{prop:finite}
Let $n$ records be observed with values $W_1,\dots,W_n\in[0,1]$ and let $\Delta=\sum_{s\le n}(2W_s-1)>0$, not the margin of \eqref{eq:delta}. If at most $B\le\lceil\Delta\rceil-1$ of the records were inserted and none replaced, the mean of the uncorrupted values among them exceeds $\frac12$. If at most $B\le\lceil\Delta/2\rceil-1$ were replaced and none inserted, the mean of the uncorrupted values of all $n$ records exceeds $\frac12$. For records in $\{0,\frac12,1\}$, $\Delta$ is the number of wins minus losses and neither bound can be raised. If $\Delta$ of the observed wins were inserted, the uncorrupted records have mean $\frac12$ when any remain, and none remain when $\Delta=n$. If $\lceil\Delta/2\rceil$ of them replaced losses, the mean of the uncorrupted values is at most $\frac12$.
\end{proposition}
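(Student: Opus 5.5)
The statement is deterministic, so I will prove it by direct counting on the total $\Delta=\sum_{s\le n}(2W_s-1)$. Each record changes $\Delta$ by $2W_s-1\in[-1,1]$. I will therefore bound how far corruption can move this sum, and then handle the two kinds of corruption separately.

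\emph{Insertions.} Let $J$ be the set of inserted records, with $|J|\le B\le\lceil\Delta\rceil-1$, so $|J|<\Delta$. The uncorrupted records are those outside $J$. Their total is
\[
\sum_{s\notin J}(2X_s-1)=\Delta-\sum_{s\in J}(2W_s-1)\ge\Delta-|J|>0 .
\]
In particular this set is nonempty, since otherwise the sum would be $0$. Hence the uncorrupted mean exceeds $\frac12$.

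\emph{Replacements.} Now every record has an uncorrupted value. A replaced record satisfies $|(2W_s-1)-(2X_s-1)|=2|W_s-X_s|\le2$. So
\[
\sum_s(2X_s-1)\ge\Delta-2B .
\]
Since $B\le\lceil\Delta/2\rceil-1<\Delta/2$, this total is positive, and the mean of the uncorrupted values exceeds $\frac12$.

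\emph{Sharpness for $\{0,\frac12,1\}$ values.} Here $\Delta$ is the number of wins minus the number of losses, which is an integer. I will give the two extremal attacks from the statement.

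For insertion, declare $\Delta$ of the observed wins to be inserted. This is possible because there are at least $\Delta$ wins. The remaining records have equal numbers of wins and losses, so their sum $\sum(2X_s-1)$ is $0$ and their mean is exactly $\frac12$ if any remain. When $\Delta=n$, every record is a win and nothing remains, so no uncorrupted record has mean above $\frac12$. Thus $B=\Delta=\lceil\Delta\rceil$ breaks the claim.

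For replacement, take $\lceil\Delta/2\rceil$ observed wins, which exist since wins $\ge\Delta\ge\lceil\Delta/2\rceil$ when $\Delta\ge1$, and declare their uncorrupted values to be losses. Each such change lowers the sum by $2$, giving
\[
\Delta-2\lceil\Delta/2\rceil\le0 ,
\]
so the uncorrupted mean is at most $\frac12$.

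The only delicate points are bookkeeping. First, in the insertion case the guarantee concerns the mean over the remaining records, so the empty case must be handled separately. Second, the strict inequalities rest on the ceilings, using $\lceil x\rceil-1<x$. Neither step is a real obstacle.
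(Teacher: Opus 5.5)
Your proof is correct and matches the paper's argument step for step. Both bound the shift in $\sum_s(2W_s-1)$ by $1$ per insertion and $2$ per replacement, and both use the same two extremal attacks for records in $\{0,\frac12,1\}$. Your explicit check that the uncorrupted set is nonempty in the insertion case is a small detail the paper leaves implicit.
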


\begin{proof}
An inserted record contributes at most $1$ to $\Delta$, so the sum of $2W_s-1$ over the uncorrupted records is at least $\Delta-B$, which is positive when $B<\Delta$, that is, when $B\le\lceil\Delta\rceil-1$. A replacement changes $2W_s-1$ by at most $2$, so the same sum taken over the uncorrupted values of all $n$ records is at least $\Delta-2B$, which is positive when $B<\Delta/2$, that is, when $B\le\lceil\Delta/2\rceil-1$. For the last claim, the observed set has $\Delta$ more wins than losses, so at least $\Delta$ wins. If $\Delta$ of them were inserted, the uncorrupted records have as many wins as losses, so their mean is $\frac12$ when any remain (none remain when $\Delta=n$). If $\lceil\Delta/2\rceil$ of them replaced uncorrupted losses, the uncorrupted values have $\Delta-2\lceil\Delta/2\rceil\le0$ more wins than losses, and their mean is at most $\frac12$.
\end{proof}

For records in $\{0,\frac12,1\}$ a completed set thus withstands exactly $\Delta-1$ forged records and $\lceil\Delta/2\rceil-1$ replaced ones. If a contaminated item can change a record by at most $\frac12$, as when only one model's answer is altered, the guaranteed count for replaced items rises to $\lceil\Delta\rceil-1$. For general values in $[0,1]$ a given observed set need not attain the bounds. The certificates of \eqref{eq:bhat-grid} and \eqref{eq:zwealth} speak about the normal process. For records in $\{0,\frac12,1\}$ they are at most $\Delta-1$ and $\lceil\Delta/2\rceil-1$ as well. Indeed $\log K^{(k)}_n\le\Delta\,\cins(\lb_k)$ because $\log(1-\lb_k/2)\le-\log(1+\lb_k/2)$, and $\log Z^{(\gamma)}_n\le\Delta\,\cins^Z(\gamma)$ because $n\ge\Delta$. Each replacement cost is at least twice the insertion cost. So the deflated average is at most $1<1/\alpha$ at $b=\Delta$ with the insertion cost and at $b=\lceil\Delta/2\rceil$ with the replacement cost. E3 and E7 report the gap. For general values in $[0,1]$ a certificate and the deterministic bound concern different statements and need not be ordered.

\subsection{Attacks on correct claims}\label{app:power}

An attacker may also work against a correct claim, by inserting losses for the leader or by replacing its wins with losses, as in E6. Let $d_{\rm ins}(\lb)=-\log(1-\lb/2)=\crep(\lb)-\cins(\lb)$, the most that one inserted record can lower the log-wealth of a constant bet $\lb$.

\begin{proposition}[Price of attacking a correct claim]\label{prop:power}
Consider the tolerance of \eqref{eq:bhat-grid} with constant bets and an attacker in $\classV$ that corrupts $B$ of the first $t$ records. Let $L^{(k)}_t$ be the uncorrupted wealth of bet $k$, as in the proof of \cref{thm:validity}, and let $\Bhat^{(k)}_t(\alpha')=\lfloor\log(\alpha'L^{(k)}_t)/c(\lb_k)\rfloor$ be its single-bet tolerance at level $\alpha'$ on the uncorrupted values.
(a) Pathwise, $\log K^{(k)}_t\ge\log L^{(k)}_t-Bd_k$, where $d_k=d_{\rm ins}(\lb_k)$ if every corrupted record is inserted and $d_k=\crep(\lb_k)$ otherwise. Hence $\Bhat_t\ge\Bhat^{(k)}_t(\alpha w_k)-\lceil Bd_k/c(\lb_k)\rceil$ for every $k$ for which the right-hand side is nonnegative.
(b) With the cost $\crep$, replacing $B$ wins by losses lowers the tolerance by exactly $B$: $\Bhat_t=\Bhat^{\rm unc}_t-B$ if $\Bhat^{\rm unc}_t\ge B$ and $\Bhat_t=-1$ otherwise, where $\Bhat^{\rm unc}_t$ is the tolerance of the uncorrupted values.
(c) Suppose that the corrupted records lie among the first $t_0$ records, that the later uncorrupted values are independent of the first $t_0$ records and i.i.d.\ as in \cref{thm:upper} with $g(\lb_k)>0$, and that $\log L^{(k)}_{t_0}\ge\log\frac{1}{\alpha w_k}+b\,c(\lb_k)$ for an integer $b\ge0$. Then the number of records after $t_0$ until $\Bhat_t\ge b$ has expectation at most $\big(Bd_k+\cins(\lb_k)\big)/g(\lb_k)$.
\end{proposition}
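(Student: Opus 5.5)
The plan is a factor-by-factor pathwise comparison between the observed wealth $K^{(k)}_t$ and the uncorrupted wealth $L^{(k)}_t$ from the proof of \cref{thm:validity}(b). That wealth keeps the uncorrupted value of every replaced record and gives factor $1$ to inserted ones. Since the bets are constant, each record's factor depends only on its own value, so everything reduces to bounding the log-ratio of the two factors record by record.

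\emph{(a)} I would compare the factors of record $s$. If $C_s=0$ they agree. If $s$ is inserted, the observed factor is at least $1-\lb_k/2=e^{-d_{\rm ins}(\lb_k)}$, while $L^{(k)}$ contributes $1$. If $s$ is replaced, the observed factor is at least $\mm$ and the uncorrupted one at most $\Mp$, so the ratio is at least $e^{-\crep(\lb_k)}$. Since $d_{\rm ins}\le\crep$, summing over the $B$ corrupted records gives $\log K^{(k)}_t\ge\log L^{(k)}_t-Bd_k$. For the tolerance bound I would use $\Phi_t(b)\ge w_kK^{(k)}_te^{-bc_k}$. This shows $\Phi_t(b)\ge1/\alpha$ whenever $bc_k\le\log(\alpha w_kL^{(k)}_t)-Bd_k$. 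Taking $b=\Bhat^{(k)}_t(\alpha w_k)-\lceil Bd_k/c_k\rceil\ge0$ satisfies this, since $\Bhat^{(k)}_t(\alpha w_k)\,c_k\le\log(\alpha w_kL^{(k)}_t)$.

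\emph{(b)} Replacing a win by a loss multiplies factor $k$ by exactly $\mm/\Mp=e^{-\crep(\lb_k)}$. Every such replacement therefore gives $K^{(k)}_t=L^{(k)}_te^{-B\crep(\lb_k)}$ for all $k$ simultaneously. Then $\Phi_t(b)=\Phi^{\rm unc}_t(b+B)$, where $\Phi^{\rm unc}$ is the deflated average of the uncorrupted values. Hence $\{b\ge0:\Phi_t(b)\ge1/\alpha\}=\{b'-B:\ b'\ge B,\ \Phi^{\rm unc}_t(b')\ge1/\alpha\}$, which gives the stated identity. The case $\Bhat^{\rm unc}_t<B$ yields an empty set, so $\Bhat_t=-1$. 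The only care needed here is that the cost equals $\crep$ exactly.

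\emph{(c)} By (a), $\log K^{(k)}_{t_0}\ge z_k-Bd_k$, where $z_k=\log\frac1{\alpha w_k}+bc(\lb_k)$. After $t_0$ the increments of $\log K^{(k)}$ are i.i.d., bounded above by $\cins(\lb_k)$, with mean $g(\lb_k)>0$, and independent of $\cF_{t_0}$. Conditioning on $\cF_{t_0}$, I would repeat the Wald-identity argument of the proof of \cref{thm:upper}. I would use the stopping time $\sigma=\inf\{u:\log K^{(k)}_{t_0+u}\ge z_k\}$, so the walk must climb at most $Bd_k$ and overshoots by at most $\cins(\lb_k)$. Truncation at $\sigma\wedge n$ and monotone convergence give $\E\sigma\le(Bd_k+\cins(\lb_k))/g(\lb_k)$. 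Since $\Phi_t(b)\ge w_kK^{(k)}_te^{-bc_k}\ge1/\alpha$ at that time, $\Bhat_t\ge b$ holds by then. The main obstacle is this step: making the conditional Wald argument rigorous when the starting deficit is random. I would handle it by applying Wald given $\cF_{t_0}$ and then taking expectations, using the independence of the post-$t_0$ values.
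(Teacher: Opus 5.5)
Your proposal is correct and follows essentially the same route as the paper: a factor-by-factor comparison with the uncorrupted wealth of \cref{thm:validity}(b) for (a), the exact identity $\Phi_t(b)=\Phi^{\rm unc}_t(b+B)$ for (b), and the Wald-identity argument from the proof of \eqref{eq:tau-upper} with a starting deficit of at most $Bd_k$ for (c). Letting $\sigma=0$ absorb the already-certified case, and conditioning on $\cF_{t_0}$ to handle the random start, are only slightly more explicit versions of the paper's case split and its direct use of Wald's identity.
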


\emph{(a)} The factor of a record lies in $[\mm,\Mp]$ with $\mm=1-\lb_k/2$ and $\Mp=1+\lb_k/2$. An inserted record contributes a factor of at least $\mm=e^{-d_{\rm ins}(\lb_k)}$ to $K^{(k)}_t$ and none to $L^{(k)}_t$. A replaced record changes its factor by a ratio of at least $\mm/\Mp=e^{-\crep(\lb_k)}$, and $d_{\rm ins}\le\crep$. Multiplying over the $B$ corrupted records gives the first claim. For the second, $\Phi_t(b)\ge w_kK^{(k)}_te^{-bc(\lb_k)}\ge w_kL^{(k)}_te^{-Bd_k-bc(\lb_k)}$, which is at least $1/\alpha$ at $b=\lfloor(\log(\alpha w_kL^{(k)}_t)-Bd_k)/c(\lb_k)\rfloor$, and $\lfloor x-y\rfloor\ge\lfloor x\rfloor-\lceil y\rceil$.

\emph{(b)} A win replaced by a loss changes the factor of every bet $k$ from $\Mp$ to $\mm$, a ratio of $e^{-\crep(\lb_k)}=e^{-c_k}$. So $K^{(k)}_t=L^{(k)}_te^{-Bc_k}$ for every $k$, hence $\Phi_t(b)=\Phi^{\rm unc}_t(b+B)$ with $\Phi^{\rm unc}$ the deflated average of the uncorrupted values, and the claim follows from \eqref{eq:bhat-grid}.

\emph{(c)} Write $z=\log\frac1{\alpha w_k}+b\,c(\lb_k)$. If $\log K^{(k)}_{t_0}\ge z$, then $\Phi_{t_0}(b)\ge w_kK^{(k)}_{t_0}e^{-bc(\lb_k)}\ge1/\alpha$, so $\Bhat_{t_0}\ge b$ and there is nothing to prove. Otherwise let $\sigma=\inf\{t>t_0:\log K^{(k)}_t\ge z\}$. At $\sigma$, $\Phi_\sigma(b)\ge w_kK^{(k)}_\sigma e^{-bc(\lb_k)}\ge1/\alpha$, so $\Bhat_\sigma\ge b$. After $t_0$, $\log K^{(k)}_t$ grows by independent increments with mean $g(\lb_k)>0$ that are bounded above by $\cins(\lb_k)$. By (a) and the assumption, $z-\log K^{(k)}_{t_0}\le Bd_k$. Wald's identity at the bounded stopping time $\sigma\wedge n$ gives $g(\lb_k)\,\E[(\sigma\wedge n)-t_0]\le Bd_k+\cins(\lb_k)$, as in the proof of \eqref{eq:tau-upper}, and letting $n\to\infty$ completes the proof. \hfill$\square$

The ratios in (a) explain E6. Per record, an inserted loss removes between $1$ and $\Mp/\mm$ units of bet $k$'s tolerance with the insertion cost, since $\cins\le d_{\rm ins}\le\cins\Mp/\mm$, and between $\frac12$ and $1$ unit with the replacement cost, since $d_{\rm ins}=\crep-\cins\ge\crep/2$. A replaced win removes $\crep/\cins\ge2$ units with the insertion cost and exactly one unit with the replacement cost, by (b). The small bets that carry the tolerance of a well-separated pair are near the lower ends of these ranges.

\section{Experimental Details and Additional Results}\label{app:experiments}

\subsection{Overview of the experiments}\label{app:overview}

\Cref{tab:map} lists the experiments with the question each one answers, the results it checks, its main takeaway, and the tables and figures that report it.

\begin{table}[htbp]
\caption{Map of the experiments. Checks: the theorems and propositions that each experiment tests or illustrates. Results: where its numbers are reported.}
\label{tab:map}
\centering\small\renewcommand{\arraystretch}{1.2}
\begin{tabularx}{\textwidth}{@{}l>{\raggedright\arraybackslash}X>{\raggedright\arraybackslash}p{2.9cm}>{\raggedright\arraybackslash}X>{\raggedright\arraybackslash}p{2.5cm}@{}}
\toprule
Exp. & Question & Checks & Main takeaway & Results \\
\midrule
E1 & Are the certificates valid against bursts, spread forgeries and flips? Is each cost needed? & \cref{thm:validity,thm:sharp}, \cref{prop:perstep} & The replacement certificate is valid under all four attacks; the insertion certificate fails only under flips; the baselines fail under bursts. & \cref{fig:validity}, \cref{tab:e1} \\
E2 & Does the tolerance grow at the optimal rates? & \cref{thm:upper,thm:lower}, \cref{cor:explicit,cor:lower-mean} & Tolerance per record close to $2\delta$ and $\delta$; time to certify close to the upper bound. & \cref{fig:rates}(a,b), \cref{tab:tau-union} \\
E3a & How large are the tolerances on real votes? & \cref{thm:validity,thm:average}, \cref{prop:finite} & Separated pairs withstand about $2{,}000$ forged votes; close pairs are not certified. & \cref{fig:rates}(c), \cref{tab:arena-pairs} \\
E3b & Do rigged real votes fool the baselines? & \cref{thm:validity,thm:sharp} & $400$ forged votes in a burst: the Wald interval and undeflated betting err in $98$ to $100\%$ of replays, our certificates in at most $2.2\%$ at any budget. & \cref{tab:rigging,tab:rigging-full} \\
E3c & What holds when the whole leaderboard is claimed at once? & \cref{prop:closed} & $613$ of $16{,}512$ ordered pairs certified; GPT-4o is not certified above Claude-3.5-Sonnet. & \cref{fig:matrix} \\
E4 & Is the price of selecting among $V$ variants needed and enough? & \cref{prop:selection} & Best of $27$ null variants: false statement $0.52$ without correction, $0.025$ with $V/\alpha$. & \cref{tab:selection} \\
E5 & What happens when the better model changes? & \cref{thm:average}, identity of \eqref{eq:netwins} & Both certificates return to $-1$ about $420$ records after the switch. & \cref{fig:regime} \\
E6 & What can an attacker do to a correct claim? & \cref{prop:power} & A flipped win removes one flipped-vote unit; de-certifying costs about the tolerance. & \cref{tab:suppress} \\
E7 & Do the results carry over to a static benchmark? & \cref{thm:validity,thm:sharp,thm:average}, \cref{prop:finite} & Tolerances are $91$ to $94\%$ of the benchmark's own count; only the replacement certificate survives flips. & \cref{tab:mmlu}, \cref{fig:benchmark}(a,b) \\
E8 & Are frontier-model rankings on benchmarks certified? & \cref{prop:closed}, \cref{cor:items} & On all five benchmarks no top-two ordering is certified simultaneously. & \cref{tab:frontier}, \cref{fig:benchmark}(c), \cref{fig:frontier} \\
E9 & What does the lead claim cost? & \cref{thm:upper,thm:average} & Equal to the edge tolerance on binary records; $4$ to $11\%$ lower with $30\%$ ties. & \cref{tab:average} \\
E10 & How far is a fixed-time sensitivity count? & \cref{thm:validity}, \cref{prop:finite} & The edge tolerance is $90$ to $97\%$ of the Wald count, which attacks can time. & \cref{tab:arena-pairs,tab:mmlu} \\
\bottomrule
\end{tabularx}
\end{table}

\textbf{Grid.} All experiments use the grid of \cref{sec:method}, $\lb_k=2^{-(k-1)/2}$ for $k=1,\dots,14$ with equal weights. Its $\sqrt2$ spacing puts a bet in $[3\delta/\sqrt2,3\delta]$ for every margin $\delta$ from about $0.004$ to $\frac13$, as \cref{cor:explicit} requires. All experiments use $\alpha=0.05$ and fixed seeds, and \cref{fig:validity} also shows our single-bet certificate of \eqref{eq:bhat-single} with $\lambda=0.2$.

\textbf{Baselines.} \emph{Undeflated betting} compares the wealth, not deflated, with $1/\alpha$ \citep{waudbysmith2024}. It uses the single bet $\lambda=0.2$ in simulations and the average over the grid in replays. \emph{Per-step robust}, a simplified form of the test of \citet{saha2024}, tests the null mean $\frac12+\varepsilon$ \citep[Thm.~6]{wang2023huber}. This is valid when each record is corrupted with conditional probability at most $\varepsilon$. It is favorably calibrated with $\varepsilon=B/n$ from the attacker's true budget $B$ and the number $n$ of uncorrupted values. \emph{Wald interval} is the $95\%$ normal interval for the mean outcome, recomputed every $50$ records.
\subsection{Reproducibility}\label{app:repro}
All code, the benchmark data we use, the saved results of every experiment and the scripts that rebuild every figure and table are at \url{https://github.com/HamedKhosravi99/corruption-tolerance}. Each experiment runs from one or two scripts with fixed seeds. The repository's README maps every figure, table and number of the paper to the script and the saved files that produce it. Rebuilding all figures and tables from the saved results takes about a minute. The per-item results of HELM and SWE-bench Verified are public and included. The Chatbot Arena battle file is public but states no license, so the repository gives the commands that download and extract it instead of a copy. The code needs Python with NumPy, SciPy, pandas and Matplotlib, with versions pinned in the repository. The experiments ran on a 12-core workstation, a laptop and four-core jobs on a Linux cluster, and the longest took about an hour.

\subsection{E1: validity}

Uncorrupted values are independent $\Bern(\frac12)$, with $n=5{,}000$ values per repetition and $10{,}000$ repetitions per cell. There are four attackers, and \cref{fig:validity} shows three of them. \emph{Burst at start} inserts $B$ forged wins before the first uncorrupted record. \emph{Burst at trigger} inserts $B$ forged wins as soon as the undeflated log-wealth with $\lambda=0.2$ first reaches $\log(1/\alpha)-B\cins(0.2)$. This is the earliest moment at which the burst carries the undeflated wealth to $1/\alpha$. For the single-bet insertion certificate the timing of forged wins does not matter. \emph{Spread} inserts a forged win before each uncorrupted record with probability $B/n$ until the budget is spent. \emph{Flip losses} replaces the first $B$ uncorrupted losses by wins. For constant bets and binary records, the last attacker is the optimal offline replacement attacker (\cref{thm:sharp}(d)). Undeflated betting and the per-step robust test use the single bet $\lambda=0.2$. \Cref{tab:e1} lists the rates for $B\in\{0,10,20,40,100\}$, and the results file in the repository also has $B=5$, $30$ and $60$. \Cref{fig:validity} plots burst at trigger, flip losses and spread at all eight budgets, and burst at start appears only in \cref{tab:e1}.

\begin{table}[htbp]
\caption{E1: probability that a method makes a false statement at some time on null data, for each attack and attacker budget $B$ ($5{,}000$ uncorrupted $\Bern(\frac12)$ records; $10{,}000$ repetitions; standard error at most $0.005$). A method is valid if its entry stays at or below $\alpha=0.05$. Undefl.\ and Per-step: undeflated betting and the per-step robust test, with the single bet $\lambda=0.2$. Ins.\ and Rep.: our insertion and replacement certificates, with this bet (1 bet) or with the grid of \cref{sec:method} (grid). Rows: the four attacks described above.}
\label{tab:e1}
\centering\small
\begin{tabular}{lrrrrrrr}
\toprule
attack & $B$ & Undefl. & Per-step & Ins.\ (1 bet) & Ins.\ (grid) & Rep.\ (1 bet) & Rep.\ (grid) \\
\midrule
burst at start & 0 & 0.049 & 0.049 & 0.049 & 0.028 & 0.049 & 0.028 \\
burst at start & 10 & 0.127 & 0.110 & 0.052 & 0.029 & 0.019 & 0.008 \\
burst at start & 20 & 0.317 & 0.263 & 0.048 & 0.029 & 0.006 & 0.003 \\
burst at start & 40 & 1.000 & 1.000 & 0.046 & 0.028 & 0.001 & 0.001 \\
burst at start & 100 & 1.000 & 1.000 & 0.051 & 0.032 & 0.000 & 0.000 \\
\midrule
burst at trigger & 0 & 0.048 & 0.048 & 0.048 & 0.028 & 0.048 & 0.028 \\
burst at trigger & 10 & 0.124 & 0.104 & 0.047 & 0.027 & 0.015 & 0.014 \\
burst at trigger & 20 & 0.326 & 0.266 & 0.046 & 0.027 & 0.005 & 0.004 \\
burst at trigger & 40 & 1.000 & 1.000 & 0.046 & 0.028 & 0.001 & 0.000 \\
burst at trigger & 100 & 1.000 & 1.000 & 0.044 & 0.029 & 0.000 & 0.000 \\
\midrule
spread & 0 & 0.052 & 0.052 & 0.052 & 0.033 & 0.052 & 0.033 \\
spread & 10 & 0.057 & 0.044 & 0.048 & 0.028 & 0.043 & 0.024 \\
spread & 20 & 0.062 & 0.039 & 0.049 & 0.030 & 0.040 & 0.023 \\
spread & 40 & 0.073 & 0.028 & 0.045 & 0.027 & 0.029 & 0.017 \\
spread & 100 & 0.156 & 0.014 & 0.049 & 0.029 & 0.015 & 0.009 \\
\midrule
flip losses & 0 & 0.049 & 0.049 & 0.049 & 0.027 & 0.049 & 0.027 \\
flip losses & 10 & 0.354 & 0.328 & 0.137 & 0.449 & 0.046 & 0.028 \\
flip losses & 20 & 0.982 & 0.979 & 0.369 & 0.935 & 0.047 & 0.028 \\
flip losses & 40 & 1.000 & 1.000 & 0.947 & 1.000 & 0.050 & 0.030 \\
flip losses & 100 & 1.000 & 1.000 & 1.000 & 1.000 & 0.048 & 0.028 \\

\bottomrule
\end{tabular}
\end{table}

\subsection{E2: growth of the tolerance and time to certify}

Uncorrupted records are independent $\Bern(\frac12+\delta)$. The horizon is $60{,}000$ records for $\delta=0.02$, $20{,}000$ for $\delta=0.05$ and $8{,}000$ for $\delta=0.1$. Each $\delta$ has $2{,}000$ repetitions. Every repetition reached every tolerance $b$ in \cref{tab:tau-union} within the horizon. The column ``union'' is the mean of $\tau_b$ for the tolerance $\max\{-1,\max_k\lfloor\log(\alpha w_kK^{(k)}_t)/c_k\rfloor\}$, which takes the best bet at level $\alpha w_k$ instead of averaging the deflated wealths. Averaging the deflated wealths needs $2.5$--$22\%$ fewer records on average than the union variant. The upper bound of \eqref{eq:tau-upper} also applies to the union variant, since its $\tau_b$ is the minimum over $k$ of the $\sigma_k$ in the proof of \cref{thm:upper}. Five of its $48$ computed sample means exceed the bound, each by less than $0.3\%$ and by less than one Monte Carlo standard error. Three of them are among the $30$ shown in \cref{tab:tau-union}, and one of these three equals the bound after rounding.

\begin{table}[htbp]
\caption{E2: number of records $\tau_b$ until the tolerance first reaches $b$, on uncorrupted i.i.d.\ records with mean outcome $\frac12+\delta$ and no attack ($2{,}000$ repetitions). Mean $\tau_b$ and median: our tolerance of \eqref{eq:bhat-grid}. Union: mean $\tau_b$ of the union-over-bets variant defined above. Upper and lower: the bound of \eqref{eq:tau-upper} of \cref{thm:upper} and the bound of \cref{cor:lower-mean}. Cost: insertion (ins.) or replacement (rep.). Means close to the upper bound show that it is nearly attained.}
\label{tab:tau-union}
\centering\small
\begin{tabular}{llrrrrrr}
\toprule
$\delta$ & cost & $b$ & mean $\tau_b$ & union & upper \eqref{eq:tau-upper} & lower (Cor.~\ref{cor:lower-mean}) & median \\
\midrule
0.02 & ins. & 0 & 4,866 & 6,268 & 7,175 & 3,744 & 4,146 \\
0.02 & ins. & 10 & 5,603 & 6,905 & 7,721 & 3,744 & 4,968 \\
0.02 & ins. & 50 & 7,692 & 9,011 & 9,456 & 3,744 & 7,156 \\
0.02 & ins. & 100 & 9,908 & 11,296 & 11,475 & 3,744 & 9,373 \\
0.02 & ins. & 400 & 20,936 & 22,550 & 22,505 & 8,411 & 20,429 \\
0.02 & rep. & 0 & 4,866 & 6,268 & 7,175 & 3,744 & 4,146 \\
0.02 & rep. & 10 & 6,175 & 7,512 & 8,257 & 3,744 & 5,502 \\
0.02 & rep. & 50 & 9,990 & 11,372 & 11,539 & 3,744 & 9,462 \\
0.02 & rep. & 100 & 13,930 & 15,458 & 15,642 & 3,835 & 13,318 \\
0.02 & rep. & 400 & 34,212 & 36,080 & 36,089 & 16,803 & 33,888 \\
0.05 & ins. & 0 & 807 & 1,021 & 1,158 & 598 & 705 \\
0.05 & ins. & 10 & 1,041 & 1,241 & 1,329 & 598 & 958 \\
0.05 & ins. & 50 & 1,715 & 1,940 & 2,015 & 598 & 1,637 \\
0.05 & ins. & 100 & 2,462 & 2,707 & 2,734 & 773 & 2,397 \\
0.05 & ins. & 400 & 6,327 & 6,651 & 6,670 & 3,376 & 6,266 \\
0.05 & rep. & 0 & 807 & 1,021 & 1,158 & 598 & 705 \\
0.05 & rep. & 10 & 1,249 & 1,458 & 1,516 & 598 & 1,168 \\
0.05 & rep. & 50 & 2,509 & 2,755 & 2,779 & 724 & 2,444 \\
0.05 & rep. & 100 & 3,879 & 4,164 & 4,216 & 1,538 & 3,809 \\
0.05 & rep. & 400 & 11,239 & 11,640 & 11,630 & 6,732 & 11,148 \\
0.10 & ins. & 0 & 208 & 258 & 292 & 149 & 177 \\
0.10 & ins. & 10 & 303 & 357 & 374 & 149 & 279 \\
0.10 & ins. & 50 & 608 & 669 & 675 & 184 & 591 \\
0.10 & ins. & 100 & 944 & 1,011 & 1,016 & 390 & 927 \\
0.10 & ins. & 400 & 2,739 & 2,836 & 2,836 & 1,698 & 2,729 \\
0.10 & rep. & 0 & 208 & 258 & 292 & 149 & 177 \\
0.10 & rep. & 10 & 400 & 455 & 472 & 149 & 382 \\
0.10 & rep. & 50 & 978 & 1,048 & 1,056 & 364 & 956 \\
0.10 & rep. & 100 & 1,609 & 1,693 & 1,699 & 773 & 1,596 \\
0.10 & rep. & 400 & 5,086 & 5,216 & 5,219 & 3,376 & 5,082 \\
\bottomrule
\end{tabular}
\end{table}

\subsection{E3: Chatbot Arena}

The three parts of E3 are the tolerances per pair (E3a), a rigging replay (E3b) and a simultaneous matrix (E3c).

\textbf{Arena data.} The file \texttt{clean\_battle\_20240814\_public.json} (2.1\,GB) is public on the maintainers' storage bucket \citep{chatbotarena_data} and is used by the official Arena leaderboard notebook. For each vote it contains the two model names, the winner, a Unix timestamp, a hashed voter identifier and metadata. The winner is coded as \texttt{model\_a}, \texttt{model\_b}, \texttt{tie} or \texttt{tie (bothbad)}. The file has no prompts or responses. We keep the model names, winner, timestamp and anonymity flag (all $1{,}799{,}991$ votes are anonymous battles). No license is stated. \citet{chiang2024arena} describe an active rule that chooses the next pair from past votes; the rule in force during these votes is not published. The certificates do not depend on it. They need only that the pair of a battle is fixed before its vote is cast, which holds under any sampling rule, since the user votes after reading both answers. The official notebook also down-samples the votes on the $0.1\%$ most repeated prompts, which leaves $1{,}670{,}250$ votes. We keep all $1{,}799{,}991$, of which $35.7\%$ are ties. Both tie labels are coded $W=\frac12$.

\Cref{tab:arena-pairs} (E3a) gives the tolerances at the end of the data for the twenty most frequently compared pairs. Well-separated pairs are certified against up to about $2{,}000$ forged votes, while the closest pairs are not certified. The last column is $\Delta=n(2\bar W-1)$, wins minus losses as in \cref{prop:finite}, where $n$ is the number of votes on the pair and $\bar W$ their mean outcome. By \cref{prop:finite} the completed vote set itself withstands $\Delta-1$ forged or $\lceil\Delta/2\rceil-1$ flipped votes; the certificates are below these counts by $10\%$ and $12\%$ for the first pair, the price of a claim about the normal process. The lead tolerances are $3$ to $9\%$ below the edge tolerances on the pairs certified above $500$ votes, the price of the $36\%$ ties under the exponential wealth. On marginal pairs the gap is larger, and the pair that the edge certificate certifies at $11$ forged votes is not certified by the lead certificate.

\Cref{tab:rigging-full} (E3b) gives the rigging replay of \cref{tab:rigging} for each of its three pairs. These are the three pairs that \cref{tab:arena-pairs} does not certify, and the trailing model is promoted. The promoted models have mean outcomes $0.488$--$0.495$ over $6{,}089$--$6{,}470$ votes. Each replay resamples the pair's own votes with replacement, so the claim ``the promoted model is better'' is false. Rigging that makes the baselines certify this false ordering leaves the matching certificate valid (the insertion certificate for forged votes and the replacement certificate for flipped ones). In the replay, a \emph{burst} inserts $B$ forged wins for the promoted model at a uniformly random position. \emph{Spread} inserts a forged win before each record with probability $B/n$ until $B$ are inserted. \emph{Flip} replaces the promoted model's first $B$ losses by wins. Undeflated betting there compares the grid average, not deflated, with $1/\alpha$, and the per-step test uses the single bet $\lambda=0.2$. The run covers $B$ up to $800$. \Cref{tab:rigging,tab:rigging-full} show $B=0$, $100$ and $400$.

\Cref{fig:matrix} (E3c) shows the tolerances of the edge claim ``row has ever had an edge over column'', computed from the votes of the last $120$ days and made simultaneous by closed testing over all sets of pairs (\cref{prop:closed}). The family is all $16{,}512$ ordered pairs of the $129$ models in the data, which does not depend on the votes. Of these pairs, $3{,}926$ have battles in the window and $613$ are certified. The ten displayed models are those with the highest net win rate (wins minus losses, per battle) among the $25$ most played models in the window. Because the guarantee covers every pair, the choice of which pairs to display may depend on the data. Closed testing certified at most one more record than a Bonferroni split of $\alpha$ over the $16{,}512$ pairs. Simultaneity has a price. For GPT-4o over Claude-3.5-Sonnet ($5{,}868$ battles), the single-pair tolerance at the end of the data is $53$ forged or $26$ flipped votes, while the simultaneous tolerance is $-1$. For GPT-4o over GPT-4-Turbo the single-pair tolerance is $345$ forged votes and the simultaneous one $225$. Fixing the family in advance also has a price. With a family of only the $90$ ordered pairs among the displayed ten, the tolerance for GPT-4o over GPT-4-Turbo would be $280$ and three more cells would be certified. That family is chosen from the same votes, so \cref{prop:closed} does not cover it. The difference is about $\log(16{,}512/90)\approx5.2$ nats of evidence, which the closed test spends on the additional pairs instead of on tolerance.

\begin{table*}[htbp]
\caption{E3a: tolerances on the real, time-ordered Chatbot Arena votes, for the twenty most frequently compared pairs. Each row concerns the edge and lead claims for the leader against the other model. An entry is the number of corrupted votes that a claim withstands after all $n$ votes of the pair, and $-1$ means not certified. Edge and lead: the edge tolerance of \eqref{eq:bhat-grid} and the lead tolerance of \cref{thm:average}, in forged and in flipped votes. Wald: the votes that the fixed-time $95\%$ Wald interval absorbs (E10). $\bar W$: mean outcome, with ties counted as $\frac12$. Ties: share of tied votes. $\Delta$: wins minus losses. Models are named by their Arena identifiers.}
\label{tab:arena-pairs}
\centering\scriptsize\setlength{\tabcolsep}{3pt}
\begin{tabular}{llrrrrrrrrrr}
\toprule
 & & & & & \multicolumn{2}{c}{edge} & \multicolumn{2}{c}{lead} & \multicolumn{2}{c}{Wald} & \\
\cmidrule(lr){6-7}\cmidrule(lr){8-9}\cmidrule(lr){10-11}
leader & other & votes & $\bar W$ & ties & forged & flipped & forged & flipped & forged & flipped & $\Delta$ \\
\midrule
llama-3-70b-instruct & llama-3-8b-instruct & 14,858 & 0.577 & 0.48 & 2,056 & 1,009 & 1,940 & 957 & 2,139 & 1,056 & 2,287 \\
claude-3-opus-20240229 & llama-3-70b-instruct & 11,971 & 0.554 & 0.35 & 1,053 & 517 & 984 & 485 & 1,143 & 565 & 1,304 \\
claude-3-opus-20240229 & claude-3-sonnet-20240229 & 9,513 & 0.579 & 0.38 & 1,288 & 630 & 1,217 & 598 & 1,363 & 672 & 1,495 \\
gpt-4-turbo-2024-04-09 & llama-3-70b-instruct & 8,940 & 0.571 & 0.35 & 1,065 & 521 & 1,003 & 493 & 1,141 & 562 & 1,275 \\
llama-3-70b-instruct & reka-core-20240501 & 8,446 & 0.518 & 0.39 & 80 & 39 & 18 & 9 & 158 & 78 & 298 \\
gemini-1.5-pro-api-0409-preview & llama-3-70b-instruct & 7,633 & 0.556 & 0.35 & 658 & 321 & 602 & 296 & 730 & 359 & 858 \\
claude-3-opus-20240229 & claude-3-haiku-20240307 & 7,510 & 0.599 & 0.37 & 1,301 & 635 & 1,238 & 608 & 1,366 & 672 & 1,480 \\
gemini-1.5-pro-api-0409-preview & llama-3-8b-instruct & 7,292 & 0.623 & 0.34 & 1,621 & 791 & 1,563 & 767 & 1,683 & 828 & 1,794 \\
claude-3-opus-20240229 & gpt-4-0125-preview & 7,190 & 0.515 & 0.33 & 11 & 5 & $-1$ & $-1$ & 87 & 43 & 222 \\
claude-3-opus-20240229 & command-r-plus & 6,753 & 0.581 & 0.35 & 919 & 448 & 864 & 423 & 983 & 484 & 1,098 \\
claude-3-sonnet-20240229 & claude-3-haiku-20240307 & 6,751 & 0.530 & 0.42 & 214 & 104 & 152 & 74 & 281 & 138 & 400 \\
gpt-4-0314 & gpt-4-0613 & 6,607 & 0.536 & 0.42 & 299 & 145 & 237 & 116 & 364 & 179 & 480 \\
gpt-4-turbo-2024-04-09 & claude-3-opus-20240229 & 6,470 & 0.505 & 0.35 & $-1$ & $-1$ & $-1$ & $-1$ & $-1$ & $-1$ & 69 \\
gpt-4o-2024-05-13 & llama-3-70b-instruct & 6,470 & 0.615 & 0.32 & 1,314 & 640 & 1,263 & 619 & 1,375 & 676 & 1,483 \\
llama-3-70b-instruct & snowflake-arctic-instruct & 6,418 & 0.650 & 0.34 & 1,771 & 862 & 1,714 & 839 & 1,826 & 897 & 1,923 \\
gpt-4-0125-preview & gpt-4-0613 & 6,395 & 0.637 & 0.35 & 1,595 & 776 & 1,536 & 752 & 1,651 & 811 & 1,749 \\
gemini-1.5-pro-api-0409-preview & claude-3-opus-20240229 & 6,324 & 0.512 & 0.34 & $-1$ & $-1$ & $-1$ & $-1$ & 19 & 9 & 146 \\
llama-3-8b-instruct & snowflake-arctic-instruct & 6,206 & 0.569 & 0.39 & 685 & 334 & 627 & 307 & 747 & 367 & 855 \\
claude-3-opus-20240229 & llama-3-8b-instruct & 6,175 & 0.615 & 0.32 & 1,259 & 613 & 1,209 & 592 & 1,319 & 648 & 1,424 \\
llama-3-70b-instruct & claude-3-sonnet-20240229 & 6,089 & 0.509 & 0.37 & $-1$ & $-1$ & $-1$ & $-1$ & $-1$ & $-1$ & 106 \\
\bottomrule
\end{tabular}
\end{table*}

\begin{table*}[htbp]
\caption{E3b: the rigging replay of \cref{tab:rigging} for each of the three pairs, as the probability of a false statement at $B=0$ / $100$ / $400$ votes ($1{,}000$ replays per cell). Promoted: the slightly worse model that the attacker favors. Burst: forged votes at one random time. Spread: forged votes at random times. Flip: the promoted model's losses turned into wins. Wald: the Wald interval recomputed every $50$ votes. Undefl.: undeflated betting with the grid average. Per-step: the per-step robust test. Ins.\ and Rep.: our insertion and replacement certificates. An entry at or below $0.05$ means valid.}
\label{tab:rigging-full}
\centering\scriptsize\setlength{\tabcolsep}{4pt}
\begin{tabular}{llccccc}
\toprule
promoted vs opponent & attack & Wald & Undefl. & Per-step & Ins. & Rep. \\
\midrule
Claude-3-Opus vs GPT-4-Turbo & burst & .09 / .51 / 1.0 & .00 / .28 / 1.0 & .02 / .17 / .30 & .00 / .02 / .01 & .00 / .02 / .01 \\
 & spread & .08 / .27 / 1.0 & .00 / .04 / .99 & .01 / .01 / .00 & .00 / .01 / .01 & .00 / .00 / .00 \\
 & flip & .09 / 1.0 / 1.0 & .01 / 1.0 / 1.0 & .01 / 1.0 / 1.0 & .01 / 1.0 / 1.0 & .01 / .01 / .01 \\
Claude-3-Opus vs Gemini-1.5-Pro-0409 & burst & .05 / .30 / 1.0 & .00 / .16 / .98 & .01 / .10 / .24 & .00 / .01 / .01 & .00 / .01 / .01 \\
 & spread & .05 / .11 / .97 & .01 / .01 / .84 & .00 / .00 / .00 & .01 / .00 / .01 & .01 / .00 / .00 \\
 & flip & .04 / 1.0 / 1.0 & .00 / 1.0 / 1.0 & .00 / 1.0 / 1.0 & .00 / 1.0 / 1.0 & .00 / .01 / .00 \\
Claude-3-Sonnet vs Llama-3-70B & burst & .06 / .40 / 1.0 & .00 / .22 / 1.0 & .01 / .12 / .30 & .00 / .01 / .01 & .00 / .01 / .01 \\
 & spread & .05 / .17 / 1.0 & .01 / .03 / .95 & .01 / .00 / .00 & .01 / .01 / .01 & .01 / .00 / .00 \\
 & flip & .07 / 1.0 / 1.0 & .01 / 1.0 / 1.0 & .01 / 1.0 / 1.0 & .01 / 1.0 / 1.0 & .01 / .00 / .01 \\
\bottomrule
\end{tabular}
\end{table*}

\begin{figure*}[htbp]
\centering
\includegraphics[width=\textwidth]{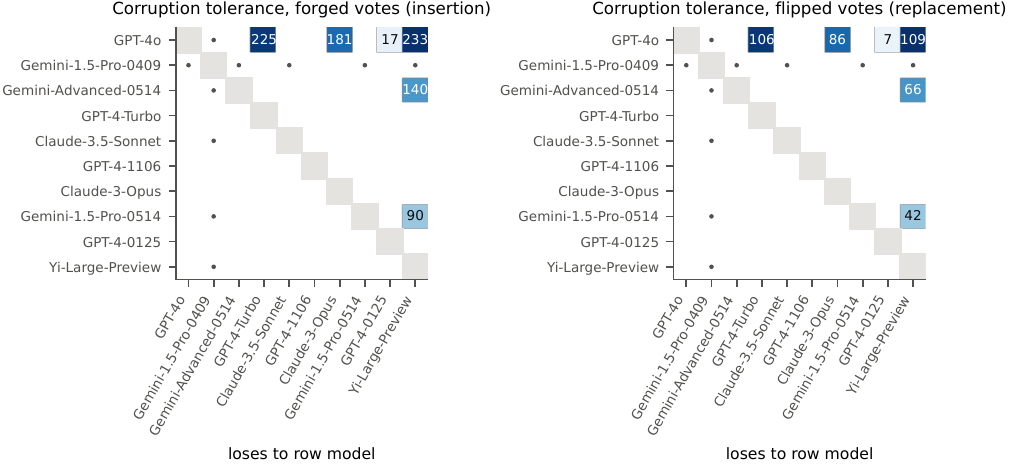}
\caption{E3c: corruption tolerance of the edge claim ``row model has ever had an edge over column model'' in the last $120$ days of the data, simultaneous over all $16{,}512$ ordered pairs of the $129$ models ($\alpha=0.05$). Shown are the ten models with the highest net win rate among the $25$ most played in the window. Empty cells: not certified. Dots: no battles. Gray diagonal: same model.}
\label{fig:matrix}
\end{figure*}

\subsection{E4: selection}

Each of $V$ variants has its own independent $\Bern(\frac12)$ records, with $n=5{,}000$ per variant and $4{,}000$ repetitions. The grid certificate is evaluated at $b=0$, and the corrected and uncorrected certificates use the same simulated records. The developer publishes the variant whose tolerance first becomes nonnegative. Publishing the best of $V$ variants that are no better than the opponent gives a false statement with probability $0.09$, $0.23$ and $0.52$ for $V=3$, $9$ and $27$ without correction. With the correction of \cref{prop:selection} these probabilities are $0.028$, $0.027$ and $0.025$.

\begin{table}[htbp]
\caption{E4: probability of a false statement when the developer publishes the best of $V$ variants, none of them better than the opponent ($4{,}000$ repetitions; standard errors in parentheses). Uncorrected: threshold $1/\alpha$. Corrected: threshold $V/\alpha$, as in \cref{prop:selection}. An entry at or below $\alpha=0.05$ means that the published certificate is valid.}
\label{tab:selection}
\centering\small
\begin{tabular}{rcc}
\toprule
$V$ & uncorrected & corrected \\
\midrule
1 & 0.029 (0.003) & 0.029 (0.003) \\
3 & 0.086 (0.004) & 0.028 (0.003) \\
9 & 0.233 (0.007) & 0.027 (0.003) \\
27 & 0.519 (0.008) & 0.025 (0.002) \\
\bottomrule
\end{tabular}
\end{table}

\subsection{E5: certificate persistence under a regime shift}

Uncorrupted records are binary. Model $i$ wins each of the first $1{,}000$ records with probability $0.6$ and each later record with probability $0.4$. There is no attack, and $2{,}000$ repetitions of $5{,}000$ records are run with the grid of \cref{sec:method}. The null hypothesis is false, so the claim is correct in the sense of \cref{sec:formulation}. The uncorrupted margin of \eqref{eq:delta} is positive before $t=2{,}000$ and at most zero from then on, so the lead claim is correct exactly before $t=2{,}000$. \Cref{fig:regime} shows the insertion and replacement tolerances and the empirical win rates. At the switch $99.85\%$ of the runs are certified, with median tolerances of $109$ forged and $52$ flipped records. After the switch the tolerance returns to $-1$ at a median of $420$ records (10--90\% range $240$ to $620$). One third of the runs are still certified $500$ records after the switch, and none $2{,}000$ records after it. The cumulative win rate crosses one half only at $t=2{,}000$. In $61\%$ of the runs the tolerance becomes nonnegative again for a few records after first reaching $-1$, when the average wealth $\Phi_t(0)$ crosses $1/\alpha$ once more. The renewed tolerance is then at most $14$ forged records. A second certificate starts at the prespecified time $t=1{,}001$ and sees only the later records. Its null hypothesis is true, and it certified nothing in any of the $2{,}000$ runs. The lead tolerance of \cref{thm:average}, computed on the same records with the same grid, is also shown. Its certified status agrees with that of the betting tolerance on all but $0.016\%$ of the (run, record) cells. Its median tolerances at the switch are $109$ forged and $51$ flipped records, and its return times are the same. In $0.05\%$ of the runs it is nonnegative at some $t\ge2{,}000$, where the lead claim is false, within the $5\%$ that \cref{thm:average} allows. The betting tolerance does so in the same runs. The experiment shows the meaning of a correct claim (\cref{app:testing}). It is not a validity failure.

\begin{figure*}[htbp]
\centering
\includegraphics[width=\textwidth]{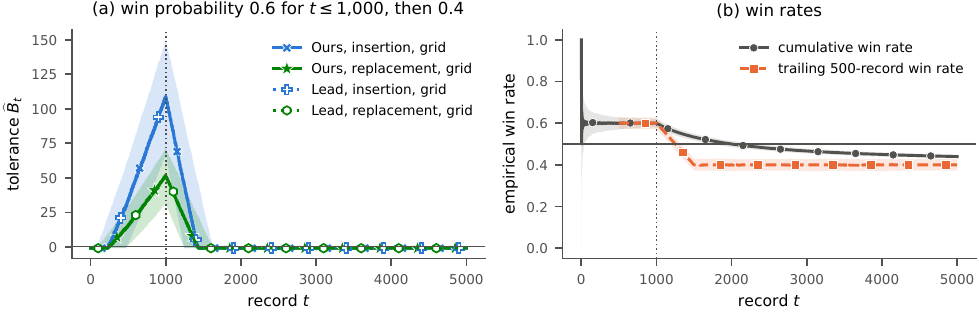}
\caption{E5: regime shift with no attack ($2{,}000$ repetitions, $\alpha=0.05$; median and $10$--$90\%$ band). (a) Betting tolerances of \eqref{eq:bhat-grid} with bands, and lead tolerances of \cref{thm:average} (dotted, medians only). (b) Cumulative and trailing empirical win rates. The vertical dotted line marks the switch.}
\label{fig:regime}
\end{figure*}

\subsection{E6: attacks on correct claims}

For the three most certified pairs of \cref{tab:arena-pairs} an attacker works against the leader on the real votes. It inserts $B$ losses for the leader in one burst, or it replaces $B$ of the leader's wins by losses. Each claim is certified against more than $1{,}600$ forged votes, so we treat it as correct. Then no statement can be false, and the question is how much tolerance the attacker destroys. By the identity of \eqref{eq:netwins} of \cref{app:testing}, the final tolerance does not depend on where the corrupted records are placed. \Cref{tab:suppress} reports the results. Each inserted loss removes about $1.05$ forged-vote units of tolerance and $0.51$ flipped-vote units. Each win replaced by a loss removes about $2.05$ forged-vote units and exactly $1.00$ flipped-vote unit. The attacker's budget that de-certifies a claim is therefore about the tolerance itself. For Llama-3-70B over Llama-3-8B it is $1{,}986$ inserted losses or $1{,}010$ replaced wins, against tolerances of $2{,}056$ and $1{,}009$. After a burst of $800$ inserted losses at the midpoint of the data, the tolerance of Llama-3-70B over Llama-3-8B never falls below zero. It is back at half of its pre-burst value after about $2{,}200$ uncorrupted votes. Llama-3-70B over Snowflake-Arctic is below zero for $191$ of the uncorrupted votes after the burst, and Gemini-1.5-Pro-0409 over Llama-3-8B for $108$. \Cref{prop:power} in \cref{app:proofs} accounts for these numbers. Per record, an inserted loss removes between $1$ and $(2+\lb_k)/(2-\lb_k)$ units of bet $k$'s forged-vote tolerance and between $\frac12$ and $1$ flipped-vote unit. A replaced win removes at least $2$ forged-vote units and exactly one flipped-vote unit. Recovery takes about $Bd/g$ uncorrupted records, with $d$ the loss per record and $g$ the growth rate of \eqref{eq:growth}.

\begin{table*}[htbp]
\caption{E6: attacks on correct claims, on the real Arena votes of the three most certified pairs. $n$: votes on the pair. $\Bhat_n$: tolerance after all $n$ votes, in forged and in flipped votes. De-certify by: the number of inserted losses, or of wins replaced by losses, that brings the forged-vote tolerance to $-1$. Loss of tolerance: forged-vote tolerance removed per inserted loss or per replaced win, at $B=100$. \Cref{prop:power} bounds these losses.}
\label{tab:suppress}
\centering\scriptsize\setlength{\tabcolsep}{3pt}
\begin{tabular}{lrrrrrrr}
\toprule
pair & $n$ & $\Bhat_n$ forged & $\Bhat_n$ flipped & de-certify by & de-certify by & loss of tolerance & loss of tolerance \\
 & & & & inserted losses & replaced wins & per inserted loss & per replaced win \\
\midrule
Llama-3-70B $>$ Llama-3-8B & 14,858 & 2,056 & 1,009 & 1,986 & 1,010 & 1.04 & 2.04 \\
Llama-3-70B $>$ Snowflake-Arctic & 6,418 & 1,771 & 862 & 1,689 & 863 & 1.06 & 2.06 \\
Gemini-1.5-Pro-0409 $>$ Llama-3-8B & 7,292 & 1,621 & 791 & 1,551 & 792 & 1.05 & 2.05 \\
\bottomrule
\end{tabular}
\end{table*}

\subsection{E7: a static benchmark, MMLU}

\textbf{Data.} Stanford HELM publishes per-instance outputs of its MMLU leaderboard (release v1.13.0) \citep{liang2023helm,helm_data,hendrycks2021mmlu}. We use the per-item exact-match results of four open-weight models on the $14{,}042$ questions of the released MMLU test set, a multiple-choice exam in 57 subjects. The models are Llama-3.1-8B-Instruct, Qwen2.5-7B-Instruct, Gemma-2-9B and Mistral-7B-Instruct-v0.3. The MMLU paper reports $14{,}079$ test questions. Each prompt contains five solved examples (5-shot). HELM ran Llama-3.1-8B-Instruct and Qwen2.5-7B-Instruct through the Turbo endpoints of Together AI, a cloud inference provider. These endpoints use 8-bit floating-point (FP8) quantization. Gemma-2-9B is the base model, since HELM's MMLU leaderboard does not include the instruction-tuned one. The four models' accuracies over the $14{,}042$ questions are $0.542$, $0.714$, $0.704$ and $0.589$. HELM's leaderboard averages over the $57$ subjects instead and reports $0.561$, $0.729$, $0.721$ and $0.599$. A record is one question. For the ordered pair $(i,j)$, $W=1$ if $i$ is correct and $j$ wrong, $W=0$ for the reverse, and $W=\frac12$ if both are correct or both wrong. A contaminated item is one whose record contamination altered, whatever the cause. The record order is a random permutation drawn before any outcome is looked at (seed $0$). The permutation only defines an online order and does not by itself make the records independent. For the population reading of a claim we adopt a modeling assumption, that the questions are an i.i.d.\ sample from a larger population of questions. A claim then compares the two models on that population. Without this assumption, \cref{prop:finite} still gives the tolerance of the released benchmark itself, reported below. By \eqref{eq:netwins} of \cref{app:testing}, the final tolerance does not depend on the order, and we confirmed that it was identical over $100$ random orders for every pair. The time to certify does depend on the order, and \cref{tab:mmlu} reports it for the seed-$0$ order and its median over the $100$ orders.

\textbf{Tolerances.} \Cref{tab:mmlu} gives the tolerances. The four well-separated pairs ($\Delta$ above $1{,}600$) are certified against $1{,}469$ to $2{,}264$ forged items and $714$ to $1{,}102$ contaminated items. The forged-item tolerances are $91$ to $94\%$ of $\Delta-1$, the number of forged items that the released benchmark itself withstands (\cref{prop:finite}), and the contaminated-item tolerances are $89$ to $91\%$ of $\lceil\Delta/2\rceil-1$, its count for replaced items. Contaminating one model moves a record by at most $\frac12$, so for contamination alone the deterministic count is $\Delta-1$ and these tolerances are conservative. The lead certificate gives $1{,}269$ to $2{,}081$ forged and $626$ to $1{,}026$ contaminated items on the four well-separated pairs, $7$ to $14\%$ below the edge certificate. Mistral-7B over Llama-3.1-8B is certified against $470$ forged items, $71\%$ of $\Delta$. Qwen2.5-7B against Gemma-2-9B is not certified in either direction. Their net difference is $140$ wins on $2{,}960$ untied items.

\textbf{Rigging replay.} The replay follows E3b. For each of the six pairs the promoted model is the one with the lower accuracy, so the claim ``the promoted model is better'' is false. Null streams resample the pair's own items with replacement. The attacker inserts $B$ forged items ($W=1$) in one burst at a random position, spreads them at rate $B/n$, or flips the first $B$ items that the promoted model lost ($W=0\to1$, the value-dependent attacker). The methods are those of \cref{sec:experiments}, with $1{,}000$ replays per cell. \Cref{fig:benchmark}(a,b) shows the mean over the six pairs. Forged bursts of up to $400$ items leave our certificates at or below $0.006$ on every pair. The Wald interval and undeflated betting reach $0.36$ and $0.33$ on average over the pairs, and $0.999$ and $0.995$ on the worst pair. Flipping the lost items breaks the Wald interval from $5$ items ($0.35$, and $0.92$ at $10$) and the per-step robust test at $25$ ($0.86$). The insertion certificate fails from $10$ items ($0.08$) and reaches $0.44$ at $25$ and $0.92$ at $50$, as \cref{thm:sharp}(b) predicts. The replacement certificate stays at or below $0.008$ on every pair and budget.

\begin{table*}[htbp]
\caption{E7: tolerances on HELM MMLU ($14{,}042$ questions) for the six pairs, oriented by accuracy. Each row concerns the edge and lead claims for the first model against the second. Edge and lead: the edge tolerance of \eqref{eq:bhat-grid} and the lead tolerance of \cref{thm:average} after all questions, in forged and in contaminated items. Wald: the items that the fixed-time $95\%$ Wald interval absorbs (E10). $\Delta$: wins minus losses. $\tau_0$: questions until the claim is first certified, in the seed-$0$ order (seed 0) and as a median over $100$ random orders (median). --: not certified at the end, or never certified ($\tau_0$, in the seed-$0$ order or in most of the $100$ orders).}
\label{tab:mmlu}
\centering\scriptsize\setlength{\tabcolsep}{3pt}
\begin{tabular}{lrrrrrrrrrrrr}
\toprule
 & & & & & \multicolumn{2}{c}{edge} & \multicolumn{2}{c}{lead} & \multicolumn{2}{c}{Wald} & \multicolumn{2}{c}{$\tau_0$} \\
\cmidrule(lr){6-7}\cmidrule(lr){8-9}\cmidrule(lr){10-11}\cmidrule(lr){12-13}
pair & wins & losses & ties & $\Delta$ & forged & contam. & forged & contam. & forged & contam. & seed 0 & median \\
\midrule
Qwen2.5-7B $>$ Gemma-2-9B & 1,550 & 1,410 & 11,082 & 140 & -- & -- & -- & -- & 33 & 16 & -- & -- \\
Qwen2.5-7B $>$ Mistral-7B & 2,841 & 1,087 & 10,114 & 1,754 & 1,607 & 781 & 1,411 & 696 & 1,660 & 815 & 91 & 141 \\
Qwen2.5-7B $>$ Llama-3.1-8B & 3,571 & 1,156 & 9,315 & 2,415 & 2,264 & 1,102 & 2,081 & 1,026 & 2,318 & 1,140 & 61 & 95.5 \\
Gemma-2-9B $>$ Mistral-7B & 2,672 & 1,058 & 10,312 & 1,614 & 1,469 & 714 & 1,269 & 626 & 1,521 & 747 & 204 & 160.5 \\
Gemma-2-9B $>$ Llama-3.1-8B & 3,532 & 1,257 & 9,253 & 2,275 & 2,117 & 1,032 & 1,939 & 956 & 2,174 & 1,069 & 134 & 101.5 \\
Mistral-7B $>$ Llama-3.1-8B & 2,531 & 1,870 & 9,641 & 661 & 470 & 229 & 303 & 149 & 539 & 265 & 1,281 & 1,195 \\
\bottomrule
\end{tabular}
\end{table*}

\begin{figure*}[htbp]
\centering
\includegraphics[width=\textwidth]{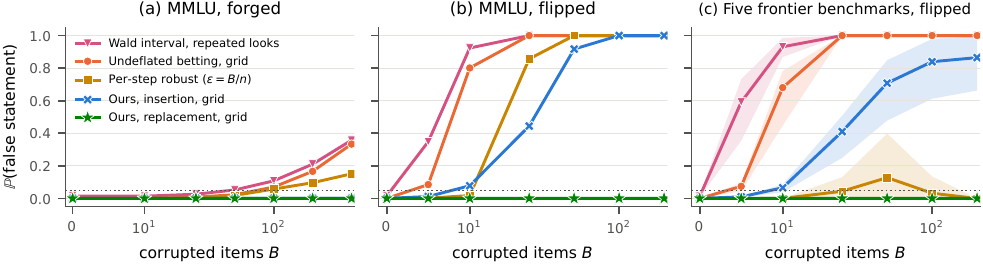}
\caption{E7 and E8: probability of a false statement on resampled benchmark items ($1{,}000$ replays per cell, $\alpha=0.05$ dotted). (a,b) MMLU, mean over the six pairs of four open-weight models with 7 to 9 billion parameters. Forged items are inserted in one burst at a random position (a). Flipped items are the first $B$ items the promoted model lost, turned into wins (b). (c) Flipped items on the five frontier benchmarks of E8. Each line averages the five benchmarks, each benchmark a mean over its pairs. Each band is the range over the five benchmarks.}
\label{fig:benchmark}
\end{figure*}

\subsection{E8: frontier models on HELM and SWE-bench Verified}

\textbf{HELM data.} HELM's capabilities leaderboard (release v1.15.0) evaluates $68$ models \citep{helm_data}. We use fifteen frontier models of 2025 from seven developers, chosen by developer and release from HELM's model list. They are Claude Opus 4, Claude Sonnet 4, Claude Sonnet 4.5, o3, GPT-4.1, GPT-5, GPT-5.1, gpt-oss-120b, Gemini 2.5 Pro (03-25 preview), Gemini 3 Pro (preview), Grok 4, DeepSeek-R1-0528, Kimi K2, Qwen3-235B and Qwen3-235B-2507. The exact HELM identifiers are in the repository. HELM ran the Claude models without extended thinking, their optional reasoning mode, and both Qwen3 models with FP8 quantization. HELM ran GPT-5.1 with the provider's default setting, which turns reasoning off. Its accuracy on MMLU-Pro and GPQA ($0.579$ and $0.442$) is therefore far below that of GPT-5 ($0.863$ and $0.791$). The benchmarks are a sample of $1{,}000$ questions of MMLU-Pro, a harder version of MMLU \citep{wang2024mmlupro}, and $446$ of the $448$ graduate-level science questions of GPQA main (Graduate-Level Google-Proof Q\&A) \citep{rein2023gpqa}. They also include the $541$ prompts of IFEval (Instruction-Following Evaluation), with instructions whose compliance is checked automatically \citep{zhou2023ifeval}, and a sample of $1{,}000$ olympiad mathematics problems of Omni-MATH \citep{gao2025omnimath}. On MMLU-Pro and GPQA the prompt has no solved examples and asks for step-by-step reasoning (zero-shot chain of thought). HELM publishes per-instance outputs. On MMLU-Pro and GPQA it records whether each answer is correct. On IFEval it records the fraction of a prompt's instructions that were followed, and we count a prompt as right only when all were followed (prompt-level strict accuracy). On Omni-MATH it records the share of language-model judges that accept the answer ($0$, $\frac13$, $\frac12$, $\frac23$ or $1$), and $4.2\%$ of these scores are fractional. The record is then $W=\frac12(1+s_i-s_j)$ for the scores $s_i$ and $s_j$ of the two models. This equals the record of E7 when both scores are $0$ or $1$, and \cref{thm:validity} allows any $W\in[0,1]$.

\begin{figure*}[htbp]
\centering
\includegraphics[width=\textwidth]{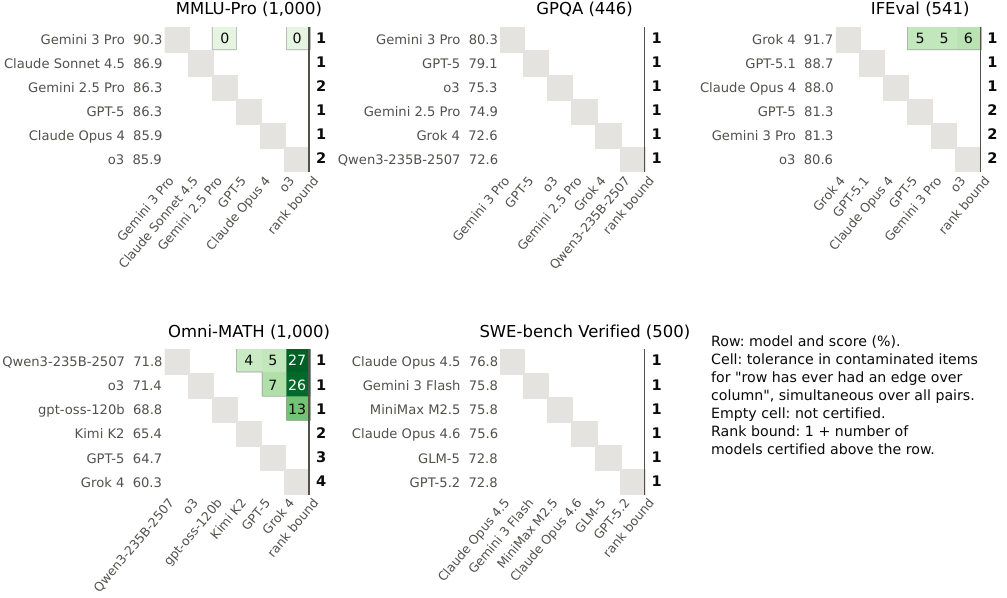}
\caption{E8: the six highest-scoring models of each frontier benchmark, with their scores in percent. A cell gives the simultaneous tolerance of the edge claim ``row model has ever had an edge over column model'' in contaminated items, by closed testing over all ordered pairs of the benchmark's models. An empty cell means not certified, and $0$ means certified against no contaminated item. The last column is the certified rank bound, one plus the number of models certified above the row model.}
\label{fig:frontier}
\end{figure*}

\textbf{SWE-bench data.} A SWE-bench task asks a model to resolve a real issue of a GitHub repository, and it counts as resolved when the repository's tests pass \citep{jimenez2024swebench}. SWE-bench Verified is a subset of $500$ tasks screened by software developers \citep{openai2024swebenchverified}. Its maintainers publish runs of one simple agent with many models and the outcome of every task \citep{swebench_experiments}. The agent is mini-SWE-agent, a minimal agent of the SWE-agent project \citep{yang2024sweagent}, in which the model acts only through shell commands. We use the runs of its version 2 with eleven models of 2025 and 2026, all run with version 2.0.0. They are Claude Opus 4.5 and 4.6, Claude Sonnet 4.5, Claude Haiku 4.5, GPT-5.2, GPT-5 mini, Gemini 3 Flash, GLM-5, Kimi K2.5, DeepSeek V3.2 and MiniMax M2.5. A run is kept only if its per-task file has all $500$ tasks and agrees with its reported score. Three runs fail this check. The file of Gemini 3 Pro marks every task unresolved although $69.6\%$ is reported. GPT-5.2-Codex has no per-task file, and the file of Gemini 3.5 Flash covers $441$ tasks. On all five benchmarks the records are those of E7, with the graded record above on Omni-MATH, and the order and the population assumption are those of E7. As in E7, the final tolerance was identical over $100$ random orders for every ordered pair with $\Delta>0$. The model sets are fixed by name and release, not by the records, so the family of all ordered pairs of a benchmark's models does not depend on the data. Each item is one time step at which every pair receives its record, and a contaminated item counts once however many pairs it changes. \Cref{cor:items} in \cref{app:proofs} extends \cref{prop:closed} to this setting, with the replacement cost also against a developer who knows every item in advance, and \cref{thm:validity} applies to each pair directly. The five benchmarks were released in 2023 and 2024. The HELM results come from release v1.15.0 of November 2025 and the SWE-bench Verified runs from February 2026.

\textbf{Tolerances.} \Cref{tab:frontier} summarizes the tolerances. On the HELM benchmarks, $55$ to $88$ of the $210$ ordered pairs are certified pair by pair, and $36$ to $73$ simultaneously by closed testing. On SWE-bench Verified the counts are $25$ and $14$ of $110$, and all $14$ involve one of the two weakest models. On every benchmark neither of the two highest-scoring models is certified above the other simultaneously. Within the top five, the largest simultaneous tolerance is $17$ forged or $7$ contaminated items (o3 over GPT-5 on Omni-MATH). On MMLU-Pro and GPQA the largest simultaneous tolerances are against GPT-5.1 without reasoning, up to $279$ forged items. No model among the six best on SWE-bench Verified ($72.8$ to $76.8\%$) is certified above another even pair by pair. \Cref{fig:frontier} shows the simultaneous tolerances among the six highest-scoring models of each benchmark. Arena reports an approximate rank for each model, one plus the number of models whose confidence interval lies above its own \citep{chiang2024arena}. \Cref{fig:frontier} also gives the certified analogue, one plus the number of models certified above a model. These rank bounds hold for all models at once with probability at least $1-\alpha$ when no item is corrupted (\cref{cor:items}). On GPQA and SWE-bench Verified the bound is $1$ for each of the six models.

\textbf{Rigging replay.} The replay of E7 runs on every pair with unequal scores, $53$ to $105$ per benchmark. Five flipped items break the Wald interval with probability $0.35$ to $0.73$ on average over the pairs, and ten with probability $0.87$ to $0.99$. On average over the pairs, the insertion certificate fails from $10$ to $25$ flipped items, as \cref{thm:sharp}(b) predicts. Under flips the replacement certificate stays at or below $0.015$ on every pair and budget (means in \cref{fig:benchmark}(c)). Twenty-five forged items in one burst push the Wald interval to $0.15$ to $0.48$ on average. Under bursts of up to $400$ items, our certificates stay at or below $0.021$ on every pair.

\begin{table*}[htbp]
\caption{E8: summary per frontier benchmark. Items and models: the benchmark's size. Certified pairs: ordered pairs certified at $\alpha=0.05$, pair by pair or simultaneously by closed testing over all ordered pairs of the benchmark's models. Top two: simultaneous tolerance of the leader over the runner-up. Largest tolerance: the largest simultaneous tolerance among the pairs within the top five, and among all pairs. Tolerances are forged~/~contaminated items, and -- means not certified. The top two and top five are the first rows of \cref{fig:frontier}. On MMLU-Pro, Claude Opus 4 and o3 tie for fifth at $85.9\%$, and with o3 as fifth the entry is 2~/~0.}
\label{tab:frontier}
\centering\scriptsize
\resizebox{\textwidth}{!}{\begin{tabular}{lrrrrccc}
\toprule
benchmark & items & models & certified pairs & certified pairs & top two & largest tolerance & largest tolerance \\
 & & & (pair by pair) & (simultaneously) & & in the top five & overall \\
\midrule
MMLU-Pro & 1,000 & 15 & 60 of 210 & 42 & -- & 1 / 0 & 279 / 117 \\
GPQA & 446 & 15 & 55 of 210 & 36 & -- & -- & 118 / 47 \\
IFEval & 541 & 15 & 67 of 210 & 38 & -- & 15 / 5 & 88 / 33 \\
Omni-MATH & 1,000 & 15 & 88 of 210 & 73 & -- & 17 / 7 & 272 / 111 \\
SWE-bench Verified & 500 & 11 & 25 of 110 & 14 & -- & -- & 69 / 26 \\
\bottomrule
\end{tabular}}
\end{table*}

\subsection{E9: the lead tolerance on i.i.d.\ records}

Uncorrupted records are i.i.d.\ with mean outcome $\frac12+\delta$, and there is no attack. A record is a win with probability $\pi/2+\delta$, a loss with probability $\pi/2-\delta$ and a tie otherwise, with $\pi=1$ (binary records) or $\pi=0.7$ ($30\%$ ties, close to the $36\%$ on Chatbot Arena). For $\delta\in\{0.05,0.1\}$, $2{,}000$ repetitions of $5{,}000$ records are drawn once. The edge tolerance of \eqref{eq:bhat-grid} and the lead tolerance of \cref{thm:average} are computed on the same records with both costs and the grid of \cref{sec:method}. \Cref{tab:average} reports $\Bhat_t/t$ at $t=5{,}000$ and the limit rates of \cref{thm:upper} and \cref{thm:average}(d). For a binary outcome, $\log(1+\gamma(W_s-\frac12))=\gamma(W_s-\frac12)-\gamma^2/8+O(\gamma^3)$, so on binary records the two wealths differ by $O(\gamma^3t)$. Accordingly the two tolerances differ in certified status on $0.01\%$ of the (repetition, record) cells, and their medians at $t=5{,}000$ agree to four decimals. A tie leaves $K^{(k)}_t$ unchanged and multiplies $Z^{(\gamma)}_t$ by $e^{-\gamma^2/8}$. With $30\%$ ties the lead tolerance is therefore $4$ to $11\%$ lower at $t=5{,}000$, while the limit rates differ by less than $1\%$. At $\delta=0.1$ the limit rates are $0.198$ to $0.199$ against the optimum $0.2$ for insertion and $0.099$ against $0.1$ for replacement.

\begin{table*}[htbp]
\caption{E9: tolerance per record $\Bhat_t/t$ at $t=5{,}000$ on i.i.d.\ records with mean outcome $\frac12+\delta$ and no attack ($2{,}000$ repetitions; median and $10$--$90\%$ range). Ties: share of tied records. Edge: the tolerance of \eqref{eq:bhat-grid}. Lead: the lead tolerance of \cref{thm:average}. Limit rates: the almost-sure limits of $\Bhat_t/t$ given by \cref{thm:upper} and \cref{thm:average}(d). The optimal rates are $2\delta$ for insertion and $\delta$ for replacement.}
\label{tab:average}
\centering\small
\begin{tabular}{llcccc}
\toprule
$\delta$ & ties & cost & edge \eqref{eq:bhat-grid} & lead \eqref{eq:zwealth} & limit rates (edge, lead) \\
\midrule
0.05 & 30\% & ins. & 0.066 (0.051--0.081) & 0.059 (0.043--0.073) & 0.098, 0.098 \\
0.05 & 30\% & rep. & 0.032 (0.025--0.039) & 0.029 (0.021--0.036) & 0.049, 0.049 \\
0.05 & 0\% & ins. & 0.058 (0.039--0.077) & 0.058 (0.039--0.077) & 0.098, 0.098 \\
0.05 & 0\% & rep. & 0.028 (0.019--0.037) & 0.028 (0.019--0.037) & 0.049, 0.049 \\
0.10 & 30\% & ins. & 0.169 (0.153--0.184) & 0.160 (0.145--0.176) & 0.199, 0.198 \\
0.10 & 30\% & rep. & 0.082 (0.074--0.089) & 0.078 (0.071--0.086) & 0.099, 0.099 \\
0.10 & 0\% & ins. & 0.160 (0.141--0.178) & 0.160 (0.141--0.178) & 0.198, 0.198 \\
0.10 & 0\% & rep. & 0.078 (0.069--0.087) & 0.078 (0.069--0.087) & 0.099, 0.099 \\
\bottomrule
\end{tabular}
\end{table*}

\subsection{E10: fixed-time breakdown counts}

For each pair of E3a and E7, the columns ``Wald'' of \cref{tab:arena-pairs,tab:mmlu} give a breakdown count. It is the largest number of forged or flipped records that the two-sided $95\%$ Wald interval for the mean outcome, computed once on all records, absorbs while still excluding one half. This is what a sensitivity study of the final standings computes \citep{huang2025voting,oyarhoseini2026perturbation}. It is a fixed-time quantity, and \cref{tab:rigging} and \cref{fig:benchmark}(a,b) show the same interval certifying false orderings under adaptive timing. On the Arena pairs certified above $500$ votes, the edge tolerance in forged votes is $90$ to $97\%$ of the Wald count (median $95\%$) and the lead tolerance $82$ to $94\%$. On the four well-separated MMLU pairs the edge tolerance is $97$ to $98\%$ and the lead tolerance $83$ to $90\%$. The Hoeffding interval, with half-width $\sqrt{\log(2/\alpha)/(2n)}$, breaks down below the edge tolerance on every certified pair. Two pairs with a positive Wald count are not certified: Gemini-1.5-Pro-0409 over Claude-3-Opus ($19$ forged votes) and Qwen2.5-7B over Gemma-2-9B ($33$ items). The anytime guarantee thus costs a few percent of the breakdown count of a fixed-time interval that attacks can time.

\FloatBarrier

\section{Additional Discussion}\label{app:discussion}

\subsection{The guarantee in the language of hypothesis testing}\label{app:testing}

\Cref{def:budget} controls the type I error of a family of sequential tests, one for each possible number of corrupted records. For each integer $b\ge0$ the null hypothesis at time $t$ is that $H_0^{ij}$ holds and at most $b$ of the first $t$ records were corrupted, and the platform rejects it when $\Bhat_t\ge b$. Thus $\Bhat_t$ is the largest number of corrupted records for which the null hypothesis is rejected at time $t$. The second part of each null hypothesis, $N_t\le b$, is a random event and not a property of the joint distribution, so these are hypotheses in an extended sense. The formal guarantee is \eqref{eq:validity}.

A type I error is the rejection of a null hypothesis that is true. If $H_0^{ij}$ fails, no null hypothesis in the family is true, and no type I error can occur. If $H_0^{ij}$ holds, the null hypotheses that are true at time $t$ are those with $b\ge N_t$, and one of them is rejected exactly when $\Bhat_t\ge N_t$. Hence \eqref{eq:validity} states that, when $H_0^{ij}$ holds, the probability of a type I error at any time $t$ and for any $b$ is at most $\alpha$. Equivalently, the probability that any published statement is ever false is at most $\alpha$. The platform does not observe $N_t$. It therefore publishes the conditional statement ``$i$ has ever had an edge over $j$, unless more than $\Bhat_t$ of the first $t$ records were corrupted'' and not a rejection of $H_0^{ij}$ itself.

For the tolerance of \eqref{eq:bhat-grid}, the quantity $1/\Phi_t(b)$ plays the role of a p-value for the null hypothesis that $H_0^{ij}$ holds and at most $b$ records were corrupted: $\Bhat_t\ge b$ exactly when $1/\Phi_t(b)\le\alpha$. More precisely, $\Phi_t(b)$ does not depend on $\alpha$ and \cref{thm:validity} holds at every level. So for every $\alpha'\in(0,1)$ and every attacker in a class that \cref{thm:validity} covers with the chosen cost, when $H_0^{ij}$ holds, $\P\big(\exists\,t,\ \exists\,b\ge N_t:\ 1/\Phi_t(b)\le \alpha'\big)\le \alpha'$. Indeed, if $b\ge N_t$ and $\Phi_t(b)\ge1/\alpha'$, then $\Phi_t(N_t)\ge\Phi_t(b)\ge1/\alpha'$ because $\Phi_t$ is decreasing. In that case the tolerance computed at level $\alpha'$ is at least $N_t$, an event of probability at most $\alpha'$ over all $t$.

\paragraph{What a correct edge claim says.} Suppose that there are no ties and that $i$ wins each of the first $1{,}000$ records with probability $0.6$ and each later record with probability $0.4$. Then $H_0^{ij}$ is false and the edge claim is correct in the sense of \cref{sec:formulation}. The usual reading of a leaderboard claim is the stable case of \cref{sec:formulation}, in which $\E[X_t\mid\cG_{t-1}]$ is the same constant $\mu$ at every non-inserted record and the claim is correct exactly when $\mu>\frac12$. \Cref{thm:upper,thm:lower} assume i.i.d.\ records, a special case of the stable case. The simulations of \cref{sec:experiments} also use i.i.d.\ uncorrupted values. \Cref{thm:validity} needs neither. In the example the uncorrupted margin of \eqref{eq:delta} is $\Delta_t=0.1\min(t,1{,}000)-0.1(t-1{,}000)^+$, positive before $t=2{,}000$ and at most zero from then on. So the lead claim of \cref{sec:formulation} is correct exactly before $t=2{,}000$, and \cref{thm:average} bounds by $\alpha$ the probability that the lead tolerance is nonnegative at any later time (E5 in \cref{app:experiments}). Unlike $H_0^{ij}$, the event $\{\Delta_t\le0\}$ is random and may hold at some times and not at others. The lead certificate reads ``$i$ leads $j$, unless more than $\Bhat_t$ of the first $t$ records were corrupted''.

The tolerance itself does respond to later records. Let $n^{\rm win}_t$ and $n^{\rm loss}_t$ be the numbers of wins and losses of $i$ among the first $t$ records. For outcomes in $\{0,\frac12,1\}$,
\begin{equation}\label{eq:netwins}
\log K^{(k)}_t=\tfrac12\,\crep(\lb_k)\,\big(n^{\rm win}_t-n^{\rm loss}_t\big)-\tfrac12\,\kappa(\lb_k)\,\big(n^{\rm win}_t+n^{\rm loss}_t\big),
\end{equation}
with $\kappa$ as in \eqref{eq:growth}, because $\log(1+\lb/2)=(\crep(\lb)-\kappa(\lb))/2$ and $\log(1-\lb/2)=-(\crep(\lb)+\kappa(\lb))/2$ and a tie contributes $\log1=0$. Hence $\Phi_t(b)$ and $\Bhat_t$ depend on the records only through the numbers of wins and losses so far. In the example above every $K^{(k)}_t$ tends to zero and $\Bhat_t$ returns to $-1$. In simulation it does so at a median of $420$ records after the switch (E5 in \cref{app:experiments}). The platform may also start the count at a time fixed in advance. \Cref{def:budget} and \cref{thm:validity}(a,b) then apply to the records from that time on and the claim concerns those records only. E3c in \cref{app:experiments} does this for the last $120$ days of the Arena data. The lead claim is certified by \cref{thm:average} with the exponential wealth of \cref{sec:expwealth}.
\subsection{Why this is the right formulation}\label{app:why}

Attacks and defenses are both counted in records, from vote-rigging and sensitivity studies of Arena \citep{huang2025voting,min2025rigging,huang2025dropping,oyarhoseini2026perturbation} to certified defenses in machine learning, sequential decision making, distributed systems and robust statistics (\cref{app:fields}).

Anytime validity and an adaptive attacker are required by how leaderboards are used. A statement valid only at one planned sample size loses its guarantee under repeated checking \citep{howard2021,khosravi2026rank}. An attacker who sees the standings can wait and strike in one burst, which a per-step bound on corruption cannot describe (\cref{prop:perstep}).

\subsection{How adjacent fields count corruption}\label{app:fields}

Adjacent fields measure attacks and defenses by the amount of corruption, usually a count of corrupted records or participants. Vote-rigging and sensitivity studies of Arena report the number of votes that move a model or change the top model \citep{huang2025voting,min2025rigging,huang2025dropping,oyarhoseini2026perturbation}. Certified defenses against data poisoning (altering the training data to change a trained model) approximately bound the damage that a given number of poisoned points can do \citep{steinhardt2017certified}. Others certify predictions against a number of altered points \citep{levine2021dpa,jia2021bagging}. \citet{bose2025dynamic} certify an online learner when each record is poisoned with a fixed probability and an adaptive attacker picks its value, a per-step model as in \cref{prop:perstep}. Corruption-robust bandits, sequential decision rules whose rewards an adversary may alter, pay a price linear in the total corruption \citep{lykouris2018,gupta2019}. Byzantine-robust systems, which tolerate participants who behave arbitrarily, count the corrupted participants \citep{lamport1982,blanchard2017}. In robust statistics, \emph{strong contamination} lets an adversary inspect the sample and replace a fraction of it, whereas \emph{Huber contamination} mixes in outliers before sampling \citep{huber1964,diakonikolas2023book}. Our classes $\classV$ and $\classP$ are their online counterparts, with the count $N_t$ in place of the fraction. An attacker in $\classO$ sees all uncorrupted values before acting, as the adversary of strong contamination does.

\textbf{Leaderboard inference assumes genuine votes.} \citet{chiang2024arena} flag anomalous Arena users with a heuristic sequential test and find that versions built on nonnegative supermartingales and e-values lacked power. Arena's confidence intervals for Bradley--Terry ratings \citep{bradley1952,chiang2024arena} are valid only at a fixed sample size. So are rank confidence sets \citep[sets of ranks containing a model's true rank with a stated probability;][]{mogstad2024} and inference on selected winners \citep{andrews2024winners}. Robust rank aggregation from corrupted pairwise comparisons \citep{agarwal2020rank} bounds the estimation error, not the error of published claims. Anytime-valid rank sets \citep{khosravi2026rank} assume uncorrupted data. They combine pairwise e-processes by \emph{closed testing} \citep{marcus1976,fischer2024closed}, which tests intersections of hypotheses to control the probability of any false rejection.

\textbf{Attack studies and certified defenses give no sequential certificate.} The attack and sensitivity studies of \cref{sec:intro,app:why} \citep[see also][]{zhao2024challenges} show that attacks are real and cheap but give no guarantee. \citet{ma2024sequential} fabricate pairwise comparisons online, without knowledge of future data, to manipulate rank aggregation; our certificates bound how many such fabrications a published claim withstands. Certified defenses against data poisoning certify one classifier or bound a long-run average loss, not a sequence of published statements \citep{steinhardt2017certified,levine2021dpa,bose2025dynamic}. Corruption-robust bandits bound regret, the reward lost to the best fixed choice \citep{lykouris2018,gupta2019}, or the error of one final choice \citep{zhong2021psr,altschuler2019contaminated}, not errors at every time.

\subsection{Limitations and extensions}\label{app:limitations}

The tolerance counts records, not attackers. One attacker can control many cheap accounts, so the tolerance should be read alongside the cost of an account. The null hypothesis treats each pair separately and ignores the transitivity (if $i$ is better than $j$ and $j$ than $j'$, then $i$ is better than $j'$) that Bradley--Terry models exploit. Our replays simulate attacks, since real attack logs are not public. Validity against offline attackers when the bets change with the data remains open. So do withholding attacks, which suppress uncorrupted records. For attacks on correct claims, \cref{prop:power} bounds the tolerance lost per corrupted record and the time to recover it. Empirically, each inserted loss removes about one forged-vote unit of tolerance. Each win replaced by a loss removes about two, which is one flipped-vote unit. So de-certifying a claim takes about as many inserted losses as the forged-vote tolerance, or as many replaced wins as the flipped-vote one (E6 in \cref{app:experiments}).

\end{document}